\let\llncs@addcontentsline\addcontentsline
\patchcmd{\maketitle}{\addcontentsline}{\llncs@addcontentsline}{}{}
\patchcmd{\maketitle}{\addcontentsline}{\llncs@addcontentsline}{}{}
\patchcmd{\maketitle}{\addcontentsline}{\llncs@addcontentsline}{}{}
\newcommand{\todo}[1]{\marginpar{\textbf{TODO\footnotemark}}\@latex@warning{TODO: #1}\footnotetext{ #1}}
\newcommand*\circled[1]{\tikz[baseline=(char.base)]{
		\node[shape=circle,draw,inner sep=0.4pt] (char) {#1};}}
\newcommand{\Prop}{\textsf{Prop}}
\newcommand{\Formulae}{\textsf{Fm}}
\newcommand{\lfalse}{\bot}
\newcommand{\lneg}{\neg}
\newcommand{\propax}{\ensuremath{(\textsf{Taut})}}
\newcommand{\lrule}[2]{\displaystyle{\frac{#1}{#2}}}
\newcommand{\mprule}{\ensuremath{(\textsf{MP})}}
\newcommand{\limplies}{\rightarrow}
\newcommand{\liff}{\leftrightarrow}
\newcommand{\lset}[1]{\{ #1 \}}
\newcommand{\lnext}{\bigcirc}
\newcommand{\lalways}{\Box}
\newcommand{\leventually}{\Diamond}
\newcommand{\lonce}{\diamondminus}
\newcommand{\lsofar}{\boxminus}
\newcommand{\luntil}{{\,\mathcal{U}\,}}
\newcommand{\lsince}{{\,\mathcal{S}\,}}
\newcommand{\ltime}{{\,\texttt{time}\,}}
\newcommand{\ltrue}{{\,\texttt{true}\,}}
\newcommand{\lalwaysPastFuture}{\boxdot}
\newcommand{\lobligatory}{{\,\mathcal{O}}}
\newcommand{\lpermissible}{{\,\mathcal{P}}}
\newcommand{\lepistemic}{{\,\mathcal{E}}}
\newcommand{\lwprevious}{\circled{\textit{w}}}
\newcommand{\lsprevious}{\circled{\textit{s}}}
\newcommand{\lunless}{{\,\mathcal{W}\,}}
\newcommand{\kax}{\ensuremath{\textsf{-k}}}
\newcommand{\nextkax}{\ensuremath{(\lnext\kax)}}
\newcommand{\alwayskax}{\ensuremath{(\lalways\kax)}}
\newcommand{\funax}{\ensuremath{(\textsf{fun})}}
\newcommand{\indax}{\ensuremath{(\textsf{ind})}}
\newcommand{\uoneax}{\ensuremath{(\luntil\textsf{1})}}
\newcommand{\utwoax}{\ensuremath{(\luntil\textsf{2})}}
\newcommand{\necrule}{\ensuremath{\textsf{-nec}}}
\newcommand{\nextnecrule}{\ensuremath{(\lnext\necrule)}}
\newcommand{\alwaysnecrule}{\ensuremath{(\lalways\necrule)}}
\newcommand{\prevnecrule}{\ensuremath{(\lwprevious\necrule)}}
\newcommand{\sofarnecrule}{\ensuremath{(\lsofar\necrule)}}
\newcommand{\alwaysPastFuturenecrule}{\ensuremath{(\lalwaysPastFuture\necrule)}}
\newcommand{\prevRMrule}{\ensuremath{(\lwprevious\textsf{-RM})}}
\newcommand{\onceRMrule}{\ensuremath{(\lonce\textsf{-RM})}}
\newcommand{\prevkax}{\ensuremath{(\lwprevious\kax)}}
\newcommand{\sofarkax}{\ensuremath{(\lsofar\kax)}}
\newcommand{\swprevax}{\ensuremath{(\textsf{sw})}}
\newcommand{\fpax}{\ensuremath{(\textsf{FP})}}
\newcommand{\pfax}{\ensuremath{(\textsf{PF})}}
\newcommand{\initialax}{\ensuremath{(\textsf{initial})}}
\newcommand{\sofarindax}{\ensuremath{(\lsofar\textsf{-ind})}}
\newcommand{\soneax}{\ensuremath{(\lsince\textsf{1})}}
\newcommand{\stwoax}{\ensuremath{(\lsince\textsf{2})}}
\newcommand{\JTO}{\textsf{JTO}}
\newcommand{\LPLTLp}{\textsf{LPLTL}^{\sf P}}
\newcommand{\Logic}[1]{\mathsf{#1}} 
\newcommand{\CTerms}{\textsf{Const}}
\newcommand{\VTerms}{\textsf{Var}}
\newcommand{\Terms}{\textsf{Tm}}
\newcommand{\Ag}{\textsf{Ag}}
\newcommand{\jbox}[1]{\left[#1\right]\!}
\newcommand{\jboxAgent}[1]{\left[#1\right]_\agent\!}
\newcommand{\jdiamondAgent}[1]{\langle#1\rangle_\agent\,}
\newcommand{\jboxOAgent}[1]{\left[#1\right]^\lobligatory_\agent\!}
\newcommand{\jboxO}[2]{\left[#1\right]^\lobligatory_#2\!}
\newcommand{\jboxPAgent}[1]{\langle#1\rangle^\lpermissible_\agent\!}
\newcommand{\jboxP}[2]{\langle#1\rangle^\lpermissible_#2\!}
\newcommand{\truthsetModel}[1]{\left\|#1\right\|_\system\!}
\newcommand{\truthset}[2]{\left\|#1\right\|_#2\!}
\newcommand{\tapp}{\cdot}
\newcommand{\ttsum}{+}
\newcommand{\tinspect}{!}
\newcommand{\tref}{\ddagger}
\newcommand{\LP}{\textsf{LP}}
\newcommand{\appax}{\ensuremath{(\textsf{application})}}
\newcommand{\sumax}{\ensuremath{(\textsf{sum})}}
\newcommand{\posintax}{\ensuremath{(\textsf{positive introspection})}}
\newcommand{\refax}{\ensuremath{(\textsf{factivity})}}
\newcommand{\conax}{\ensuremath{(\textsf{consistency})}}
\newcommand{\shiftrefax}{\ensuremath{(\textsf{obligated factivity})}}
\newcommand{\nocax}{\ensuremath{(\textsf{no conflicts})}}
\newcommand{\nocstrongax}{\ensuremath{(\textsf{strong no conflicts})}}
\newcommand{\appOax}{\ensuremath{(\textsf{application-}\lobligatory)}}
\newcommand{\iteratedconstnecrule}{\ensuremath{(\textsf{iax}\necrule)}}
\newcommand{\iteratedconstnecruleO}{\ensuremath{(\textsf{iax}\necrule\textsf{-}\lobligatory)}}
\newcommand{\CS}{\textsf{CS}}
\newcommand{\accrel}{R}
\newcommand{\accrelO}{R^\lobligatory}
\newcommand{\numberofagents}{h}
\newcommand{\agent}{i}
\newcommand{\csNeighborhood}{\ensuremath{(\textsf{constant-specification-}\neighborhood)}}
\newcommand{\appNeighborhood}{\ensuremath{(\textsf{application-}\neighborhood)}}
\newcommand{\sumNeighborhood}{\ensuremath{(\textsf{sum-}\neighborhood)}}
\newcommand{\posintNeighborhood}{\ensuremath{(\textsf{positive-introspection-}\neighborhood)}}
\newcommand{\refNeighborhood}{\ensuremath{(\textsf{reflexivity-}\neighborhood)}}
\newcommand{\csNeighborhoodO}{\ensuremath{(\textsf{constant-specification-}\neighborhood^\lobligatory)}}
\newcommand{\appNeighborhoodO}{\ensuremath{(\textsf{application-}\neighborhood^\lobligatory)}}
\newcommand{\nocNeighborhoodO}{\ensuremath{(\textsf{noc-}\neighborhood^\lobligatory)}}
\newcommand{\shiftrefNeighborhoodO}{\ensuremath{(\textsf{obligated-factivity-}\neighborhood^\lobligatory)}}
\newcommand{\runs}{\mathcal{R}}
\newcommand{\system}{\mathcal{I}}
\newcommand{\evidence}{\mathcal{E}}
\newcommand{\evidenceO}{\mathcal{E}^\mathcal{O}}
\newcommand{\valuation}{\nu}
\newcommand{\entails}{\vDash}
\newcommand{\neighborhood}{N_\agent}
\newcommand{\neighborhoodO}{N^\lobligatory_\agent}
\newcommand{\Neigh}{neighborhood }
\newcommand{\N}{\mathbb{N}}
\newcounter{enumsave}
\renewcommand{\phi}{\varphi}
\newcommand{\Subf}{\mathsf{Subf}}
\newcommand{\MCS}{\mathsf{MCS}}
\newcommand{\RO}[4]{#1 R_\lnext #2\ [#3, #4]}
\newcommand{\protagoras}{\texttt{p}}
\newcommand{\euathlus}{\texttt{e}}
\newcommand{\judge}{\texttt{j}}
\newcommand{\agreement}{\texttt{a}}
\newcommand{\verdictP}{\texttt{verdict}_\protagoras}
\newcommand{\verdictE}{\texttt{verdict}_\euathlus}
\newcommand{\winP}{\texttt{win}_\protagoras}
\newcommand{\winE}{\texttt{win-first}_\euathlus}
\newcommand{\pay}{\texttt{pay}}
\newcommand{\sueP}{\texttt{sue}_\protagoras}
\newcommand{\winPsecond}{\texttt{win-second}_\protagoras}
\newcommand{\contract}{\texttt{contract}}
\newcommand{\court}{\texttt{court}}
\newcommand{\NowinE}{\texttt{No-win-first}_\euathlus}
\newcommand{\PsueE}{\texttt{PsueE}}
\newcommand{\PastLooking}{\texttt{Past-looking}}
\begin{document}

\title{A temporal logic of epistemic and normative justifications,\\ with an application to the Protagoras paradox\thanks{A preliminary version of this paper was presented in the ``Annual Seminar on Mathematical Logic and its Applications", Arak University of Technology, September 2019. The author would like to thank the organizers for their invitation.}
}
\author{Meghdad Ghari\thanks{This research was in part supported by a grant from IPM (No. 98030426)  and carried out in IPM-Isfahan Branch.}%\inst{1} \and Thomas Studer\inst{2}
}
\institute{Department of Philosophy, Faculty of Literature and Humanities,\\
	University of Isfahan, Isfahan, Iran \\ and \\ School of Mathematics,
	Institute for Research in Fundamental Sciences (IPM), \\ P.O.Box: 19395-5746, Tehran, Iran \\ \email{ghari@ipm.ir}
}

\maketitle

\begin{abstract}
We combine linear temporal logic (with both past and future modalities) with a deontic version of justification logic to provide a framework for reasoning about time and epistemic and normative reasons. In addition to temporal modalities, the resulting logic contains two kinds of justification assertions: epistemic justification assertions and deontic justification assertions. The former presents justification for the agent's knowledge and the latter gives reasons for why a proposition is obligatory. We present two kinds of semantics for the logic: one based on Fitting models and the other based on neighborhood models. The use of neighborhood semantics enables us to define the dual of deontic justification assertions properly, which corresponds to the notion of permission in deontic logic. We then establish the soundness and completeness of an axiom system  of the logic with respect to these semantics. Further, we formalize the Protagoras versus Euathlus paradox in this logic and present a precise analysis of the paradox, and also briefly discuss Leibniz's solution.
\end{abstract}

%%%%%%%%%%%%%%%%%%%%%%%%%%%%%%%%%%%%%%%%%%%%%%%%%%%%%%%%%%%%%%%%%%%%

\section{Introduction}

Justification logics are modal-like logics that provide a framework for reasoning about epistemic justifications (see \cite{Art-Fit-Book-2019,Kuz-Stu-Book-2019} and also \cite{Art08RSL,ArtFit11SEP} for a survey). The language of multi-agent justification logics extends the language of propositional logic by justification terms and expressions of the form $\jbox{t}_\agent \phi$, with the intended meaning ``$t$   is agent $i$'s justification for knowing (or believing) $\phi$.''  The \emph{Logic of Proofs} $\LP$ was the first logic in the family of justification logics, introduced by Artemov in \cite{Art95TR,Art01BSL}. The logic of proofs is a justification counterpart of the modal epistemic logic $\Logic{S4}$. Various extensions and variants of the logic of proofs have been introduced so far. For example, a deontic reading of justification assertions are recently mentioned in the literature, in which $\jbox{t}_\agent \phi$ is read ``$t$ is a reason why $\phi$ is obligatory for agent $i$''. In this respect, various deontic justification logics have been introduced (cf. \cite{Faroldi-Protopopescu-IGPL-2019,Carneiro-2019,Ghari-DEON-2021}). On the other hand, extensions of justification logics with temporal modalities are introduced in \cite{Bucheli15,BucheliGhariStuder2017,Gha18temporal-arXiv,Ghari-IGPL-2021}. Temporal justification logics are a new family of temporal logics of knowledge in which the knowledge of agents is modeled using a justification logic.

The importance of combining deontic logic with temporal logic is argued in many papers (see \cite{Thomason1981}, from an extensive literature). Some authors argued that attention to time is crucial in handling some deontic puzzles, such as the Chisholm puzzle. Another useful combination is obtained by adding a deontic logic to a logic of knowledge (see e.g. \cite{Lomuscio-Sergot-SL-2003,Pacuit-Parikh-Cogan-Synthese-2006}). 

This paper aims at combining a temporal justification logic with a deontic justification logic. We call the resulting system the logic of justification, time, and obligation, denoted by $\JTO$. The temporal justification logic part of $\JTO$, that comes from \cite{Ghari-IGPL-2021}, is the logic $\LPLTLp$ which is a combination of linear temporal logic (with both past and future modalities) with the logic of proofs. The deontic justification logic part of $\JTO$ is an extension of a logic introduced in \cite{Ghari-DEON-2021}.

In addition to temporal modalities, the logic $\JTO$ contains two kinds of justification assertions: epistemic justification assertions $\jboxAgent{t} \phi$ and deontic (or normative) justification assertions $\jboxOAgent{t} \phi$, which are read respectively as  ``$t$ is agent $i$'s justification for $\phi$'' and ``$t$ is a reason why $\phi$ is obligatory for agent $i$'' (or ``it is obligatory for agent $i$ that $\phi$ is true because of $t$''). Epistemic justification assertions originated in the known multi-agent justification logics \cite{TYav08TOCS,BucKuzStu11JANCL,Ghari2014}, and deontic justification assertions originated in deontic justification logics \cite{Faroldi-Protopopescu-IGPL-2019,Carneiro-2019,Ghari-DEON-2021}. 
Let us recall the notion of normative reason: a normative reason for an agent to do an action is an explanation of why the agent ought to do that action (cf. \cite{Faroldi-Protopopescu-IGPL-2019}).
We present two kinds of semantics for the logic $\JTO$: one based on Fitting models (similar to that given in \cite{BucheliGhariStuder2017,Ghari-IGPL-2021}) and the other based on neighborhood models. Both semantics are extensions of interpreted systems that are widely used to model knowledge and time in multi-agent systems (see, for instance  \cite{FHMV95}). The use of  neighborhood semantics enables us to define the dual of justification assertions, i.e. $\jdiamondAgent{t} \phi$ and $\jboxPAgent{t} \phi$, which are read  respectively as ``$t$ is a reason why $\phi$ is compatible with agent $i$'s knowledge'' and ``$t$ is a reason why $\phi$ is permitted for agent $i$''. The latter corresponds to the notion of permission which is of great importance in deontic logic.

In order to show how the logic $\JTO$ can be used in practice, we consider a concrete case study and formalize the Protagoras versus Euathlus case paradox. This paradox is formalized in  \cite{Lenzen1977,Aqvist1995,Glavanicova-Pascucci-DEON-2021} (Lenzen and Aqvist formalization is restated in \cite{Lukowski2011}). In order to to formalize the paradox, Lenzen and Aqvist use a modal logic ${\sf S5}$ (with axioms of identity predicate), in which neither deontic modalities nor temporal operators are used. Glavanicova and Pascucci, in contrast, formalize the paradox in an interval based temporal logic which contains deontic modal operators. All aforementioned papers use the necessity modality to formalize the contract between Protagoras and Euathlus. Glavanicova and Pascucci focus on different normative sources that is used in the arguments of Protagoras and Euathlus. Instead of normative sources, the language of $\JTO$ can express different normative reasons that are used in the arguments. We then show that this vanishes the appearance of contradiction in a formalization of the paradox in $\JTO$. 

A section-by-section outline of the content of this paper follows. Section \ref{sec:Syntax} introduces the language of our logic $\JTO$, and Section \ref{sec:Axioms}  presents axiom systems for the logic. Section \ref{sec:Semantics Fiittng} introduces a semantics based on interpreted systems and Fitting models for $\JTO$.  Then, Section \ref{sec:Completeness} gives details of the completeness proof for $\JTO$. Section \ref{sec:Semantics Fiittng} introduces a semantics based on interpreted systems and neighborhood models for $\JTO$ and provides also the completeness theorem with respect to these models. Finally Section \ref{sec: Protagoras Paradox} discusses the Protagoras Paradox and its formalization in the logic $\JTO$.

%%%%%%%%%%%%%%%%%%%%%%%%%%%%%%%%%%%%%%%%%%%%%%%%%%%%%%%%%%%%%%%%%%%%

\section{Language}
\label{sec:Syntax}

In the following, let $\Ag$ be a finite set of agents, $\CTerms$ be a countable set of justification constants, $\VTerms$ be a countable set of justification variables, and $\Prop$ be a countable set of atomic propositions.

Justification terms and formulas are constructed by the following mutual grammar:
\begin{gather*}
t \coloncolonequals c \mid x \mid \; \tinspect t \mid \; t \ttsum t \mid \; t \tapp t \, , \\
s \coloncolonequals c \mid x \mid \;  \tref s \mid \; s \tapp s \, , \\
\phi \coloncolonequals p \mid \lfalse \mid \phi \limplies \phi \mid \lnext \phi \mid \lwprevious \phi \mid \phi \luntil \phi \mid \phi \lsince \phi \mid \jboxAgent{t} \phi \mid \jboxOAgent{s} \phi \, .
\end{gather*}
where $c \in \CTerms$, $x \in \VTerms$, $i \in \Ag$, and $p \in \Prop$.  The set of justification terms and formulas are denoted by  $\Terms$ and $\Formulae$ respectively. The set of all terms that are used in epistemic justification assertions, defined by the first grammar above, are denoted by $\Terms^\lepistemic$, and the set of all terms that are used in deontic justification assertions, defined by the second grammar above, are denoted by $\Terms^\lobligatory$. Thus, $\Terms = \Terms^\lepistemic \cup \Terms^\lobligatory$.

Note that the only common part of $\Terms^\lepistemic$ and $\Terms^\lobligatory$ are justification constants and justification variables. Therefore, although for $x \in \VTerms$ both $\jboxAgent{x} p$ and $\jboxOAgent{x } p$ are well-formed formulas, the following expressions are not well-formed formulas: $\jboxAgent{\tref x} p$, $\jboxOAgent{x \ttsum y} p$, $\jboxOAgent{\tinspect x} p$, where $x,y \in \VTerms$. From now on when we write $\jboxAgent{t} \phi$ or $\jboxOAgent{s} \phi$ we will assume that $t \in \Terms^\lepistemic$ and $s \in \Terms^\lobligatory$.

The temporal operators $\lnext,\lwprevious, \luntil,\lsince$ are respectively called \textit{next (or tomorrow), weak previous (or weak yesterday), until}, and \textit{since}. We use the following usual abbreviations:
\begin{align*}
\lneg \phi &\colonequals \phi \limplies \lfalse &
\top &\colonequals \lneg \lfalse &\\
\phi \lor \psi &\colonequals \lneg \phi \limplies \psi &
\phi \land \psi &\colonequals \lneg (\lneg \phi \lor \lneg \psi) \\
\phi \liff \psi &\colonequals (\phi \limplies \psi) \land (\psi \limplies \phi) & \lsprevious \phi &\colonequals \neg \lwprevious \neg \phi
\\
\leventually \phi &\colonequals \top \luntil \phi & \lalways \phi &\colonequals \lneg \leventually \lneg \phi  \\
\lonce \phi &\colonequals \top \lsince \phi  & \lsofar \phi &\colonequals \lneg \lonce\lneg \phi\\
\lalwaysPastFuture \phi &\colonequals \lsofar \phi \wedge \lalways \phi & \phi \lunless \psi &\colonequals  (\phi \luntil \psi) \vee (\lalways \phi) \\
\jdiamondAgent{t} \phi &\colonequals \neg \jboxAgent{t} \neg \phi  & \jboxPAgent{t} \phi &\colonequals \neg \jboxOAgent{t} \neg \phi.
\end{align*}
The temporal operators $\lsprevious, \lalways, \leventually, \lsofar, \lonce, \lalwaysPastFuture, \lunless$ are respectively called \textit{strong previous, always from now on (or henceforth), sometime (or eventuality), has-always-been},  \textit{once},  \textit{always}, and \textit{unless} (or \textit{weak until}). The connectives $\lwprevious, \lsprevious, \lsofar, \lonce,\lsince$ are called the past modalities, while the connectives $\lnext, \lalways, \leventually, \luntil, \lunless$ are called the future modalities. Associativity and precedence of connectives, as well as the corresponding omission of brackets, are handled in the usual manner.
%
%We have also the following abbreviation for permission 
%
%$$\jboxPAgent{t} \phi := \neg \jboxOAgent{t} \neg \phi$$

As stated in Introduction, the formulas $\jboxAgent{t} \phi$ and  $\jboxOAgent{t} \phi$ say respectively that  ``$t$ is agent $i$'s justification for $\phi$'' and ``$t$ is a reason why $\phi$ is obligatory for agent $i$'', and their dual $\jdiamondAgent{t} \phi$ and $\jboxPAgent{t} \phi$ say respectively that ``it is compatible with everything $\agent$ knows that $\phi$ is true for reason $t$'' and ``$t$ is a reason why $\phi$ is permitted for agent $i$.''

%$\jboxOAgent{t} \phi$ reads as ``whenever agent $\agent$ is functioning correctly, with respect to the reason/norm/protocol/specification $t$, $\phi$ is the case''.

%A $*$-formula, for $* \in \{ \lsince, \luntil \}$, is a formula of the form $\phi * \psi$ for some formulas $\phi$, $\psi$. A $*$-formula, for $* \in \{ \lnext, \lwprevious, \lsprevious, \lalways, \lsofar \}$, is a formula of the form $* \phi$  for some formula $\phi$.

For a formula $\phi$, the
set of all subformulas of $\phi$, denoted by $\Subf(\phi)$, is defined inductively as follows:
$\Subf(p)=\{ p \}$, for $p \in \Prop$;
$\Subf(\bot)=\{\bot\}$; $\Subf(* \phi)=\{* \phi\}\cup \Subf(\phi)$, where $* \in \{ \lnext, \lwprevious \}$; $\Subf(\phi \star \psi)=\{\phi\star \psi\}\cup \Subf(\phi)\cup
\Subf(\psi)$, where $\star \in \{ \limplies, \luntil, \lsince \}$; $\Subf(\jboxAgent{t} \phi)=\{\jboxAgent{t} \phi\}\cup \Subf(\phi)$; $\Subf(\jboxOAgent{t} \phi)=\{\jboxOAgent{t} \phi\}\cup \Subf(\phi)$. 

By expressing normative reasons explicitly in the language of $\JTO$, one can distinguish between those obligations that seem to conflict with each other. For example, there is an explicit conflict between the following obligations if they are expressed in the standard deontic logic:
\begin{itemize}
	\item It is obligatory for me that I meet Mary.
	$$
	\lobligatory_{{\sf me}} {\sf MeetMary}.
	$$
	
	\item It is obligatory for me that it is not the case that I meet Mary.
	$$
	\lobligatory_{{\sf me}} \neg {\sf MeetMary}.
	$$
	
\end{itemize}
However, there is no conflict of obligations if the reasons of obligations are explicitly mentioned:
\begin{itemize}
	\item It is obligatory for me that I meet Mary because I promised her.
	$$
	\jboxO{{\sf MyPromiseHer}}{{\sf me}} {\sf MeetMary}.
	$$
	
	\item It is obligatory for me that it is not the case that I meet Mary because I promised my wife.
	$$
	\jboxO{{\sf MyPromiseWife}}{{\sf me}} \neg {\sf MeetMary}.
	$$
	
\end{itemize}
In the above formulas suppose that ${\sf MyPromiseHer}, {\sf MyPromiseWife} \in \VTerms$.

%\begin{example}
%	Plato's Dilemma: consider Plato's Dilemma quoted from 
%
%Suppose I promised to meet you for dinner, and thereby incurred an obligation
%to do so, but suppose also that as I am about to leave, my child begins
%to have an asthmatic attack, and it is clear that he needs me to rush him
%to the hospital. It would then seem that both of these claims are true:
%	
%	\begin{itemize}
%		\item I'm obligated to meet you for a lunch at noon. 
%		$$
%		\lobligatory m
%		$$
%		
%		\item I'm obligated to rush my ill child to the hospital at noon.
%		$$
%		\lobligatory h
%		$$
%	\end{itemize}
%	
%	
%	Here we seem to have an indirect, non-explicit conflict of obligations.%, if we assume that satisfying both obligations is practically impossible.
%	
%	\begin{itemize}
%		\item I'm obligated to meet you for a lunch at noon because I promised you. 
%		$$
%		\jboxO{p}{{me}} m
%		$$
%		
%		\item I'm obligated to rush my ill child to the hospital at noon because I have responsibility for my child.
%		$$
%		\jboxO{r}{{me}} h
%		$$
%	\end{itemize}
%	
%	Here we would all agree that the obligation to help my child overrides my obligation to meet you for lunch, and that the first obligation is defeated by the second obligation, which takes precedence. 
%\end{example}

Moreover, the combined language of justification logic and temporal logic allows for expressing some properties of systems that are not expressible in the known modal logics of knowledge, obligation and time. We give some examples here.

\begin{itemize}
	\setlength\itemsep{0.1cm}
	\item `Agent $i$ knows for reason $t$ that it is obligatory (or it is permitted) for her that $\phi$ is true for reason $s$' can be expressed by $\jboxAgent{t} \jboxOAgent{s} \phi$ (or $\jboxAgent{t} \jboxPAgent{s} \phi$).
	
	\item `It is permitted for agent $i$ that $\phi$ is true because of $t$ until she knows that $\psi$ holds because of $s$' can be expressed by $(\jboxPAgent{t} \phi) \luntil \jboxAgent{s} \psi$.  
	
	\item `It is obligatory for agent $i$ that $\phi$ is true because of $t$ since she knows that $\psi$ holds because of $s$' can be expressed by $(\jboxOAgent{t} \phi) \lsince \jboxAgent{s} \psi$.
	
	\item `If $t$ is agent $i$'s reason of why $\phi$ is obligatory at some past time, then $t$ is still her reason of why $\phi$ is obligatory (now)' can be expressed by $\lonce \jboxOAgent{t} \phi \limplies \jboxOAgent{t} \phi$. 
	
	\item `$t$ is agent $i$'s \textit{permanent reason} (or \textit{conclusive evidence}) of why $\phi$ is obligatory' can be expressed by $\lalways \jboxOAgent{t} \phi$ or, in a stronger from, by $\lalwaysPastFuture \jboxOAgent{t} \phi$.
	
	\item `Agent $i$ will have not forgotten her reason $t$ of why $\phi$ is obligatory by tomorrow, providing she possesses the reason now' can be expressed by $\jboxOAgent{t} \phi \limplies \lnext \jboxOAgent{t}$. 
	
	\item `Agent $i$ will learn that $t$ is a reason of why $\phi$ is obligatory  tomorrow, but she does not have this obligation now' can be expressed by $\neg \jboxOAgent{t} \phi \wedge \lnext \jboxOAgent{t}  \phi$. 
\end{itemize}

%%%%%%%%%%%%%%%%%%%%%%%%%%%%%%
\section{Axioms}
\label{sec:Axioms}

The axiom system for temporal justification logic consists of four parts, namely propositional logic, temporal logic, epistemic justification logic, and deontic justification logic. For the temporal part, we use a system of~\cite{Gabbay,Gol87,Gor99} and  \cite{PnueliLichtensteinZuck1985,PnueliLichtenstein2000,vdMFrenchReynolds2005}.
For the epistemic justification logic part, we use a multi-agent version of the Logic of Proofs \cite{BucKuzStu11JANCL,Ghari2014,Ghari-IGPL-2021,TYav08TOCS}, and for the deontic justification logic part, we use a multi-agent version of the logic $\Logic{JNoC}^-$ (extended with the $\shiftrefax$ axiom) from \cite{Ghari-DEON-2021}. 

The axioms of the logic are:
%\bf{Propositional Logic}
\begin{enumerate}
	\setcounter{enumi}{\theenumsave}
	\item all propositional tautologies in the language of $\JTO$ \hfill \propax
	\setcounter{enumsave}{\theenumi}
\end{enumerate}
%%%%%%%%%%%%%%%%%%%%%%%%%%
%\bf{Temporal Logic}

\noindent {\bf Axioms for the future operators:}
\begin{enumerate}
	\setcounter{enumi}{\theenumsave}
	\item $\lnext( \phi \limplies \psi) \limplies (\lnext \phi \limplies \lnext \psi)$ \hfill \nextkax
	\item $\lalways( \phi \limplies \psi) \limplies (\lalways \phi \limplies \lalways \psi)$ \hfill \alwayskax
	\item $\lnext \lneg \phi \liff \lneg \lnext \phi$ \hfill \funax
	\item $\lalways (\phi \limplies \lnext \phi) \limplies (\phi \limplies \lalways \phi)$ \hfill \indax
	\item $\phi \luntil \psi \limplies \leventually \psi$ \hfill \uoneax
	\item $\phi \luntil \psi \liff \psi \lor (\phi \land \lnext(\phi \luntil \psi))$ \hfill \utwoax
	\setcounter{enumsave}{\theenumi}
\end{enumerate}
\noindent {\bf Axioms for the past modalities:}
\begin{enumerate}
	\setcounter{enumi}{\theenumsave}
	\item $\lsofar( \phi \limplies \psi) \limplies (\lsofar \phi \limplies \lsofar \psi)$ \hfill \sofarkax
	\item $\lwprevious( \phi \limplies \psi) \limplies (\lwprevious \phi \limplies \lwprevious \psi)$ \hfill \prevkax
	\item $\lsprevious \phi \limplies \lwprevious\phi$ \hfill \swprevax
	%  \item $\phi \limplies \lnext \lsprevious \phi$ \hfill \fpax
	%  \item $\phi \limplies \lwprevious \lnext \phi$ \hfill \pfax
	% \item $\top \lsince \lwprevious \bot$
	\item $\lonce \lwprevious \bot$ \hfill \initialax
	\item $\lsofar (\phi \limplies \lwprevious \phi) \limplies (\phi \limplies \lsofar \phi)$ \hfill \sofarindax
	\item $\phi \lsince \psi \limplies \lonce \psi$ \hfill \soneax
	\item $\phi \lsince \psi \liff \psi \lor (\phi \land \lsprevious(\phi \lsince \psi))$ \hfill \stwoax
	\setcounter{enumsave}{\theenumi}
\end{enumerate}
\noindent {\bf Axioms for the interaction of future and past modalities:}
\begin{enumerate}
	\setcounter{enumi}{\theenumsave}
	
	\item $\phi \limplies \lnext \lsprevious \phi$ \hfill \fpax
	\item $\phi \limplies \lwprevious \lnext \phi$ \hfill \pfax
	
	\setcounter{enumsave}{\theenumi}
\end{enumerate}
%
%%%%%%%%%%%%%%%%%%%%%%%%%
\noindent {\bf Axioms for epistemic reasons:}
\begin{enumerate}
	\setcounter{enumi}{\theenumsave}
	%	\item $\jbox{t}_\agent (\phi \limplies \psi) \limplies (\jbox{s}_\agent \phi \limplies \jbox{t \tapp s}_\agent \psi)$ \hfill \appax
	\item $\jbox{t}_\agent (\phi \limplies \psi) \limplies (\jbox{s}_\agent \phi \limplies \jbox{t \tapp s}_\agent \psi)$ \hfill \appax
	
	\item $\jbox{t}_\agent \phi \rightarrow \jbox{t + s}_\agent \phi$, $\jbox{s}_\agent \phi \limplies  \jbox{t + s}_\agent \phi$ \hfill \sumax
	
	\item $\jbox{t}_\agent \phi \limplies \phi$ \hfill \refax
	
	\item $\jbox{t}_\agent \phi \limplies \jbox{\tinspect t}_\agent \jbox{t}_\agent \phi$ \hfill \posintax
	\setcounter{enumsave}{\theenumi}
\end{enumerate}
\noindent {\bf Axioms for normative reasons:}
\begin{enumerate}
	\setcounter{enumi}{\theenumsave}
	%	\item $\jboxOAgent{t} (\phi \limplies \psi) \limplies (\jbox{s}_\agent \phi \limplies \jbox{t \cdot s}_\agent \psi)$ \hfill \appax
	\item $\jboxOAgent{t} (\phi \limplies \psi) \limplies (\jboxOAgent{s} \phi \limplies \jboxOAgent{t \cdot s} \psi)$ \hfill \appOax
	
%	\item $\jboxOAgent{t} \phi \limplies \jboxOAgent{t + s} \phi$, $\jboxOAgent{s} \phi \limplies  \jboxOAgent{t + s} \phi$ \hfill \sumax
%	
%	\item $\neg \jboxOAgent{t} \bot$ \hfill \conax

	\item $\jboxOAgent{t} \phi \limplies \jboxPAgent{t} \phi$ \hfill \nocax
	
	\item $\jboxOAgent{\tref t} (\jboxOAgent{t} \phi \limplies \phi)$ \hfill \shiftrefax
	
%	\item $\jboxOAgent{t} \phi \limplies \jboxOAgent{\tinspect t} \jboxOAgent{t} \phi$ \hfill \posintax
	\setcounter{enumsave}{\theenumi}
\end{enumerate}
The rules of inference are:
\[
\lrule{\vdash \phi \qquad \vdash \phi \limplies \psi}{\vdash \psi}\,\mprule,
\]
\[
\lrule{\vdash \phi}{\vdash \lnext \phi}\,\nextnecrule \, , \qquad \lrule{\vdash \phi}{\vdash \lwprevious \phi}\, \prevnecrule \, , \qquad \lrule{\vdash \phi}{\vdash \lalways\phi}\,\alwaysnecrule \, ,\qquad \lrule{\vdash \phi}{\vdash \lsofar\phi}\,\sofarnecrule,
\]

\[
\lrule{\jbox{c_{j_n}}_{i_n}\ldots\jbox{c_{j_1}}_{i_1} \phi \in \CS}{\vdash \jbox{c_{j_n}}_{i_n}\ldots\jbox{c_{j_1}}_{i_1} \phi}\ \iteratedconstnecrule,
\qquad
\lrule{\jboxO{c_{j_n}}{{i_n}}\ldots\jboxO{c_{j_1}}{{i_1}}  \phi \in \CS}{\vdash \jboxO{c_{j_n}}{{i_n}}\ldots\jboxO{c_{j_1}}{{i_1}}  \phi}\ \iteratedconstnecruleO.
\]
where in the iterated axiom necessitation rules $\iteratedconstnecrule$ and $\iteratedconstnecruleO$ the \textit{constant specification} $\CS$ is a set of formulas of the form
$$
\jbox{c_{j_n}}_{i_n}\ldots\jbox{c_{j_1}}_{i_1} \phi \quad \text{ or } \quad \jboxO{c_{j_n}}{{i_n}} \ldots \jboxO{c_{j_1}}{{i_1}} \phi,
$$ 
where $n \geq 1$, $i_1,\ldots,i_n$ are arbitrary agents,  $c_{j_n},\ldots,c_{j_1}$ are justification constants, and  $\phi$ is an axiom instance of  propositional logic, temporal logic, or justification logic (i.e. instances of axioms 1--23 above). Moreover, a constant specification $\CS$ should be downward closed in the sense that whenever $\jbox{c_{j_n}}_{i_n}\jbox{c_{j_{n-1}}}_{i_{n-1}}\ldots\jbox{c_{j_1}}_{i_1} \phi \in \CS$, where $n >1$, then $\jbox{c_{j_{n-1}}}_{i_{n-1}}\ldots\jbox{c_{j_1}}_{i_1} \phi \in \CS$, and whenever $\jboxO{c_{j_n}}{{i_n}} \jboxO{c_{j_{n-1}}}{{i_{n-1}}} \ldots \jboxO{c_{j_1}}{{i_1}}  \phi \in \CS$, where $n >1$, then $\jboxO{c_{j_{n-1}}}{{i_{n-1}}} \ldots \jboxO{c_{j_1}}{{i_1}}  \phi \in \CS$.

Given a constant specification $\CS$, we define the subsets $\CS^\lepistemic$ and $\CS^\lobligatory$ of $\CS$ as follows:
\begin{eqnarray*}
	\CS^\lepistemic &:= \lset{ \jbox{c_{j_n}}_{i_n}\ldots\jbox{c_{j_1}}_{i_1} \mid \jbox{c_{j_n}}_{i_n}\ldots\jbox{c_{j_1}}_{i_1} \in \CS}, \\
		\CS^\lobligatory &:= \lset{ \jboxO{c_{j_n}}{{i_n}} \ldots \jboxO{c_{j_1}}{{i_1}} \phi \mid \jboxO{c_{j_n}}{{i_n}} \ldots \jboxO{c_{j_1}}{{i_1}} \phi \in \CS}.
\end{eqnarray*}

%and the regularity rule and the iterated axiom necessitation rule
%%
%\[
%\lrule{\phi \liff \psi}{\jboxOAgent{t} \phi \liff \jboxOAgent{t} \psi}\ (\sf{RE})
%\qquad
%\lrule{\jboxOAgent{c_{j_n}}_{i_n}\ldots\jboxOAgent{c_{j_1}}_{i_1} \phi \in \CS}{\vdash \jboxOAgent{c_{j_n}}_{i_n}\ldots\jboxOAgent{c_{j_1}}_{i_1} \phi}\ \iteratedconstnecrule
%\]

%\begin{definition}
%	A constant specification $\CS$ for a justification logic $\mathsf{L}$ is \textit{axiomatically appropriate} provided, for every axiom instance $\phi$ of $\mathsf{L}$ and for every $n \geq 1$, and every $i_1, \ldots, i_n \in \Ag$,   $\jboxOAgent{c_{j_n}}_{i_n}\ldots\jboxOAgent{c_{j_1}}_{i_1} \phi \in \CS$ for some justification constants $c_{j_n},\ldots,c_{j_1}$.
%\end{definition}

For a given constant specification $\CS$, we shall use $\JTO_\CS$ to refer to the Hilbert system given by the axioms and rules for propositional logic, temporal logic, and justification logic (including both epistemic and normative reasons) as presented above. From here on when we write $\JTO_\CS$ we will assume that $\CS$ is a constant specification for $\JTO$. By $\vdash_\CS$ (or simply $\vdash$), we denote derivability in $\JTO_\CS$.

%%%%%%%%%%%%%%%%%%%%%%%%%%%%%%%%%%%%%%
\paragraph{\textbf{Axiom $\nocax$}.}

The axiom $\nocax$, $\jboxOAgent{t} \phi \limplies \jboxPAgent{t} \phi$, expresses that ``if $\phi$ is obligatory for agent $\agent$ for a reason $t$, then $\phi$ is permitted for the agent for the same reason." This axiom is a justification version of the axiom $\lobligatory \phi \limplies \lpermissible \phi$ in standard deontic logic. It is worth noting that $\nocax$ is equivalent to
\[
\neg (\jboxOAgent{t} \phi \land \jboxOAgent{t} \neg \phi).
\]
The latter formula says that ``it is not the case that both $\phi$ and $\neg \phi$ are obligatory for agent $\agent$ for the same reason $t$," and hence for a normative reason there is no conflicts in norms.

There are other known axioms in the deontic justification logic context:
\begin{itemize}
	\item $\jboxOAgent{s} \phi \limplies \jboxPAgent{t} \phi$ \hfill \nocstrongax
	
	\item $\neg \jboxOAgent{t} \bot$ \hfill \conax 

\end{itemize}

The principle $\nocstrongax$ is used in \cite{Faroldi-Protopopescu-IGPL-2019} as an axiom in a justification logic of normative reasons. This principle says that ``if $\phi$ is obligatory for agent $\agent$ for reason $s$, then it is permitted for agent $\agent$ for (maybe another) reason $t$.'' In this form, it seems that this principle is not plausible. For example, consider the following instance: If $\agent$ is obliged to return the book to the library because of the library's law, then she is permitted to return the book because she is hungry (!). It is worth noting that $\nocstrongax$ is equivalent to
$$
\neg (\jboxOAgent{s} \phi \land \jboxOAgent{t} \neg \phi),
$$
which says that ``it is not the case that $\phi$ is obligatory for agent $\agent$ for a reason $s$ while $\neg \phi$ is obligatory for agent $\agent$ for a reason $t$." This is a strong version of axiom $\nocax$.

The principle $\conax$ is usually used in justification logics as a justification version of axiom $\neg \lobligatory \bot$ in standard deontic logic (see, for instance \cite{Fit16APAL,Ghari-APAL-2017,KuzStu12AiML}). This principle says that ``it is not obligatory for agent $i$ that contradiction is true for reason $t$.'' Using this axiom, one can prove $\nocstrongax$  (cf. \cite{Ghari-DEON-2021} for a more detailed exposition), which is not a plausible principle as states above.

%%%%%%%%%%%%%%%%%%%%%%%%%%%%%%%%%%%%%%%%%%%%%%%%%%%%%%%%%%%
\paragraph{\textbf{Basic properties}.}

The definition of derivation from a set of assumptions is standard:
\[
T \vdash_\CS \phi \text{ iff there exist $\psi_1,\ldots,\psi_n \in T$ such that 
	$\vdash_\CS (\psi_1 \land \cdots \land \psi_n) \to \phi$.}
\]

From the definition of derivation from a set of assumptions, it follows that the Deduction Theorem holds in $\JTO_\CS$, i.e. $T \vdash_\CS \phi \limplies \psi$ if{f}  $T, \phi \vdash_\CS \psi$.

\begin{lemma}\label{lem: Admissible rules in LTL}
	The following rules are admissible in $\JTO_\CS$:
	\[
	\lrule{T \vdash \phi}{\lalwaysPastFuture T \vdash \lalwaysPastFuture \phi}\,\alwaysPastFuturenecrule, \qquad
	 \lrule{\vdash \phi \limplies \psi}{\vdash \lwprevious \phi \limplies \lwprevious \psi}\, \prevRMrule, \qquad
	  \lrule{\vdash \phi \limplies \psi}{\vdash \lonce \phi \limplies \lonce \psi}\, \onceRMrule.
	\]
\end{lemma}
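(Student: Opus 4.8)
The plan is to handle the two monotonicity rules first, since they reduce directly to the necessitation rules and distribution axioms already available, and then to treat $\alwaysPastFuturenecrule$, which is the substantive case.

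For $\prevRMrule$, starting from $\vdash \phi \limplies \psi$ I would apply $\prevnecrule$ to obtain $\vdash \lwprevious(\phi \limplies \psi)$ and then combine with the distribution axiom $\prevkax$ and $\mprule$ to conclude $\vdash \lwprevious \phi \limplies \lwprevious \psi$. For $\onceRMrule$, I would exploit that $\lonce$ is the dual of $\lsofar$: contraposing $\vdash \phi \limplies \psi$ gives $\vdash \neg \psi \limplies \neg \phi$; applying $\sofarnecrule$ and $\sofarkax$ yields $\vdash \lsofar \neg \psi \limplies \lsofar \neg \phi$, which by the definition $\lsofar \chi = \neg \lonce \neg \chi$ together with propositional contraposition is exactly $\vdash \lonce \phi \limplies \lonce \psi$.

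For $\alwaysPastFuturenecrule$, the key observation is that the compound operator $\lalwaysPastFuture$ (abbreviating $\lsofar \phi \wedge \lalways \phi$) behaves as a normal modality. First I would record a derived necessitation rule: if $\vdash \chi$, then $\sofarnecrule$ and $\alwaysnecrule$ give $\vdash \lsofar \chi$ and $\vdash \lalways \chi$, hence $\vdash \lalwaysPastFuture \chi$. Next I would establish a derived distribution axiom $\vdash \lalwaysPastFuture(\chi \limplies \theta) \limplies (\lalwaysPastFuture \chi \limplies \lalwaysPastFuture \theta)$ by splitting the conjunction and applying $\sofarkax$ to the $\lsofar$-conjunct and $\alwayskax$ to the $\lalways$-conjunct, then recombining propositionally. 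From necessitation and distribution, the standard derivation yields the conjunction principle $\vdash (\lalwaysPastFuture \chi \wedge \lalwaysPastFuture \theta) \limplies \lalwaysPastFuture(\chi \wedge \theta)$, which extends by induction to finite conjunctions.

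With these facts in place, the rule follows by unfolding the definition of derivation from assumptions. From $T \vdash \phi$ I obtain $\psi_1,\ldots,\psi_n \in T$ with $\vdash (\psi_1 \wedge \cdots \wedge \psi_n) \limplies \phi$. Applying the derived $\lalwaysPastFuture$-necessitation and distribution gives $\vdash \lalwaysPastFuture(\psi_1 \wedge \cdots \wedge \psi_n) \limplies \lalwaysPastFuture \phi$, while the conjunction principle supplies $\vdash (\lalwaysPastFuture \psi_1 \wedge \cdots \wedge \lalwaysPastFuture \psi_n) \limplies \lalwaysPastFuture(\psi_1 \wedge \cdots \wedge \psi_n)$; chaining the two implications yields $\vdash (\lalwaysPastFuture \psi_1 \wedge \cdots \wedge \lalwaysPastFuture \psi_n) \limplies \lalwaysPastFuture \phi$. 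Since each $\lalwaysPastFuture \psi_k \in \lalwaysPastFuture T$, this is precisely $\lalwaysPastFuture T \vdash \lalwaysPastFuture \phi$. The main obstacle is essentially bookkeeping: verifying that $\lalwaysPastFuture$ inherits normality from its two conjuncts; once that is done, the manipulation of the assumption set is routine.
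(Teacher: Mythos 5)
Your treatments of $\prevRMrule$ and of $\alwaysPastFuturenecrule$ are correct and are the natural ones: $\prevRMrule$ is immediate from $\prevnecrule$, $\prevkax$ and $\mprule$, and for $\alwaysPastFuturenecrule$ the derived necessitation, derived K-axiom and conjunction principle for $\lalwaysPastFuture$ all follow from the corresponding primitives for $\lsofar$ and $\lalways$ by purely propositional recombination, after which unfolding the definition of $T \vdash \phi$ closes the argument. The paper's own proof is only the word ``straightforward,'' and this is surely the intended route for those two rules.

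The argument for $\onceRMrule$, however, has a genuine gap at its last step. In this paper $\lonce$ is the primitive-side abbreviation ($\lonce\chi \colonequals \top \lsince \chi$) and $\lsofar$ is defined from it by $\lsofar\chi \colonequals \lneg\lonce\lneg\chi$. Hence $\lsofar\lneg\psi$ unfolds to $\lneg\lonce\lneg\lneg\psi = \lneg(\top\lsince\lneg\lneg\psi)$, so contraposing $\vdash \lsofar\lneg\psi \limplies \lsofar\lneg\phi$ yields $\vdash (\top\lsince\lneg\lneg\phi) \limplies (\top\lsince\lneg\lneg\psi)$, \emph{not} $\vdash \lonce\phi \limplies \lonce\psi$. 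The residual double negations sit underneath $\lsince$, where propositional reasoning cannot reach them; eliminating them requires a congruence or monotonicity principle for $\lsince$ in its second argument, which is precisely (an instance of) the rule you are trying to establish, so the step as written is circular. Indeed, the paper later invokes $\onceRMrule$ exactly to pass from $\lonce\,\texttt{win-first}_\euathlus$ to $\lonce\lneg\lneg\,\texttt{win-first}_\euathlus$, which confirms that the rule cannot be obtained by mere definition-unfolding. The repair is to prove directly, from $\stwoax$, $\swprevax$, $\sofarindax$, $\initialax$ and the normality of $\lwprevious$, the derived monotonicity rule for $\lsince$ in its second argument (from $\vdash \phi \limplies \psi$ infer $\vdash \chi\lsince\phi \limplies \chi\lsince\psi$); this is a standard but genuinely non-trivial derivation using the fixed-point and induction axioms, and $\onceRMrule$ is then the instance with $\chi = \top$.
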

\begin{proof}
	Straightforward. \qed 
\end{proof}

In the next lemma we state some basic results of linear temporal logic that we need later on.

\begin{lemma}\label{lem: basic results of linear temporal logic}
	The following formulas are provable in $\JTO_\CS$:
	
	\begin{enumerate}
		\item $\lalways \phi \limplies (\phi \land \lnext \lalways \phi)$.
		\item $\lalways \phi \limplies  \lnext \phi$.
		\item $\lsofar \phi \limplies (\phi \wedge \lwprevious \lsofar \phi)$.
		\item $\lsofar \phi \limplies  \lwprevious \phi$.
		\item $\lalwaysPastFuture \phi \limplies \phi$.
		\item $\lalwaysPastFuture \phi \liff \lalwaysPastFuture \lalwaysPastFuture \phi$.
		\item $\phi \lunless \psi \liff \psi \lor (\phi \land \lnext(\phi \lunless \psi))$.
	\end{enumerate}
	
	%In item 1, $\mprule$ is the only rule that is used in the derivation.
\end{lemma}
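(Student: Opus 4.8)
The plan is to treat items 1--5 and 7 as unfoldings of the fixed-point axioms \utwoax\ and \stwoax, and to isolate item 6 as the real work. First I would prove items 1 and 3 by specialising those axioms. Instantiating \utwoax\ at $\top\luntil\lneg\phi$ gives $\leventually\lneg\phi\liff\lneg\phi\lor\lnext\leventually\lneg\phi$; negating and pushing the negation through $\lnext$ by \funax\ yields $\lalways\phi\liff\phi\land\lnext\lalways\phi$, which is item 1 (in fact as a biconditional, a form I will reuse). Item 3 is the exact past mirror: \stwoax\ at $\top\lsince\lneg\phi$ gives $\lonce\lneg\phi\liff\lneg\phi\lor\lsprevious\lonce\lneg\phi$, and negating, together with $\lneg\lsprevious\chi\liff\lwprevious\lneg\chi$ (from the definition of $\lsprevious$), produces $\lsofar\phi\liff\phi\land\lwprevious\lsofar\phi$. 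Items 2 and 4 then follow: from $\vdash\lalways\phi\limplies\phi$ with \nextnecrule\ and \nextkax\ one gets $\lnext\lalways\phi\limplies\lnext\phi$, which chains with the second conjunct of item 1, and symmetrically \prevRMrule\ from Lemma \ref{lem: Admissible rules in LTL} gives $\lwprevious\lsofar\phi\limplies\lwprevious\phi$, chaining with item 3.

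Item 5 is immediate from item 1 (or 3) and the definition $\lalwaysPastFuture\phi\colonequals\lsofar\phi\land\lalways\phi$. Item 7 is a propositional rearrangement: expand $\lunless$ by its definition, rewrite $\phi\luntil\psi$ by \utwoax\ and $\lalways\phi$ by the biconditional form of item 1, and collect terms using the distribution of $\lnext$ over $\lor$ (a consequence of \nextkax\ and \funax).

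The substantive case is item 6, and this is where I expect the difficulty. The direction $\lalwaysPastFuture\lalwaysPastFuture\phi\limplies\lalwaysPastFuture\phi$ is immediate from item 5 (with $\chi=\lalwaysPastFuture\phi$). For the converse I would write $\lalwaysPastFuture\lalwaysPastFuture\phi$ as $\lsofar\lalwaysPastFuture\phi\land\lalways\lalwaysPastFuture\phi$ and prove the two \emph{transitivity} implications (A) $\lalwaysPastFuture\phi\limplies\lalways\lalwaysPastFuture\phi$ and (B) $\lalwaysPastFuture\phi\limplies\lsofar\lalwaysPastFuture\phi$ by the induction axioms. For (A) it suffices, by \indax\ and \alwaysnecrule, to establish the single-step claim $(\ast)$: $\lalwaysPastFuture\phi\limplies\lnext\lalwaysPastFuture\phi$; for (B), by \sofarindax\ and \sofarnecrule, the claim $(\ast\ast)$: $\lalwaysPastFuture\phi\limplies\lwprevious\lalwaysPastFuture\phi$.

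The crux is then $(\ast)$ and $(\ast\ast)$, and both hinge on the shift equivalence $\lnext\lwprevious\chi\liff\chi$, whose forward direction is the contrapositive of \fpax\ (via \funax\ and the definition of $\lsprevious$) and whose backward direction comes from \fpax\ together with \swprevax. Using that $\lnext$ and $\lwprevious$ are normal (so distribute over $\land$), I would reduce $(\ast)$ to $\lalwaysPastFuture\phi\limplies\lnext\lsofar\phi$ and $\lalwaysPastFuture\phi\limplies\lnext\lalways\phi$: the latter is immediate from the biconditional form of item 1, and the former follows from $\lnext\lsofar\phi\liff\lnext\phi\land\lsofar\phi$ (obtained by pushing $\lnext$ through item 3 and applying the shift equivalence) together with item 2. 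Symmetrically, $(\ast\ast)$ reduces to $\lalwaysPastFuture\phi\limplies\lwprevious\lsofar\phi$ (immediate from item 3) and $\lalwaysPastFuture\phi\limplies\lwprevious\lalways\phi$; for the latter I would push $\lwprevious$ through the biconditional item 1 to get $\lwprevious\lalways\phi\liff\lwprevious\phi\land\lwprevious\lnext\lalways\phi$, then supply $\lwprevious\phi$ from item 4 and $\lwprevious\lnext\lalways\phi$ from \pfax. Assembling (A) and (B) gives the converse and completes item 6. The main obstacle is precisely this interleaving of the forward and backward inductions through the shift equivalence; once $(\ast)$ and $(\ast\ast)$ are in hand the two applications of the induction axioms are routine.
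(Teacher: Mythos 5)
Your proposal is correct: every step checks out against the axioms of $\JTO_\CS$ — the unfoldings of \utwoax{} and \stwoax{} for items 1, 3 and 7, the monotonicity arguments for items 2, 4 and 5, and in particular the only genuinely nontrivial case, item 6, where your reduction to the one-step claims $(\ast)$ and $(\ast\ast)$ via \indax{} and \sofarindax, with the shift equivalence $\lnext\lwprevious\chi\liff\chi$ derived from \fpax, \funax{} and \swprevax{} and the missing conjunct $\lwprevious\lnext\lalways\phi$ supplied by \pfax, is sound. The paper itself dismisses the lemma as straightforward and gives no proof, so there is nothing to compare against; your derivations are the standard ones and fill that gap correctly.
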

\begin{proof}
	Straightforward. \qed 
\end{proof}

\section{Semantics}
\label{sec:Semantics Fiittng}

In this section, a semantics builds on top of interpreted systems and Fitting models, called F-interpreted systems, is introduced for $\JTO$. Semantics for the temporal part is based on linear and discrete time flow, associating a time point with any natural number. For the justification part, we use Fitting models which is first introduced for the logic of proofs (cf. \cite{Fit05APAL}). The semantics introduced in this section is similar to that given in \cite{BucheliGhariStuder2017,Ghari-IGPL-2021} for temporal justification logics.

\begin{definition}\label{def:frame-run-system}
	Given a non-empty set of states $S$, a \emph{run}~$r$ on $S$ is a function from $\N$ to $S$, i.e., $r: \N \to S$. A \emph{system}~$\runs$ on a set of states $S$ is a non-empty set of runs. 
	
\end{definition}

Given a run $r$ and $n \in \N$, the pair $(r,n)$ is called a \emph{point}. The image of $\runs$ is defined as follows:
\[
\Im(\runs) := \{ r(n) \mid r \in \runs , n \in \N \}.
\]
Note that $\emptyset \neq \Im(\runs) \subseteq S$, but $\Im(\runs)$ and $S$ are not necessarily equal. %since there may exist $w \in S$ such that $w \not \in \Im(\runs)$.

\begin{definition}\label{def: F-frame-run-system}
	A \emph{frame} is a tuple $(S, \runs, \accrel_1,\ldots,\accrel_\numberofagents, \accrelO_1,\ldots, \accrelO_\numberofagents)$ (or $(S, \runs, \accrel_\agent, \accrelO_\agent)_{\agent \in \Ag}$ for short) where
	\begin{enumerate}
		\item $S$ is a non-empty set of states;
		
		\item $\runs$ is a system on $S$;
		
		\item each $\accrel_\agent \subseteq S \times S$ is a reflexive and transitive relation; 
		
		\item each $\accrelO_\agent \subseteq S \times S$ is a shift reflexive relation.\footnote{A relation $\accrel$ is shift reflexive if  $w \accrel v$ implies $v \accrel v$, for every states $w,v$.} 
	\end{enumerate}
	% 
%	A \emph{run}~$r$ on a set of states $S$ is a function from $\N$ to states, i.e., $r: \N \to S$. A \emph{system}~$\runs$ on a set of states $S$ is a non-empty set of runs. 

\end{definition}

\begin{definition}\label{def:evidence function}
	Given a frame $(S, \runs, R_\agent, R^\lobligatory_\agent)_{\agent \in \Ag}$,
	\emph{evidence functions and normative evidence functions for agent $\agent$} are functions $\evidence_\agent: S \times \Terms^\lepistemic \to \powerset(\Formulae)$ and $\evidenceO_\agent: S \times \Terms^\lobligatory \to \powerset(\Formulae)$, respectively, satisfying the following conditions.
	
	\noindent
	\textbf{Conditions on $\evidence_\agent$:}\\ For all terms $s,t \in \Terms^\lepistemic$, all formulas $\phi,\psi \in \Formulae$, all states $v,w \in S$, and all $i \in \Ag$:
	\begin{enumerate}
		\setlength\itemsep{0.01cm}
		\item 
		$\evidence_\agent(v,t) \subseteq \evidence_\agent(w,t)$, whenever $v \accrel_\agent w$; \hfill (monotonicity)
		
		\item 
		if $\jboxAgent{t} \phi \in \CS^\lepistemic$, then $\phi \in \evidence_\agent(w,t)$; \hfill (constant specification)
		
		\item 
		if $\phi \limplies \psi \in \evidence_\agent(w,t)$ and $\phi \in \evidence_\agent(w,s)$, then $\psi \in \evidence_\agent(w, t \cdot s)$; \hfill (application)
		
		%		\item 
		%		if $\phi \limplies \psi \in \evidenceO_\agent(w,t)$ and $\phi \in \evidenceO_\agent(w,s)$, then $\psi \in \evidenceO_\agent(w, t \cdot s)$; \hfill (application)
		
		%		\item 
		%		$\evidenceO_\agent(w,s) \cup \evidenceO_\agent(w,t) \subseteq \evidenceO_\agent(w,s + t)$; \hfill (sum)
		%		

		\item 
		if $\phi \in \evidence_\agent(w,t)$, then $\jbox{t}_\agent \phi \in \evidence_\agent(w,\tinspect t)$. \hfill (positive introspection)
		%		\item $\phi \in \evidenceP_\agent(w,t)$ iff $\neg \phi \not \in \evidenceO_\agent(w,t)$.
	\end{enumerate}
	\textbf{Conditions on $\evidenceO_\agent$:} \\ For all terms $s,t \in \Terms^\lobligatory$, all formulas $\phi,\psi \in \Formulae$, all states $v,w \in S$, and all $i \in \Ag$:
	\begin{enumerate}
		\setlength\itemsep{0.01cm}
%		\item 
%		$\evidenceO_\agent(v,t) \subseteq \evidenceO_\agent(w,t)$, whenever $v R^\lobligatory_\agent w$; \hfill (monotonicity)
		
		\item 
		if $\jboxOAgent{t} \phi \in \CS^\lobligatory$, then $\phi \in \evidenceO_\agent(w,t)$; \hfill (constant specification-$\lobligatory$)
		
		\item 
		if $\phi \limplies \psi \in \evidenceO_\agent(w,t)$ and $\phi \in \evidenceO_\agent(w,s)$, then $\psi \in \evidenceO_\agent(w, t \cdot s)$; \hfill (application-$\lobligatory$)
		
		%		\item 
		%		if $\phi \limplies \psi \in \evidenceO_\agent(w,t)$ and $\phi \in \evidenceO_\agent(w,s)$, then $\psi \in \evidenceO_\agent(w, t \cdot s)$; \hfill (application)
		
		%		\item 
		%		$\evidenceO_\agent(w,s) \cup \evidenceO_\agent(w,t) \subseteq \evidenceO_\agent(w,s + t)$; \hfill (sum)
		%		
		\item 
		If $\phi \in \evidenceO_\agent(w,t)$, then $\neg \phi \not \in \evidenceO_\agent(w,t)$; \hfill (consistency)
		
		\item 
		$\jboxOAgent{t} \phi \limplies \phi \in \evidenceO_\agent(w,\tref t)$; \hfill (obligated factivity)
		
	\end{enumerate}
	
\end{definition}

Intuitively, $\phi \in \evidence_\agent(w,t)$ is read ``$t$ is a reason why $\phi$ is known for agent $i$ in state $w$,'' and $\phi \in \evidenceO_\agent(w,t)$ is read ``$t$ is a reason why $\phi$ is obligatory for agent $i$ in state $w$.''

It is worth noting that, without loss of generality, one may define the relations $\accrel_\agent$ and $\accrelO_\agent$ as binary relations on $\Im(\runs)$ and then restrict  the domain of evidence functions to $\Im(\runs) \times \Terms$, and state the above conditions on $\evidence_\agent$ and $\evidenceO_\agent$ only for states in $\Im(\runs)$.

\begin{definition}\label{def:quasi-interpreted sysytems}
	An \emph{F-interpreted system for $\JTO_\CS$} is a tuple 
	\[
	\system = (S, \runs,  \accrel_\agent, \evidence_\agent, \accrelO_\agent, \evidenceO_\agent,  \valuation)_{\agent \in \Ag}
	\] 
	where
	\begin{enumerate}
		\item $(S, \runs, R_\agent, R^\lobligatory_\agent)_{\agent \in \Ag}$ is a frame;
		%\item $\runs$ is a system on $S$;
%		\item each $R_\agent \subseteq S \times S$ is a reflexive and transitive relation. 
%		\item each $R^\lobligatory_\agent \subseteq S \times S$ is a shift-reflexive relation. 
		\item $\evidence_\agent$ is an evidence function for agent~$\agent \in \Ag$;
		\item $\evidenceO_\agent$ is a normative evidence function for agent~$\agent \in \Ag$;
		\item $\valuation: \Im(\runs) \to \powerset(\Prop)$ is a valuation.
	\end{enumerate}
\end{definition}

\begin{definition}\label{def:truth conditions interpreted systems}
	Given an F-interpreted system  $\system = (S, \runs,  \accrel_\agent, \evidence_\agent, \accrelO_\agent, \evidenceO_\agent,  \valuation)_{\agent \in \Ag}$ a run $r \in \runs$, and  $n \in \N$, we define truth of a formula $\phi$ in $\system$ at point $(r,n)$ inductively as follows: 
	\begin{align*}
		(\system, r, n) &\entails p \text{ iff } p \in \valuation(r(n)), \text{ for } p \in \Prop \, ,\\
		(\system, r, n) &\not\entails \lfalse \, ,\\
		(\system, r, n) &\entails \phi \limplies \psi \text{ iff } (\system, r, n) \not\entails \phi \text{ or } (\system, r, n) \entails \psi \, ,\\
		(\system, r, n) &\entails \lwprevious \phi \text{ iff $n=0$ or } (\system, r, n-1) \entails \phi \, ,\\
		(\system, r, n) &\entails \lnext \phi \text{ iff } (\system, r, n+1) \entails \phi \, ,\\
		(\system, r, n) &\entails \phi \lsince \psi \text{ iff there is some } m \leq n \text{ such that } (\system, r, m) \entails \psi \\ & \qquad\qquad \text{ and } (\system, r, k) \entails \phi \text{ for all $k$ with } m < k \leq n \, ,\\
		(\system, r, n) &\entails \phi \luntil \psi \text{ iff there is some } m \geq n \text{ such that } (\system, r, m) \entails \psi \\ & \qquad\qquad \text{ and } (\system, r, k) \entails \phi \text{ for all $k$ with  } n \leq k < m \, ,\\ 
		(\system, r, n) &\entails \jboxAgent{t} \phi \text{ iff }  \phi \in \evidence_\agent(r(n),t)  \text { and }  (\system, r^\prime, n^\prime) \entails \phi \\ &\qquad\qquad \text{ for all } r^\prime \in \runs \text{ and } n^\prime \in \N \text{ such that }   r(n) R_\agent r'(n')\, ,\\   
		(\system, r, n) &\entails \jboxOAgent{t} \phi \text{ iff }  \phi \in \evidenceO_\agent(r(n),t)  \text { and }  (\system, r^\prime, n^\prime) \entails \phi \\ &\qquad\qquad \text{ for all } r^\prime \in \runs \text{ and } n^\prime \in \N \text{ such that }   r(n) R^\lobligatory_\agent r'(n').
	\end{align*}
\end{definition}

From the above definitions it follows that:
\begin{align*}
	(\system, r, n) &\entails \leventually \phi \text{ iff } (\system, r, m) \entails \phi \text{ for some } m\geq n \, ,\\
	(\system, r, n) &\entails \lalways \phi \text{ iff } (\system, r, m) \entails \phi \text{ for all } m\geq n \, ,\\
	(\system, r, n) &\entails \lonce \phi \text{ iff } (\system, r, m) \entails \phi \text{ for some } m \leq n \, ,\\
	(\system, r, n) &\entails \lsofar \phi \text{ iff } (\system, r, m) \entails \phi \text{ for all } m \leq n \, ,\\
	(\system, r, n) &\entails \lsprevious \phi \text{ iff $n>0$ and } (\system, r, n-1) \entails \phi \, \\
	(\system, r, n) &\entails \lalwaysPastFuture \phi \text{ iff } (\system, r, m) \entails \phi \text{ for all } m \, .
%	(\system, r(n)) &\entails \jdiamondAgent{t} \phi \text{ iff }  \truthsetModel{\neg \phi} \not \in \neighborhood(r(n),t)\, ,\\
%	(\system, r(n)) &\entails \jboxPAgent{t} \phi \text{ iff }   \truthsetModel{\neg \phi} \not \in \neighborhoodO(r(n),t) \,.
\end{align*}
Define the truth set of $\phi$ in the F-interpreted system $\system$ as follows:
$$\truthset{\phi}{\system} = \{ r(n) \in \Im(\runs) \mid  (\system, r,n) \entails \phi \}.$$
Let 
\[
\accrel_\agent(r(n)) = \{ r'(n') \in \Im(\runs) \mid  r(n) \accrel_\agent r'(n') \}, 
\]
\[
 \accrelO_\agent(r(n)) = \{ r'(n') \in \Im(\runs) \mid  r(n) \accrelO_\agent r'(n') \}.
\]
Now the truth condition of justification assertions can be expressed as follows:
\begin{align*}
	(\system, r, n) &\entails \jboxAgent{t} \phi \text{ iff }  \phi \in \evidence_\agent(r(n),t)  \text { and }  \accrel_\agent(r(n)) \subseteq \truthset{\phi}{\system}\, ,\\   
	(\system, r, n) &\entails \jboxOAgent{t} \phi \text{ iff }  \phi \in \evidenceO_\agent(r(n),t)  \text { and }  \accrelO_\agent(r(n)) \subseteq \truthset{\phi}{\system}.
\end{align*} 

\begin{definition}
	Let $\CS$ be a constant specification for $\JTO$.
	\begin{enumerate}
		\item Given an F-interpreted system $\system = (S, \runs, \ldots)$ for $\JTO_\CS$, we write $\system \entails \phi$ if
		for all $r \in \runs$ and all $ n \in \N$, we have 
		$(\system, r, n) \entails \phi$.
		
		\item We write $\entails_{\CS} \phi$ if $\system \entails \phi$ for all 
		interpreted systems $\system$ for $\JTO_\CS$.
		
		\item Given a set of formulas $T$ and a formula $\phi$ of $\JTO_\CS$, the (local) consequence relation is defined as follows: $T \models_{\CS} \phi$ iff for all 
		interpreted systems $\system = (S, \runs, \ldots)$ for $\JTO_\CS$, for all $r \in \runs$, and for all $ n \in \N$, if $(\system, r, n) \entails \psi$ for all $\psi \in T$, then $(\system, r, n) \entails \phi$.
	\end{enumerate}
\end{definition}

%-----------------------------------------------------------
\paragraph{\textbf{Expressing moments of time}.}

In the rest of this section we show how linear temporal logic with past operators can express moments of time in the language. All the results of this part can be obtained by purely temporal reasoning in linear temporal logic.

From the definition of truth in interpreted systems we have
\[
(\system, r, n) \entails \lwprevious \bot \text{ iff $n=0$. }
\]
Thus, $\lwprevious \bot$ expresses the property `the time is 0.' Let $\lsprevious^m$ denote the  $m$ times iteration of $\lsprevious$. Then,  $\lsprevious^m \lwprevious\bot$ expresses the property `the time is m.' 

Let `$\ltime=m$' abbreviate $\lsprevious^m \lwprevious \bot$ and let $\ltrue_m (\phi)$ abbreviate 
$$
\lalwaysPastFuture (\ltime = m \limplies \phi).
$$
Note that the natural number $m$ in the formulas $\ltime=m$ and $\ltrue_m (\phi)$ is not part of the language. This number only shows the number of iterations of the modality $\lsprevious$. The following lemma shows that the formula $\ltrue_m (\phi)$ expresses that `$\phi$ is true at time $m$.' We call $\ltrue$ the \textit{temporal truth predicate}.

\begin{lemma}
	Given an F-interpreted system $\system = (S, \runs, \ldots)$ for $\JTO_\CS$, for all $r \in \runs$ and for all $n, m \in \N$ we have:
	\begin{enumerate}
		\item $(\system, r, n) \entails \ltime=m$  iff $n=m$.
		
		\item $(\system, r, n) \entails \ltrue_m (\phi)$ iff $(\system, r, m) \entails \phi$.
		
		\item $\system \models \ltrue_m (\phi)$ iff $(\system, r, m) \models \phi$  for all $r \in \mathcal{R}$.
	\end{enumerate}
\end{lemma}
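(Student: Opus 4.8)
The plan is to establish the three parts in sequence, since each builds on the previous one, and all follow by unwinding the truth conditions of the temporal operators given in Definition~\ref{def:truth conditions interpreted systems}.

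For part (1), I would argue by induction on $m$. The base case $m=0$ is exactly the fact $(\system, r, n) \entails \lwprevious \bot$ iff $n=0$ recorded just before the lemma. For the inductive step, writing $\lsprevious^{m+1} \lwprevious \bot = \lsprevious(\lsprevious^m \lwprevious \bot)$ and applying the clause $(\system, r, n) \entails \lsprevious \psi$ iff $n>0$ and $(\system, r, n-1) \entails \psi$, I get that $(\system, r, n) \entails \lsprevious^{m+1} \lwprevious \bot$ holds iff $n>0$ and $(\system, r, n-1) \entails \lsprevious^m \lwprevious \bot$. By the induction hypothesis the latter is equivalent to $n-1=m$, and since $m+1>0$ the side condition $n>0$ is automatic, so the whole thing reduces to $n=m+1$, as desired. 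The one point to handle with care is precisely this side condition $n>0$: it is the \emph{strong} previous $\lsprevious$, rather than the weak $\lwprevious$, that forces an actual predecessor to exist and thereby pins down the time exactly; a weak previous would be satisfied vacuously at time $0$ and break the ``only if'' direction.

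For part (2), I would expand $\ltrue_m(\phi) = \lalwaysPastFuture(\ltime=m \limplies \phi)$ using the derived clause $(\system, r, n) \entails \lalwaysPastFuture \psi$ iff $(\system, r, k) \entails \psi$ for all $k \in \N$. Thus $(\system, r, n) \entails \ltrue_m(\phi)$ holds iff $(\system, r, k) \entails \ltime=m \limplies \phi$ for every $k$. By part (1), the antecedent $\ltime=m$ is true at $(\system,r,k)$ exactly when $k=m$, so the implication is vacuously satisfied at every $k \neq m$ and at $k=m$ it amounts to $(\system, r, m) \entails \phi$. Hence the universally quantified statement over all $k$ collapses to the single condition $(\system, r, m) \entails \phi$. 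Note that the resulting equivalence is independent of $n$, which is consistent with $\lalwaysPastFuture$ quantifying over all moments.

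Finally, part (3) follows by combining part (2) with the definition of $\system \models \cdot$, namely that $\system \models \ltrue_m(\phi)$ means $(\system, r, n) \entails \ltrue_m(\phi)$ for all $r \in \runs$ and all $n \in \N$. By part (2) each such conjunct is equivalent to $(\system, r, m) \entails \phi$, and since this no longer mentions $n$, the quantification over $n$ is redundant, leaving exactly ``$(\system, r, m) \entails \phi$ for all $r \in \runs$.'' None of the steps presents a genuine obstacle; the only mild subtlety, as noted above, is tracking the $n>0$ side condition of $\lsprevious$ in the induction of part (1).
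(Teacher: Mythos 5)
Your proof is correct and follows exactly the route the paper takes: induction on $m$ for part (1) (the paper's proof says only ``by induction on $m$''), with parts (2) and (3) then following by unwinding the $\lalwaysPastFuture$ clause and the definition of $\system \models \cdot$ respectively. The extra care you take with the $n>0$ side condition of $\lsprevious$ is the right detail to check, and nothing in your argument deviates from the paper's (very terse) proof.
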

\begin{proof}
	The proof of item 1 is by induction on $m$. Item 2 follows easily from 1, and item 3 follows easily from 2. \qed
\end{proof}

The temporal truth predicate satisfies the following intuitive properties.

\begin{lemma}\label{lem: temporal truth predicate properties}
	The following formulas are valid in $\JTO_\CS$. For all $m \in \N$: 
	\begin{enumerate}
		\item $\entails_\CS (\ltrue_m(\phi) \wedge \ltime = m) \limplies \phi$.
				
		\item $\entails_\CS \ltrue_m(\lnext \phi) \liff \ltrue_{m+1} (\phi)$.
		
		\item $\entails_\CS \ltrue_{m+1}(\lsprevious \phi) \liff \ltrue_{m} (\phi)$.
		
		\item $\entails_\CS \ltrue_{m+1}(\lwprevious \phi) \liff \ltrue_{m} (\phi)$.
		
		\item $\entails_\CS \ltrue_{m+1}(\lsofar \phi) \limplies \ltrue_{m}(\phi)$.
		
		\item $\entails_\CS \lalwaysPastFuture \phi \limplies \ltrue_m(\phi)$.
		
		\item $\entails_\CS \lalwaysPastFuture \ltrue_m(\phi) \liff \ltrue_m(\phi)$.
		
		\item $\ltime = m \limplies \phi \entails_\CS \ltime = m+1 \limplies \lsprevious \phi$. 
		
		\item If $\entails_\CS \phi \limplies \psi$, then $\entails_\CS \ltrue_m (\phi) \limplies \ltrue_m (\psi)$.
		
	\end{enumerate}
\end{lemma}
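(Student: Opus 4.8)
The plan is to prove every item directly from the semantics, reducing each claim to the two characterizations established in the preceding lemma: that $(\system, r, n) \entails \ltime = m$ holds iff $n = m$, and that $(\system, r, n) \entails \ltrue_m (\phi)$ holds iff $(\system, r, m) \entails \phi$. For the items that are plain validities I would fix an arbitrary F-interpreted system $\system$, an arbitrary run $r \in \runs$, and an arbitrary $n \in \N$, and then replace each occurrence of the abbreviation $\ltrue_m(\cdot)$ at $(r,n)$ by a statement about the truth of its argument at the fixed time $m$ (a fact independent of $n$). The only remaining work is to connect truth at time $m$ with truth at the neighbouring times $m \pm 1$, or with truth at all times, using the clauses for the temporal operators recorded in the truth definition.

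Concretely, for item~1 the antecedent forces $n = m$ and $(\system, r, m) \entails \phi$ at once, so $\phi$ holds at $(r,n)$. Items~2--4 are chains of biconditionals: after rewriting $\ltrue_m(\cdot)$ as ``its argument holds at time $m$'' (resp.\ at $m+1$), the clause for $\lnext$ moves $m$ to $m+1$, while the clauses for $\lsprevious$ and $\lwprevious$ move $m+1$ back to $m$, the side conditions $m+1>0$ (resp.\ $m+1 \neq 0$) being automatic. Item~5 is the one-directional case where $\lsofar \phi$ at $m+1$ yields $\phi$ at every $k \leq m+1$, hence at $m$; item~6 reads off $\phi$ at time $m$ from the fact that $\lalwaysPastFuture \phi$ asserts $\phi$ at every time; and item~9 is the monotonicity of $\ltrue_m$, immediate since validity of $\phi \limplies \psi$ makes ``$\phi$ at $m$'' entail ``$\psi$ at $m$''.

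The two steps that need more thought are items~7 and~8, and item~8 is the main obstacle. For item~7 the left-to-right direction is just the instance $\lalwaysPastFuture \ltrue_m(\phi) \limplies \ltrue_m(\phi)$ of Lemma~\ref{lem: basic results of linear temporal logic}(5); the converse rests on the observation already used above, that the truth of $\ltrue_m(\phi)$ at $(r,n)$ does not depend on $n$, so if it holds at one point of a run it holds at every point of that run, which is exactly $\lalwaysPastFuture \ltrue_m(\phi)$. Item~8, by contrast, is a consequence-relation statement rather than a validity: here I would evaluate the conclusion at the unique time at which its antecedent $\ltime = m+1$ can hold, namely time $m+1$, and use the clause for $\lsprevious$ to reduce $\lsprevious \phi$ at $m+1$ to $\phi$ at time $m$, which the premise $\ltime = m \limplies \phi$ supplies once it is instantiated at time $m$. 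The delicate point is precisely the scope at which the premise is applied: the argument needs $\phi$ available at time $m$ while the conclusion is being checked at time $m+1$, so the care required is in tracking that the premise is in force at the earlier time rather than only at the point of evaluation. This bookkeeping about which time point each hypothesis is used at, rather than any conceptual difficulty, is where a formalization is most likely to slip, and it is the step I would write out in full detail.
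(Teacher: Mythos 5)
Your overall strategy -- fix $\system$, $r$, $n$, rewrite every occurrence of $\ltrue_m(\cdot)$ as ``its argument holds at $(r,m)$'' via the preceding lemma, and then push the time index around with the semantic clauses -- is exactly the ``straightforward'' verification the paper has in mind, and your treatments of items 1--7 and 9 (including the $n$-independence observation for the converse direction of item 7) are correct as described.

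The gap is in item 8, and it is precisely the step you announce you would ``write out in full detail'' but do not. You say the premise $\ltime = m \limplies \phi$ ``supplies'' $\phi$ at time $m$ ``once it is instantiated at time $m$'' -- but under the paper's own definition of the consequence relation $T \models_\CS \chi$, the premises are only assumed to hold \emph{at the point of evaluation}, and the point of evaluation for the conclusion is $(r,m+1)$, not $(r,m)$. At $(r,m+1)$ the premise is vacuously true (its antecedent $\ltime=m$ is false there) and yields nothing about $\phi$ at time $m$. Concretely, take $\phi$ to be an atomic $p$ with $p \notin \valuation(r(m))$: then $(\system,r,m+1) \entails \ltime = m \limplies p$ but $(\system,r,m+1) \not\entails \ltime = m+1 \limplies \lsprevious p$, so the entailment \emph{fails} under the local reading. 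The step can only be completed by reading the premise globally, i.e.\ as holding at every point of the run (equivalently, by proving $\lalwaysPastFuture(\ltime = m \limplies \phi) \models_\CS \ltime = m+1 \limplies \lsprevious\phi$); this is consistent with how the paper actually uses item 8 later, where the hypotheses transported across times are always of the form $\ltrue_k(\cdot) = \lalwaysPastFuture(\ltime = k \limplies \cdot)$ and the derivations are closed under the rule $\alwaysPastFuturenecrule$. So the ``bookkeeping'' you defer is not mere bookkeeping: it forces a choice between the local and global consequence relations, and only the global one makes the statement true. Your proof of item 8 is incomplete until you make that choice explicit and justify it.
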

\begin{proof}
	Straightforward. \qed
\end{proof}

By making use of temporal modalities $\lsince$ and $\luntil$ and the temporal truth predicate, it is not hard to express that a formula is true in a time interval (for an interval-based deontic logic see \cite{Glavanicova-Pascucci-DEON-2021}). Let $m,n \in \N$ and $m < n$.
\begin{itemize}
	\setlength\itemsep{0.01cm}
	\item `$\phi$ is always true until the time is $m$ (excluding $m$)'  can be formalized by $\phi \luntil \ltime = m$. This is denoted by $\lalways_{[now, m)} \phi$. Likewise, the formula $\phi \luntil \ltime = m+1$, denoted by $\lalways_{[now, m]} \phi$, expresses that `$\phi$ is always true until the time is $m$ (including $m$)'.
	
	%\item `$\phi$ is always true until the time is $m$ (including $m$)'  can be expressed by $\phi \luntil \ltime = m+1$, denoted by $\lalways_{[now, m]} \phi$.
	
	\item `$\phi$ has been always true since the time was $m$ (excluding $m$)'  can be formalized by $\phi \lsince \ltime = m$. This is denoted by $\lalways_{(m,now]} \phi$. Likewise, the formula $\phi \lsince \ltime = m-1$, denoted by $\lalways_{[m, now]} \phi$, expresses that `$\phi$ has been always true since the time was $m$ (including $m$)'
	
	%\item `$\phi$ has been always true since the time was $m$ (including $m$)'  can be expressed by $\phi \lsince \ltime = m-1$, denoted by $\lalways_{[m, now]} \phi$.
	
	\item `$\phi$ is always true in the time interval $[m,n]$' can be formalized by $\ltrue_m (\phi \luntil \ltime = n+1)$. This is denoted by $\lalways_{[m, n]} \phi$. In this case, the formula $\ltrue_{n} (\phi \lsince \ltime = m-1)$ or $\lalwaysPastFuture ( (\ltime =m \vee \ltime = m+1 \vee \cdots \vee \ltime = n) \limplies \phi)$ can also formalize this property. Likewise, we have $\lalways_{[m, n)} \phi := \ltrue_m (\phi \luntil \ltime = n)$, $\lalways_{(m, n]} \phi := \ltrue_{n} (\phi \lsince \ltime = m)$ , $\lalways_{(m, n)} \phi := \ltrue_{n-1} (\phi \lsince \ltime = m)$.
	
	\item `$\phi$ is sometimes true in the time interval $[m,n]$' can be formalized by $\leventually (\phi \wedge (\ltime =m \vee \ltime = m+1 \vee \cdots \vee \ltime = n) ) \vee \lonce (\phi \wedge (\ltime =m \vee \ltime = m+1 \vee \cdots \vee \ltime = n) )$. This is denoted by $\leventually_{[m, n]} \phi$. The formulas $\leventually_{[m, n)} \phi$, $\leventually_{(m, n]} \phi$, and $\leventually_{(m, n)} \phi$ can be formalized in a similar manner.
\end{itemize}

Using the temporal truth predicate one can express normative sentences more precisely. For example, there is an explicit conflict between the following obligations:
\begin{itemize}
	\item It is obligatory for me that I meet Mary  because I promised her.
	$$
	\jboxO{ {\sf MyPromiseHer} }{{\sf me}} {\sf MeetMary}.
	$$
	
	\item It is obligatory for me that it is not the case that I meet Mary because I promised her.
	$$
	\jboxO{ {\sf MyPromiseHer} }{{\sf me}} \neg {\sf MeetMary}.
	$$
	
\end{itemize}
However, there is no conflict of obligations if the time of the appointment is explicitly mentioned:
\begin{itemize}
	\item It is obligatory for me that I meet Mary on Sunday (here denoted by 0) because I promised her.
	$$
	\jboxO{{\sf MyPromiseHer}}{{\sf me}} \ltrue_{0}({\sf MeetMary}).
	$$
	
	\item It is obligatory for me that it is not the case that I meet Mary on Monday (here denoted by 1) because I promised her.
	$$
	\jboxO{{\sf MyPromiseHer}}{{\sf me}} \ltrue_{1}(\neg {\sf MeetMary}).
	$$
\end{itemize}
In the above formulas suppose that ${\sf MyPromiseHer} \in \VTerms$.
%%%%%%%%%%%%%%%%%%%%%%%%%%%%%%%%%%%%%%%%%%%%%%%%%%%%%%%%%

\section{Completeness}
\label{sec:Completeness}

In this section we show soundness and completeness of $\JTO_\CS$ with respect to F-interpreted systems.

%\begin{theorem}[Soundness]
%	Let $\mathsf{L} = \JTO$ and let $\CS$ be a constant specification for ${\sf L}$. For each formula $\phi$ and finite set of formulas $T$,
%	\[
%	T\vdash_{\mathsf{L}_\CS} \phi  \quad\text{implies}\quad  T \models_{\mathsf{L}_\CS}  \phi.
%	\]
%\end{theorem}

For a formula $\chi$, let 
\begin{gather*}
	A_\chi \colonequals \Subf(\chi) \cup  \Subf(\top \lsince \lwprevious \bot),  
	\\[0.1cm]
	\Subf^+(\chi) \colonequals A_\chi \cup \{ \neg \psi \ |\  \psi \in A_\chi \}.
\end{gather*}
%\begin{gather*}
%A_\chi \colonequals \Subf(\chi) \cup  \Subf(\top \lsince \lwprevious \bot)  \cup \Subf \{  \top \luntil \neg \jbox{t}_\agent \phi \ |\ \jbox{\tgeneralize t}_\agent \lalways \phi \in \Subf(\chi) \} 
%\\[0.1cm]
%\Subf^+(\chi) \colonequals A_\chi \cup \{ \neg \psi \ |\  \psi \in A_\chi \}.
%\end{gather*}
%
%In the definition of $A_\chi$, the part $\Subf(\top \lsince \lwprevious \bot)$ is needed for proving completeness of temporal logics involving past time modalities, and the part $\Subf \{  \top \luntil \neg \jbox{t}_\agent \phi \ |\ \jbox{\tgeneralize t}_\agent \lalways \phi \in \Subf(\chi) \}$ is needed for proving completeness of logics involving $\generalizeprinciple$ axiom (see Lemma \ref{lem:evidence generalize condition}).
%
It is easy to show that for a formula $\chi$, the sets $\Subf(\chi)$ and $\Subf^+(\chi)$ are finite.
% The proof of the above claim is by induction on the size of the formula, where the size is defined as follows:
% s(\phi) = r(\phi) if \phi is not a justification assertion, and s(\phi) = r(t) \omega + r(\psi) +1 if \phi \jbox{t}_\agent \psi.

\begin{definition}
	Let $\CS$ be a constant specification for $\JTO$.
	\begin{itemize}
		\item A set $\Gamma$ of formulas is called \textit{consistent} if $\Gamma \not\vdash_\CS \bot$.
		%		That means
		%		$\not\vdash_\CS \bigwedge\Sigma \rightarrow\bot$, for each finite~$\Sigma \subseteq \Gamma$. 
		
		\item A set $\Gamma$ of formulas is called \textit{maximal} if it has no consistent proper extension of formulas.
		
		\item A set $\Gamma \subseteq  \Subf^+(\chi)$  is called \textit{$\chi$-maximal} if it has no consistent proper extension of formulas from $ \Subf^+(\chi)$.
	\end{itemize}
	
\end{definition} 

Consistent sets can be extended to maximal consistent sets by the Lindenbaum Lemma.

Let $\MCS_\chi$ denote the set of all $\chi$-maximally consistent subsets of $ \Subf^+(\chi)$. Note that $\MCS_\chi$ is a finite set. Let $\MCS$ denote the set of all maximally consistent sets, and for $\Gamma\in \MCS$ let $$\overline{\Gamma} \colonequals \Gamma \cap  \Subf^+(\chi).$$ 
%
%It is not difficult to show that for every formula $\chi$ we have:
%\[
%\MCS_\chi = \{\Gamma \cap  \Subf^+(\chi) \mid \Gamma\in\MCS \}.
%\]
%%
%The proof is similar to that given in \cite{Ghari-IGPL-2021}. 

\begin{lemma}\label{lem:characterization of MCS-chi}
	For every formula $\chi$ we have:
	\[
	\MCS_\chi = \{\Gamma \cap  \Subf^+(\chi) \mid \Gamma\in\MCS \}.
	\]
\end{lemma}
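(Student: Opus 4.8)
The plan is to prove the two set inclusions separately, the key tool being the Lindenbaum Lemma (already invoked just above the statement) together with the observation that $\Subf^+(\chi)$, while not literally closed under negation, is closed under negation \emph{up to provable equivalence}.

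For the inclusion $\supseteq$, I would take a full maximal consistent set $\Gamma \in \MCS$ and show that $\overline{\Gamma} = \Gamma \cap \Subf^+(\chi)$ lies in $\MCS_\chi$. Consistency of $\overline{\Gamma}$ is immediate, since it is a subset of the consistent set $\Gamma$. For $\chi$-maximality, I would argue by contradiction: suppose some $\psi \in \Subf^+(\chi) \setminus \overline{\Gamma}$ makes $\overline{\Gamma} \cup \{\psi\}$ consistent. Since $\psi \in \Subf^+(\chi)$ but $\psi \notin \overline{\Gamma} = \Gamma \cap \Subf^+(\chi)$, we get $\psi \notin \Gamma$, hence $\neg\psi \in \Gamma$ by maximality of $\Gamma$. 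Here I would exploit the defining structure $\Subf^+(\chi) = A_\chi \cup \{\neg\psi \mid \psi \in A_\chi\}$ to produce a formula $\psi^* \in \Subf^+(\chi)$ with $\vdash_\CS \psi^* \liff \neg\psi$: namely $\psi^* = \neg\psi$ when $\psi \in A_\chi$, and $\psi^* = \psi'$ when $\psi = \neg\psi'$ with $\psi' \in A_\chi$. As $\Gamma$ is deductively closed and contains $\neg\psi$, it contains $\psi^*$, and since $\psi^* \in \Subf^+(\chi)$ we obtain $\psi^* \in \overline{\Gamma}$. But then $\{\psi, \psi^*\} \subseteq \overline{\Gamma} \cup \{\psi\}$ is inconsistent, contradicting the assumption.

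For the inclusion $\subseteq$, I would take $\Delta \in \MCS_\chi$. It is consistent, so by Lindenbaum it extends to a full maximal consistent set $\Gamma \supseteq \Delta$, and I claim $\Gamma \cap \Subf^+(\chi) = \Delta$. The inclusion $\Delta \subseteq \Gamma \cap \Subf^+(\chi)$ is clear from $\Delta \subseteq \Gamma$ and $\Delta \subseteq \Subf^+(\chi)$. For the reverse, if some $\psi \in \Gamma \cap \Subf^+(\chi)$ were not in $\Delta$, then $\Delta \cup \{\psi\}$ would be a proper extension of $\Delta$ by a formula of $\Subf^+(\chi)$, hence inconsistent by the $\chi$-maximality of $\Delta$; yet $\Delta \cup \{\psi\} \subseteq \Gamma$ is consistent, a contradiction. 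Thus $\Delta = \Gamma \cap \Subf^+(\chi)$ exhibits $\Delta$ as an element of the right-hand set.

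The main obstacle—and essentially the only nonroutine point—is the negation-closure step in the first inclusion: because $\Subf^+(\chi)$ is built from $A_\chi$ together with the negations of its members rather than being closed under negation outright, one cannot simply assert $\neg\psi \in \Subf^+(\chi)$. The case distinction on whether $\psi \in A_\chi$ or $\psi$ is itself a negation, combined with the fact that maximal consistent sets are closed under provable equivalence, is what resolves this cleanly; the remaining steps are routine applications of consistency and the Lindenbaum Lemma.
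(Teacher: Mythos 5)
Your proof is correct. The paper itself gives no argument for this lemma (it defers to the proof in the cited reference \cite{Ghari-IGPL-2021}), and what you have written is exactly the standard argument one would expect there: consistency of $\Gamma \cap \Subf^+(\chi)$ is inherited from $\Gamma$, the Lindenbaum Lemma handles the inclusion $\subseteq$, and the only delicate point --- that $\Subf^+(\chi)$ is closed under negation only up to provable equivalence, since $A_\chi \cup \{\neg\psi \mid \psi \in A_\chi\}$ contains $\psi'$ rather than $\neg\neg\psi'$ --- is correctly resolved by your case split on whether $\psi \in A_\chi$ or $\psi = \neg\psi'$ with $\psi' \in A_\chi$, together with deductive closure of maximal consistent sets. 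No gaps.
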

\begin{proof}
The proof is similar to that given in \cite{Ghari-IGPL-2021}. \qed
%	For the $\subseteq$ direction, let $X \in \MCS_\chi$. Then $X$ can be extended to a maximal consistent set $\Gamma \in \MCS$. Since $X$ is $\chi$-maximal and $\Gamma \cap  \Subf^+(\chi)$ is consistent and $X \subseteq \Gamma \cap  \Subf^+(\chi)$, we get $X = \Gamma \cap  \Subf^+(\chi)$.
%	
%	For the $\supseteq$ direction, it is sufficient to show that for each $\Gamma \in \MCS$ the set $\overline{\Gamma}$ is consistent and $\chi$-maximal. The consistency of $\overline{\Gamma}$ follows from the consistency of $\Gamma$. In order to show the $\chi$-maximality of $\overline{\Gamma}$, assume to obtain a contradiction that  $\overline{\Gamma}$ has a consistent proper extension $\Sigma \subseteq  \Subf^+(\chi)$. Let $\phi \in \Sigma \setminus \overline{\Gamma}$. Thus $\phi \not \in \Gamma$, and hence $\neg \phi \in \Gamma$. Since $\phi \in  \Subf^+(\chi)$ we can distinguish the following cases:
%	%
%	\begin{itemize}
%		\item $\phi \in A_\chi$. In this case $\neg \phi \in  \Subf^+(\chi)$, and hence $\neg \phi \in \overline{\Gamma} \subseteq \Sigma$, which contradicts $\phi \in \Sigma$.
%		
%		\item $\phi = \neg \psi$ and $\psi \in A_\chi$. In this case $\neg \phi = \neg \neg \psi \in \Gamma$, and hence $\psi \in \Gamma$. Thus $\psi \in \overline{\Gamma} \subseteq \Sigma$, which contradicts $\neg \psi \in \Sigma$.  
%	\end{itemize}
\end{proof}

\begin{definition}
	The relation $R_\lnext$ on $\MCS_\chi$ is defined as follows:
	\[
	X R_\lnext Y 
	\text{ if{f} }
	\text{there exists $\Gamma, \Delta \in \MCS$ such that $X = \overline{\Gamma}$ and $Y = \overline{\Delta}$ and } \{ \phi \ |\  \lnext \phi \in \Gamma\} \subseteq \Delta.
	\]
	The notation $\RO{X}{Y}{\overline{\Gamma}}{\overline{\Delta}}$ means that $X, Y \in \MCS_\chi$, $\Gamma, \Delta \in \MCS$, $X = \overline{\Gamma}$, $Y = \overline{\Delta}$ and $\{ \phi \ |\  \lnext \phi \in \Gamma\} \subseteq \Delta$. Note that for $\Gamma, \Delta \in \MCS$ if $\{ \phi \ |\  \lnext \phi \in \Gamma\} \subseteq \Delta$, then $\overline{\Gamma} R_\lnext \overline{\Delta}$.
\end{definition}

\begin{definition}\label{Def:acceptable sequence}
	
	An infinite sequence $(X_0, X_1, \ldots)$ of elements of $\MCS_\chi$ is called \textit{acceptable} if 
	\begin{enumerate}
		\item $X_n R_\lnext X_{n+1}$ for all $n \geq 0$, and
		\item for all $n$, if $\phi \luntil \psi \in X_n$, then there exists $m \geq n$ such that $\psi \in X_m$ and $\phi \in X_k$ for all $k$ with $n \leq k <m$.
		\item $\lwprevious \bot \in X_0$. 
	\end{enumerate}
\end{definition}

Let $(X_0, X_1, \ldots)$ be an acceptable sequence of elements of $\MCS_\chi$. Then, there exists $\Gamma_{0}, \Gamma_{1}, \ldots \in \MCS$ and $\Delta_{1}, \Delta_{2}, \ldots \in \MCS$ such that
$$\RO{X_{0}}{X_{1}}{\overline{\Gamma_{0}}}{\overline{\Delta_{1}}}, \RO{X_{1}}{X_{2}}{\overline{\Gamma_{1}}}{\overline{\Delta_{2}}},\RO{X_{2}}{X_{3}}{\overline{\Gamma_{2}}}{\overline{\Delta_{3}}}, \ldots.$$

Note that for each $n \geq 0$ we have $\RO{X_{n}}{X_{n+1}}{\overline{\Gamma_{n}}}{\overline{\Delta_{n+1}}}$, and thus $X_0 = \overline{\Gamma_{0}}$ and $X_i = \overline{\Gamma_{i}} = \overline{\Delta_{i}}$ for all $i > 0$. %The sets $\Gamma_{0}, \Gamma_{1}, \ldots$ and $\Delta_{1}, \Delta_{2}, \ldots$ are called witnesses of the acceptable sequence $(X_0, X_1, \ldots)$. A set $\overline{\Sigma} \in \MCS_\chi$ is a witness if it is a witness of some acceptable sequence. The set of all witnesses are denoted by $\Witness$. % Let \\
%
%$
%Witness = \lset{ \overline{\Sigma} \in \MCS_\chi \mid  \mbox{ there is an acceptable sequence } (X_0, X_1, \ldots) \mbox{ s.t. } \RO{X_{n}}{X_{n+1}}{\overline{\Gamma_{n}}}{\overline{\Delta_{n+1}}} \mbox{ for all } n \geq 0 \mbox{ and } (\exists i \geq 0) (\overline{\Sigma} = \overline{\Gamma_i} \mbox{ or } \overline{\Sigma} = \overline{\Delta_{i+1}}) }
%$

The following results are proved in \cite{Ghari-IGPL-2021,HvdMV04}, so we omit the proofs here.

\begin{lemma}\label{lem:R-next}
	Let $\RO{X}{Y}{\overline{\Gamma}}{\overline{\Delta}}$.
	\begin{enumerate}
		\item  $\lnext \phi \in \Gamma$ iff $\phi \in \Delta$.
		
		\item $\phi \in \Gamma$ iff $\lsprevious \phi \in \Delta$.
		
		\item $\phi \in \Gamma$ iff $\lwprevious \phi \in \Delta$.
	\end{enumerate}
\end{lemma}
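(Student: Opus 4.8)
The plan is to prove the three biconditionals in the stated order, leaning on the defining inclusion $\{\psi \mid \lnext\psi \in \Gamma\} \subseteq \Delta$ of $R_\lnext$ together with the functionality axiom \funax and the future/past interaction axiom \fpax and \swprevax. Throughout I would use that $\Gamma$ and $\Delta$ are maximal consistent, so each is deductively closed, closed under \mprule, negation-complete, and contains every theorem. The point of the argument is that the relation $R_\lnext$ is defined by a \emph{one-sided} inclusion, and the axioms are exactly what is needed to turn that inclusion into the two-sided equivalences claimed.

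For item~1 the left-to-right direction is immediate: if $\lnext\phi \in \Gamma$ then $\phi$ lies in $\{\psi \mid \lnext\psi \in \Gamma\} \subseteq \Delta$. For the converse I would argue contrapositively. If $\lnext\phi \notin \Gamma$, then $\neg\lnext\phi \in \Gamma$ by negation-completeness, and \funax, i.e. $\lnext\neg\phi \liff \neg\lnext\phi$, yields $\lnext\neg\phi \in \Gamma$; the defining inclusion then puts $\neg\phi \in \Delta$, so $\phi \notin \Delta$ by consistency. Thus functionality of $\lnext$ is precisely what upgrades the inclusion defining $R_\lnext$ to a biconditional.

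For item~2 the forward direction uses \fpax: from $\phi \in \Gamma$ we get $\lnext\lsprevious\phi \in \Gamma$, whence $\lsprevious\phi \in \Delta$ by item~1 (taking $\psi = \lsprevious\phi$). For the backward direction I would again use contraposition: if $\phi \notin \Gamma$ then $\neg\phi \in \Gamma$, so \fpax gives $\lnext\lsprevious\neg\phi \in \Gamma$ and item~1 gives $\lsprevious\neg\phi \in \Delta$; applying \swprevax, namely $\lsprevious\chi \limplies \lwprevious\chi$, with $\chi = \neg\phi$ yields $\lwprevious\neg\phi \in \Delta$. Since $\lsprevious\phi$ abbreviates $\neg\lwprevious\neg\phi$, consistency of $\Delta$ forces $\lsprevious\phi \notin \Delta$, as required.

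Finally, item~3 follows from item~2 together with \swprevax. Left to right: from $\phi \in \Gamma$, item~2 gives $\lsprevious\phi \in \Delta$ and \swprevax gives $\lwprevious\phi \in \Delta$. Right to left (contrapositive): if $\phi \notin \Gamma$ then $\neg\phi \in \Gamma$, so item~2 applied to $\neg\phi$ gives $\lsprevious\neg\phi \in \Delta$; unfolding the abbreviation and using that $\lwprevious$ respects provable equivalence (rule \prevRMrule applied to $\neg\neg\phi \liff \phi$), $\lsprevious\neg\phi$ is provably equivalent to $\neg\lwprevious\phi$, so $\lwprevious\phi \notin \Delta$. I do not expect a serious obstacle: the work is routine maximal-consistent-set bookkeeping. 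The only point demanding care is the handling of the \emph{defined} operator $\lsprevious\phi := \neg\lwprevious\neg\phi$ in the backward directions of items~2 and~3, where one must track the double negations and invoke closure of $\Delta$ under provable equivalence (via \prevRMrule) rather than treat $\lsprevious$ as primitive.
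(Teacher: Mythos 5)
Your proof is correct: the one-sided inclusion defining $R_\lnext$ is upgraded to the biconditional in item~1 via \funax{} and negation-completeness, and items~2 and~3 then follow from \fpax, \swprevax{} and careful unfolding of $\lsprevious\phi := \neg\lwprevious\neg\phi$, exactly as one would expect. The paper itself omits the proof (deferring to the cited references), and your argument is the standard maximal-consistent-set computation that those references carry out, so there is nothing to flag.
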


%\begin{lemma}\label{lem: MCS temporal properties}
%	Let $(X_0, X_1, \ldots)$ be an acceptable sequence of elements of $\MCS_\chi$, let $\RO{X_j}{X_{j+1}}{\overline{\Gamma_{j}}}{\overline{\Gamma_{j+1}}}$, for all $0 \leq j$, and let $n \geq 0$. Then, $\phi \lsince \psi \in \Gamma_n$ iff there exists $m \leq n$ such that $\psi \in \Gamma_m$ and $\phi \in \Gamma_k$ for all $k$ with $m < k \leq n$.
%	
%\end{lemma}

\begin{theorem}\label{thm: acceptable sequence}
	For every $X \in \MCS_\chi$, there is an acceptable sequence containing $X$. 
\end{theorem}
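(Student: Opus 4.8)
The plan is to build the infinite sequence in two stages: a finite \emph{backward} block that grounds the sequence at time $0$, followed by an infinite \emph{forward} part that discharges every until-eventuality. Throughout I would use Lemma~\ref{lem:R-next} to transfer membership of $\lnext$-, $\lsprevious$- and $\lwprevious$-formulas across $R_\lnext$-steps, and I would rely on two auxiliary ``fulfillment'' lemmas, proved by the contradiction-via-induction method already used for the analogous results in \cite{Ghari-IGPL-2021,HvdMV04}: first, a \emph{forward until-lemma}, namely if $\phi \luntil \psi \in Z$ then there is a finite chain $Z = Z_0\, R_\lnext\, Z_1\, R_\lnext \cdots R_\lnext\, Z_m$ in $\MCS_\chi$ with $\psi \in Z_m$ and $\phi \in Z_j$ for all $j<m$; and second, a \emph{backward initial-lemma}, namely from any $X \in \MCS_\chi$ there is a finite chain $W_0\, R_\lnext\, W_1\, R_\lnext \cdots R_\lnext\, W_l = X$ with $\lwprevious \lfalse \in W_0$. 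For the latter the key input is that $\top \lsince \lwprevious \lfalse$ is a theorem (it is exactly \initialax, i.e.\ $\lonce \lwprevious \lfalse$), and since $\Subf(\top \lsince \lwprevious \lfalse) \subseteq A_\chi$ this formula lies in every member of $\MCS_\chi$; the termination of the backward search then follows by a contradiction argument using \sofarindax.

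Backward stage. Applying the backward initial-lemma to the given $X$ yields states $X_0, \ldots, X_l$ with $X_l = X$, each $X_j\, R_\lnext\, X_{j+1}$, and $\lwprevious \lfalse \in X_0$, which secures condition (1) on this block and condition (3). The remaining concern is condition (2) for until-formulas appearing at positions $0, \ldots, l$: any $\phi \luntil \psi \in X_j$ not already fulfilled inside the block must be fulfilled later. By \utwoax, as long as such a formula stays unfulfilled it forces $\phi$ into the current state and propagates $\phi \luntil \psi$ to the $R_\lnext$-successor; hence each still-pending eventuality survives up to $X = X_l$ and can be handed to the forward stage as an outstanding obligation, of which there are only finitely many since $\Subf^+(\chi)$ is finite.

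Forward stage. Starting from $X_l = X$, I would extend the sequence by a round-robin scheduler over the finitely many until-formulas occurring in $\Subf^+(\chi)$. Keeping a list of currently pending obligations (those inherited from the backward block together with any $\phi \luntil \psi$ that appears at a newly added state), at each round I pick the oldest pending obligation $\phi \luntil \psi$ and, applying the forward until-lemma at the current state (which still contains $\phi \luntil \psi$ by the propagation just noted), append a finite chain reaching a state with $\psi$ while keeping $\phi$ in all intermediate states; this discharges that obligation and respects condition (1). Because the pool of possible until-formulas is finite and each is realized in finitely many steps, every obligation---whether it first arises in the backward block or at any forward index---is discharged after finitely many rounds, so condition (2) holds at every $n$. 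When nothing is pending I append an arbitrary $R_\lnext$-successor (these exist by seriality of $R_\lnext$, a consequence of \funax), keeping the sequence infinite.

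The main obstacle is organizing the forward scheduler so that condition (2) holds \emph{simultaneously} at all indices---the finitely many positions of the backward block and the unboundedly many positions where fresh eventualities appear. The delicate point is to verify that an eventuality pending at position $n$ genuinely persists, via \utwoax, all the way to the round in which the scheduler targets it, so that the ``$\phi$ throughout'' requirement of condition (2) is met on the whole interval $[n,m)$ and not merely from the round's starting point. Establishing the two fulfillment lemmas---in particular the termination of the backward witness search---is the other place where genuine work is required, but both follow the standard induction-axiom contradiction pattern of the cited completeness proofs.
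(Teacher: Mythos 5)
Your construction is correct and is essentially the standard one: the paper omits this proof, deferring to \cite{Ghari-IGPL-2021,HvdMV04}, and those sources build the acceptable sequence in exactly the way you describe---a finite backward chain to a $\lwprevious\bot$-state, obtained from \initialax{} together with the fact that $\top\lsince\lwprevious\bot$ lies in $\Subf^+(\chi)$ by construction, followed by a forward extension that discharges the finitely many until-obligations via a fulfillment lemma and the persistence supplied by \utwoax. I see no gap worth flagging.
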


\begin{lemma}\label{lem: MCS temporal properties}
	Let $(X_0, X_1, \ldots)$ be an acceptable sequence of elements of $\MCS_\chi$, let $n \geq 0$, and let $\RO{X_{i-1}}{X_{i}}{\overline{\Gamma_{i-1}}}{\overline{\Delta_{i}}}$, for  $i \geq 1$.
%	\begin{enumerate}
		
%		\item 
		\begin{enumerate}
			\item If $\phi \lsince \psi \in X_n$, then there exists $m \leq n$ such that $\psi \in X_m$ and $\phi \in X_k$ for all $k$ with $m < k \leq n$.
			
			\item If $\phi \lsince \psi \in \Subf^+(\chi)$ and there exists $m \leq n$ such that $\psi \in X_m$ and $\phi \in X_k$ for all $k$ with $m < k \leq n$, then $\phi \lsince \psi \in X_n$.
		\end{enumerate}
		
%		\item If $\leventually \phi \in  X_n$, then $\phi \in  X_m$ for some $m \geq n$. 
%		
%		\item 
%		\begin{enumerate}
%			\item If $\lalways \phi \in X_n$, then $\phi \in X_m$ for all $m \geq n$. 
%			
%			\item If $\lalways \phi \in \Subf^+(\chi)$ and $\phi \in X_m$ for all $m \geq n$, then $\lalways \phi \in X_n$.
%		\end{enumerate}
%		
%		\item If $\lonce \phi \in X_n$, then $\phi \in X_{m}$ for some $m \leq n$. 
%		
%		\item 
%		\begin{enumerate}
%			\item If $\lsofar \phi \in X_n$, then $\phi \in X_m$ for all $m \leq n$.
%			
%			\item If $\lsofar \phi \in \Subf^+(\chi)$ and $\phi \in X_m$ for all $m \leq n$, then $\lsofar \phi \in X_n$.
%		\end{enumerate}
%		
%		
%		%		\item  $\lsofar \phi \in \Gamma_n$ iff $\phi \in \Gamma_m$ for all $m \leq n$. Wrong :(
%		
%		\item If $n > 0$ and $\lwprevious \phi \in X_n$, then $\phi \in X_{n-1}$.
%		
%		\item If $n > 0$ and $\lsprevious \phi \in X_n$, then $\phi \in X_{n-1}$.
%	\end{enumerate}
	
\end{lemma}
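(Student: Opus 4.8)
The plan is to prove both items about the since-operator by exploiting the characterization of $\lsince$ given by the axiom \stwoax{} together with the local consistency of the sets $X_n$ in an acceptable sequence. The two items are dual to the until-operator facts built into the definition of acceptable sequences, so the underlying strategy mirrors the standard Lindenbaum-style bookkeeping for temporal completeness.

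For item 1, I would argue by strong induction on $n$. The key tool is the fixed-point axiom \stwoax, namely $\phi \lsince \psi \liff \psi \lor (\phi \land \lsprevious(\phi \lsince \psi))$, which (after taking the intersection with $\Subf^+(\chi)$) tells us that whenever $\phi \lsince \psi \in X_n$, either $\psi \in X_n$ or else $\phi \in X_n$ and $\lsprevious(\phi \lsince \psi) \in X_n$. In the first case we simply take $m = n$, and the ``$\phi \in X_k$ for $m < k \leq n$'' condition is vacuous. In the second case, the base case $n = 0$ cannot occur: condition 3 of acceptability gives $\lwprevious \bot \in X_0$, which together with \swprevax{} and the truth behaviour at time $0$ rules out $\lsprevious(\phi\lsince\psi)$ being consistent at $X_0$. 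For $n > 0$, I would use Lemma~\ref{lem:R-next}(2), which states $\phi \in \Gamma$ iff $\lsprevious \phi \in \Delta$; applied to $\RO{X_{n-1}}{X_n}{\overline{\Gamma_{n-1}}}{\overline{\Delta_n}}$ this transfers $\lsprevious(\phi \lsince \psi) \in X_n$ back to $\phi \lsince \psi \in X_{n-1}$, and the induction hypothesis furnishes the witness $m \leq n-1$ with $\psi \in X_m$ and $\phi \in X_k$ for $m < k \leq n-1$. Combining with $\phi \in X_n$ extends the run of $\phi$'s up through $k = n$, giving the desired witness.

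For item 2, I would argue by induction on the difference $n - m$, again using \stwoax{} but now in the other direction, to show membership $\phi \lsince \psi \in X_n$ follows from the semantic-looking hypothesis. The base case is $m = n$: here $\psi \in X_n$, and since $\phi \lsince \psi \in \Subf^+(\chi)$ and $X_n$ is $\chi$-maximal, the axiom \stwoax{} forces $\phi \lsince \psi \in X_n$ from the disjunct $\psi$. For the inductive step with $m < n$, the hypothesis gives $\phi \in X_n$ and also guarantees the witness holds for $n-1$ (since $m \leq n-1$ and the condition on $k$ persists), so by the induction hypothesis $\phi \lsince \psi \in X_{n-1}$, provided $\phi \lsince \psi \in \Subf^+(\chi)$ — which I must be careful to maintain as the relevant subformulas all lie in $\Subf^+(\chi)$ by closure under subformulas. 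Then Lemma~\ref{lem:R-next}(2) pushes this forward as $\lsprevious(\phi \lsince \psi) \in X_n$, and combined with $\phi \in X_n$ the axiom \stwoax{} yields $\phi \lsince \psi \in X_n$.

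The main obstacle I anticipate is the careful management of the restriction to $\Subf^+(\chi)$: because the $X_n$ are only $\chi$-maximal rather than fully maximal consistent, I must repeatedly verify that the formulas I invoke (such as $\lsprevious(\phi \lsince \psi)$, $\psi$, and $\phi$) actually lie in $\Subf^+(\chi)$ before concluding that their membership or non-membership in $X_n$ is determined. The definition of $\Subf^+(\chi)$ via $A_\chi$, which explicitly throws in $\Subf(\top \lsince \lwprevious \bot)$, is designed precisely so that the needed since-subformulas and the initial-time machinery are available, and Lemma~\ref{lem:characterization of MCS-chi} lets me pass between $\chi$-maximal sets and genuine maximal consistent sets whenever I need the full strength of $\vdash_\CS$. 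Apart from this closure bookkeeping, the induction itself is routine temporal reasoning driven by \stwoax{} and Lemma~\ref{lem:R-next}(2).
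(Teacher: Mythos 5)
Your proposal is correct and follows the standard route — the paper itself omits this proof (deferring to the cited references), but your argument via \stwoax{} and Lemma~\ref{lem:R-next}(2) is exactly the method the paper uses for the analogous $\lsince$ case of the Truth Lemma (Lemma~\ref{lem:Truth Lemma F-interpreted}). The only point to state more carefully is that $\lsprevious(\phi \lsince \psi)$ is in general \emph{not} an element of $\Subf^+(\chi)$ (that set is not closed under $\lsprevious$), so its membership must be tracked in the full maximal consistent witnesses $\Gamma_{n-1}, \Delta_n$ rather than in $X_n$ itself — which is precisely what the hypothesis $\RO{X_{i-1}}{X_{i}}{\overline{\Gamma_{i-1}}}{\overline{\Delta_{i}}}$ and Lemma~\ref{lem:characterization of MCS-chi} make possible, and what your appeal to those facts already accomplishes.
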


%-----------------------------------------------------------------------

For the completeness proof we employ the canonical model construction.

\begin{definition}\label{def:canonical interpreted systems for LPLTL}
	The $\chi$-canonical F-interpreted system 
	$
	\system = (S, \runs,  \accrel_\agent, \evidence_\agent, \accrelO_\agent, \evidenceO_\agent,  \valuation)_{\agent \in \Ag}
	$ 
	for $\JTO_\CS$ is defined as follows:
	\begin{enumerate}[itemsep=2pt,parsep=2pt]
		\item $S \colonequals \MCS_\chi$;
		
		\item $\runs$ consists of all mappings $r: \N \to \MCS_\chi$ such that
		$(r(0), r(1), \ldots)$ is an acceptable sequence;
		%
		
		%
		%		\item $X R_\agent Y$ if{f} 
		%		$\{ \phi \ |\  X \vdash \jbox{t}_i \phi   \text{ for some $t$}\} \subseteq \{ \phi \ |\ Y \vdash \phi\}$; 
		%		%
		%		\item $\evidence_i (X, t) := \{ \phi \ |\  X \vdash \jbox{t}_i \phi \}$;
		
		\item $X \accrel_\agent Y$ if{f} for all $\Delta \in \MCS$ such that $Y = \overline{\Delta}$ there exists $\Gamma \in \MCS$ such that $X = \overline{\Gamma}$ and  
		$\{ \phi \mid  \jbox{t}_\agent \phi \in \Gamma,   \text{ for some $t$}\} \subseteq \Delta$, for $X,Y \in \MCS_\chi$; 

		\item $\evidence_\agent (X, t) \colonequals \{ \phi \mid \jbox{t}_\agent \phi \in \bigcap \{ \Gamma \in \MCS \mid X = \overline{\Gamma} \} \}$, for $X \in \MCS_\chi$ and $t \in \Terms^\lepistemic$; 
		\item $X \accrelO_\agent Y$ if{f} there are $\Gamma, \Delta \in \MCS$ such that $X = \overline{\Gamma}$, $Y = \overline{\Delta}$, and 
		$\{ \phi \mid  \jboxO{t}{\agent} \phi \in \Gamma,   \text{ for some $t$}\} \subseteq \Delta$, for $X,Y \in \MCS_\chi$; 

		\item $\evidenceO_\agent (X, t) \colonequals \{ \phi \mid \jboxO{t}{\agent} \phi \in \bigcap \{ \Gamma \in \MCS \mid X = \overline{\Gamma} \} \}$, for $X \in \MCS_\chi$ and $t \in \Terms^\lobligatory$;
		\item $\valuation(X) \colonequals \Prop\cap X$, for $X \in \MCS_\chi$.
	\end{enumerate}
\end{definition}

Note that $\MCS_\chi = \{ r(n) \ |\ r\in \runs, n \in \N \}$, and hence $S = \Im(\runs)$. %The proof of the following result is given in \cite{BucheliGhariStuder2017}. 

\begin{lemma}\label{lem: canonical interpreted model is a model}
	The $\chi$-canonical F-interpreted system 
	$
	\system = (S, \runs,  \accrel_\agent, \evidence_\agent, \accrelO_\agent, \evidenceO_\agent,  \valuation)_{\agent \in \Ag}
	$ 
	for $\JTO_\CS$ is an F-interpreted system for $\JTO_\CS$.
\end{lemma}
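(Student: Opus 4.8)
The plan is to check, one clause at a time, that the tuple of Definition~\ref{def:canonical interpreted systems for LPLTL} meets every requirement imposed by Definitions~\ref{def: F-frame-run-system}, \ref{def:evidence function}, and \ref{def:quasi-interpreted sysytems}. Non-emptiness of $S=\MCS_\chi$ and of $\runs$ is immediate from Lemma~\ref{lem:characterization of MCS-chi} (every $X\in\MCS_\chi$ equals $\overline\Gamma$ for some $\Gamma\in\MCS$) together with Theorem~\ref{thm: acceptable sequence}, which places each $X$ in an acceptable sequence and hence in the range of some run; this also gives $S=\Im(\runs)$, so $\valuation$ is defined on all of $\Im(\runs)$. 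Throughout I will lean on the single fact that each full $\Gamma\in\MCS$ contains all theorems and is closed under modus ponens, so that the axioms \refax, \posintax, \appax, \appOax, \nocax, \shiftrefax{} transfer directly into membership statements.

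For $\accrel_\agent$ I first verify reflexivity: given $X$ and any $\Delta$ with $X=\overline\Delta$, take $\Gamma:=\Delta$; factivity \refax{} turns $\jbox{t}_\agent\phi\in\Delta$ into $\phi\in\Delta$, so $\{\phi\mid\jbox{t}_\agent\phi\in\Gamma\}\subseteq\Delta$ and $X\accrel_\agent X$. Transitivity is where positive introspection enters: if $X\accrel_\agent Y\accrel_\agent Z$ and $\Theta$ projects to $Z$, I use $Y\accrel_\agent Z$ to extract $\Delta$ projecting to $Y$, then $X\accrel_\agent Y$ to extract $\Gamma$ projecting to $X$; for $\jbox{t}_\agent\phi\in\Gamma$, \posintax{} yields $\jbox{\tinspect t}_\agent\jbox{t}_\agent\phi\in\Gamma$, hence $\jbox{t}_\agent\phi\in\Delta$ and then $\phi\in\Theta$, giving $X\accrel_\agent Z$. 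The same $\tinspect$-step powers monotonicity of $\evidence_\agent$: if $\jbox{t}_\agent\phi$ lies in every $\Gamma$ with $X=\overline\Gamma$ and $X\accrel_\agent Y$, then for each $\Delta$ with $Y=\overline\Delta$ the witnessing $\Gamma$ satisfies $\jbox{\tinspect t}_\agent\jbox{t}_\agent\phi\in\Gamma$, pushing $\jbox{t}_\agent\phi$ into $\Delta$.

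Shift reflexivity of $\accrelO_\agent$ is the step I expect to require the most care, precisely because $\accrelO_\agent$ is reflexive nowhere in general. Suppose $X\accrelO_\agent Y$, witnessed by $\Gamma,\Delta$ with $X=\overline\Gamma$, $Y=\overline\Delta$, and $\{\phi\mid\jboxO{t}{\agent}\phi\in\Gamma\}\subseteq\Delta$. To obtain $Y\accrelO_\agent Y$ I take both witnesses to be $\Delta$ and must show $\{\psi\mid\jboxO{s}{\agent}\psi\in\Delta\}\subseteq\Delta$. Here I exploit the \emph{predecessor} $\Gamma$: obligated factivity \shiftrefax{} gives $\jboxO{\tref s}{\agent}(\jboxO{s}{\agent}\psi\limplies\psi)\in\Gamma$, whence $(\jboxO{s}{\agent}\psi\limplies\psi)\in\Delta$, and combining this with $\jboxO{s}{\agent}\psi\in\Delta$ yields $\psi\in\Delta$. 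Thus factivity holds \emph{at} every $\accrelO_\agent$-successor even though it fails at arbitrary states, which is exactly the semantic content of shift reflexivity.

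The remaining evidence-function conditions are then routine maximal-consistent-set computations, uniform in the observation that $\evidence_\agent(X,t)$ and $\evidenceO_\agent(X,t)$ are read off the intersection of all full $\Gamma$ projecting to $X$. Constant specification holds because any $\jbox{t}_\agent\phi\in\CS^\lepistemic$ (resp. $\jboxO{t}{\agent}\phi\in\CS^\lobligatory$) is derivable via \iteratedconstnecrule{} (resp. \iteratedconstnecruleO) and so lies in every $\MCS$; application and application-$\lobligatory$ follow from \appax{} and \appOax{} applied inside each $\Gamma$; positive introspection of $\evidence_\agent$ from \posintax; obligated factivity of $\evidenceO_\agent$ because \shiftrefax{} is itself a theorem; and consistency of $\evidenceO_\agent$ from \nocax, since $\jboxO{t}{\agent}\phi\in\Gamma$ forces $\neg\jboxO{t}{\agent}\neg\phi\in\Gamma$ and hence $\jboxO{t}{\agent}\neg\phi\notin\Gamma$. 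The one genuinely structural difficulty, common to several clauses, is keeping the $\forall\Delta\,\exists\Gamma$ shape of $\accrel_\agent$ aligned with the $\exists\Gamma,\Delta$ shape of $\accrelO_\agent$ while the states are only $\Subf^+(\chi)$-restrictions of full MCSs; once the $\tinspect$- and $\tref$-axioms are invoked at the correctly chosen representative, every clause closes.
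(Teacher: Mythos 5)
Your proposal is correct and follows essentially the same route as the paper's proof: shift reflexivity of $\accrelO_\agent$ via the predecessor $\Gamma$ and axiom \shiftrefax, monotonicity of $\evidence_\agent$ via $\tinspect$ and \posintax, and consistency of $\evidenceO_\agent$ via \nocax{} are exactly the three points the paper works out in detail. The only difference is that you additionally spell out reflexivity and transitivity of $\accrel_\agent$ (via \refax{} and \posintax) and the remaining routine clauses, which the paper dismisses as easy.
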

\begin{proof}
	It is not difficult to show that each $R_\agent$ is reflexive and transitive. Let us show that $\accrelO_\agent$ is shift reflexive. Suppose that $X \accrelO_\agent Y$. Then, there are $\Gamma, \Delta \in \MCS$ such that $X = \overline{\Gamma}$, $Y = \overline{\Delta}$, and 
	$\{ \phi \mid  \jboxO{t}{\agent} \phi \in \Gamma   \text{ for some $t$}\} \subseteq \Delta$. It suffices to show that $\{ \phi \mid  \jboxO{t}{\agent} \phi \in \Delta,  \text{ for some $t$}\} \subseteq \Delta$. Suppose that $\jboxOAgent{t} \phi \in \Delta$, for some $t \in \Terms^\lobligatory$.  Since $\jboxOAgent{\tref t} (\jboxOAgent{t} \phi \limplies \phi) \in \Gamma$, we get $\jboxOAgent{t} \phi \limplies \phi \in \Delta$, and hence $\phi \in \Delta$. 
	
	We now have to show that evidence functions satisfy the conditions of Definition \ref{def:evidence function}. We only show the $(monotonicity)$ condition for $\evidence_\agent$ and the $(consistency)$ condition for $\evidenceO_\agent$. The proof for other conditions are easy.
	
	For $(monotonicity)$, suppose that $X, Y \in S$, $X R_\agent Y$, and $\phi \in \evidence_\agent (X,t)$. We have to show that $\phi \in \evidence_\agent (Y,t)$. Let $Y = \overline{\Delta}$, for an arbitrary $\Delta \in \MCS$. We have to show that $\jbox{t}_\agent \phi \in \Delta$. Since $X R_\agent Y$, there exists $\Gamma \in \MCS$ such that $X = \overline{\Gamma}$ and  
	$\{ \phi \mid  \jbox{t}_\agent \phi \in \Gamma   \text{ for some $t$}\} \subseteq \Delta$. Since $\phi \in \evidence_\agent (X,t)$, we get $\jbox{t}_\agent \phi \in \Gamma$, and thus $\jbox{!t}_\agent \jbox{t}_\agent \phi \in \Gamma$. Therefore, $\jbox{t}_\agent \phi \in \Delta$, as desired.
	
	For $(consistency)$, suppose that $\phi \in \evidenceO_\agent (X,t)$, and assume to obtain a contradiction that  $\neg \phi \in \evidenceO_\agent (X,t)$. Thus, for every $\Gamma \in \MCS$ such that $X = \overline{\Gamma}$ we have $\jboxO{t}{\agent} \phi \in \Gamma$ and $\jboxO{t}{\agent} \neg \phi \in \Gamma$. This contradicts the fact that $\neg (\jboxO{t}{\agent} \phi \wedge \jboxO{t}{\agent} \neg \phi) \in \Gamma$. \qed
\end{proof}

\begin{lemma}[Truth Lemma]\label{lem:Truth Lemma F-interpreted}
	Let 
	$
	\system = (S, \runs,  \accrel_\agent, \evidence_\agent, \accrelO_\agent, \evidenceO_\agent,  \valuation)_{\agent \in \Ag}
	$ 
	be the $\chi$-canonical F-interpreted system for $\JTO_\CS$. For every formula $\psi \in  \Subf^+(\chi)$, every run~$r$ in $\runs$, and every $n \in \N$ we have:
	\[
	(\system, r, n) \models \psi 
	\quad\text{if{f}}\quad
	\psi \in r(n).
	\]
\end{lemma}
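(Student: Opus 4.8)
The plan is to prove the Truth Lemma by induction on the structure of the formula $\psi \in \Subf^+(\chi)$. The base cases $\psi = p$ and $\psi = \bot$ follow immediately from the definition of the valuation $\valuation(X) = \Prop \cap X$ together with the consistency of each $r(n)$, and the propositional connective case $\psi = \phi_1 \limplies \phi_2$ is routine using maximal consistency within $\Subf^+(\chi)$ (noting that $\Subf^+(\chi)$ is closed under subformulas and single negations, so all relevant subformulas lie in the set). For the temporal one-step modalities $\lnext$ and $\lwprevious$, I would invoke Lemma \ref{lem:R-next}: since $(r(0), r(1), \ldots)$ is an acceptable sequence we have $r(n) R_\lnext r(n+1)$ for all $n$, so $\lnext \phi \in r(n)$ iff $\phi \in r(n+1)$, and $\lwprevious \phi \in r(n)$ for $n \geq 1$ iff $\phi \in r(n-1)$; the $n=0$ case is handled by clause 3 of acceptability ($\lwprevious \bot \in X_0$), which forces the truth condition $(\system, r, 0) \entails \lwprevious \phi$ to hold vacuously and simultaneously forces $\lwprevious \phi \in r(0)$ via maximal consistency.

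The two genuinely substantive temporal cases are $\luntil$ and $\lsince$. For $\psi = \phi_1 \luntil \phi_2$, the forward direction (membership implies truth) uses clause 2 of acceptability directly: if $\phi_1 \luntil \phi_2 \in X_n$ then there is $m \geq n$ witnessing the until, and the induction hypothesis converts the memberships $\phi_2 \in X_m$ and $\phi_1 \in X_k$ into truth at those points. The converse direction (truth implies membership) requires showing that if the semantic until-condition holds then the formula is in $X_n$; I would establish this by a downward induction on the witness $m$ using the fixed-point axiom \utwoax, $\phi_1 \luntil \phi_2 \liff \phi_2 \lor (\phi_1 \land \lnext(\phi_1 \luntil \phi_2))$, pushing membership backward one step at a time via Lemma \ref{lem:R-next}(1), while being careful that all the relevant until-subformulas remain inside $\Subf^+(\chi)$. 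The $\lsince$ case is entirely analogous but I can shortcut it by appealing to Lemma \ref{lem: MCS temporal properties}, whose two parts are precisely the membership$\leftrightarrow$(semantic since-condition) equivalences for formulas in $\Subf^+(\chi)$ along an acceptable sequence; combined with the induction hypothesis this closes the case immediately.

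For the justification cases $\psi = \jboxAgent{t}\phi$ and $\psi = \jboxOAgent{t}\phi$, I would unfold the truth condition in its set-theoretic form, namely $\phi \in \evidence_\agent(r(n), t)$ together with $\accrel_\agent(r(n)) \subseteq \truthset{\phi}{\system}$, and likewise for the obligatory operator. The evidence-membership component is matched to syntactic membership directly through the canonical definitions of $\evidence_\agent$ and $\evidenceO_\agent$ as intersections over the $\Gamma \in \MCS$ with $X = \overline{\Gamma}$, while the accessibility component is where the definitions of $\accrel_\agent$ and $\accrelO_\agent$ from Definition \ref{def:canonical interpreted systems for LPLTL} do the work: one direction uses that the canonical accessibility relation transports $\{\phi \mid \jboxAgent{t}\phi \in \Gamma\}$ into every accessible $\Delta$, and the other direction builds, from $\jboxAgent{t}\phi \notin r(n)$, an accessible point falsifying $\phi$ by a Lindenbaum-style extension of the set $\{\lneg\phi\} \cup \{\psi \mid \jboxAgent{t}\psi \in r(n) \text{ for some } t\}$ and intersecting back down to $\Subf^+(\chi)$ via Lemma \ref{lem:characterization of MCS-chi}. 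I expect the main obstacle to be this existential (``falsifying witness'') direction of the modal cases, since it requires both the correct coherence between the quotiented states $\MCS_\chi$ and the full maximal consistent sets $\MCS$ (so that an accessible witness in $\MCS_\chi$ actually arises from some run $r'$ and point $n'$ with $r(n) \accrel_\agent r'(n')$) and a careful verification that the constructed witness lies on an \emph{acceptable} sequence, which is exactly what Theorem \ref{thm: acceptable sequence} guarantees.
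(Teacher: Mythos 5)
Your treatment of the temporal cases tracks the paper's proof closely: the $\lnext$/$\lwprevious$ cases via Lemma \ref{lem:R-next} and clause 3 of acceptability, the membership-to-truth direction of $\luntil$ via clause 2 of acceptability, the converse by induction on the witness $m$ using \utwoax{} and Lemma \ref{lem:R-next}, and $\lsince$ via Lemma \ref{lem: MCS temporal properties} --- all of this is essentially what the paper does. The problem is the justification cases, where you have imported the normal-modal-logic ``existence lemma'' strategy, and that is both unnecessary here and unworkable as described.

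Concretely: for the direction ``$(\system,r,n)\models\jboxAgent{t}\phi$ implies $\jboxAgent{t}\phi\in r(n)$'' you never need to touch the accessibility conjunct. The truth condition is a \emph{conjunction} whose first conjunct is $\phi\in\evidence_\agent(r(n),t)$, and the canonical evidence function is defined precisely so that $\phi\in\evidence_\agent(X,t)$ iff $\jbox{t}_\agent\phi\in\Gamma$ for every $\Gamma\in\MCS$ with $X=\overline{\Gamma}$; since $\jboxAgent{t}\phi\in\Subf^+(\chi)$, this already gives $\jboxAgent{t}\phi\in\overline{\Gamma}=r(n)$. The paper's proof of this direction is exactly this one line. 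Your proposed contrapositive route --- a Lindenbaum extension of $\{\lneg\phi\}\cup\{\sigma\mid\jbox{s}_\agent\sigma\in r(n)\text{ for some }s\}$ yielding an $\accrel_\agent$-accessible point falsifying $\phi$ --- fails in justification logic: that set need not be consistent (if $\jbox{s}_\agent\phi\in r(n)$ for some other term $s$, it contains both $\phi$ and $\lneg\phi$; more generally \refax{} forces every such $\sigma$ to hold at $r(n)$ itself), and indeed no falsifying accessible point need exist, because $\jboxAgent{t}\phi$ can be false at a state solely by failure of the evidence condition. This is the structural point of Fitting-style semantics: the evidence function, not an existence lemma, carries the hard direction of the Truth Lemma, and the step you flag as ``the main obstacle'' should not appear in the proof at all. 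The membership-to-truth direction you describe (evidence via the intersection definition, accessibility via the transport property of the canonical $\accrel_\agent$) is correct and matches the paper.
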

\begin{proof}
	As usual, the proof is by induction on the structure of $\psi$. Let $\RO{r(i-1)}{r(i)}{\overline{\Gamma_{i-1}}}{\overline{\Delta_{i}}}$, for $i \geq 1$. We only show the following cases:%two cases where $\psi$ is a $\lwprevious$-formula or is an $\lsince$-formula (for the other cases see \cite{BucheliGhariStuder2017}):
	\begin{itemize}
		\item $\psi = \lnext \phi$.
		
		$(\system, r, n) \models \lnext \phi$ iff $(\system, r, n+1) \models \phi$, by the induction hypothesis, iff $\phi \in r(n+1)$ iff $\phi \in \Delta_{n+1}$, by Lemma \ref{lem:R-next}, iff $\lnext \phi \in \Gamma_{n}$ iff $\lnext \phi \in r(n)$.
		
		%		\item $\psi = \lnext \phi$.
		%		
		%		%
		%		$(\Rightarrow)$ Suppose that $(\system, r, n) \models \lnext \phi$. Then $(\system, r, n+1) \models  \phi$, and hence by the induction hypothesis $\phi \in r(n+1)$. Since $\RO{r(n)}{r(n+1)}{ \overline{\Gamma_n}}{ \overline{\Delta_{n+1}}}$, we have $\phi \in \Delta_{n+1}$. By Lemma \ref{lem:R-next}, we obtain $\lnext \phi \in \Gamma_{n}$, and hence $\lnext \phi \in r(n)$.
		%%		$(\Rightarrow)$ Suppose that $(\system, r, n) \models \lnext \phi$ and $\lnext \phi \not\in r(n)$.  Then $(\system, r, n+1) \models  \phi$, and hence by the induction hypothesis $\phi \in r(n+1)$. On the other hand, by Lemma \ref{lem: Facts about MCS-chi}, we obtain $\neg \lnext \phi \in r(n)$, and hence $\neg \lnext \phi \in \Gamma_{n}$. Since $\RO{X_n}{X_{n+1}}{ \overline{\Gamma_n}}{ \overline{\Gamma_{n+1}}}$, by Lemma \ref{lem:R-next}, we get $\neg  \phi \in \Gamma_{n+1}$, and hence $\neg  \phi \in r(n+1)$, which is a contradiction.
		%		
		%		$(\Leftarrow)$ If $\lnext \phi \in r(n)$, then by Lemma \ref{lem:R-next} we get $\phi \in r(n+1)$. By the induction hypothesis, $(\system, r, n+1) \models  \phi$, and hence $(\system, r, n) \models \lnext \phi$.

		\item $\psi = \lwprevious \phi$.
		
		$(\Rightarrow)$ Suppose that $(\system, r, n) \models \lwprevious \phi$ and $\lwprevious \phi \not\in r(n)$.  Then $n=0$ or $(\system, r, n-1) \models  \phi$.
		
		\begin{itemize}
			\item Suppose $n=0$. Since $r(0)$ is initial, $\lwprevious \bot \in r(0)$. Since  $\vdash_\CS \lwprevious \bot \rightarrow \lwprevious \phi$, we get $\lwprevious\phi \in r(0)$. The latter clearly contradicts the assumption  $\lwprevious \phi \not\in r(0)$.
			
			\item Suppose $n>0$ and $(\system, r, n-1) \models  \phi$. We have by the induction hypothesis that  $\phi \in r(n-1)$. Thus, $\phi \in \Gamma_{n-1}$, and hence, by Lemma \ref{lem:R-next}, $\lsprevious \phi \in \Delta_n$. By axiom $\swprevax$, we get   $\lwprevious \phi \in \Delta_n = r(n)$, which is a contradiction.
		\end{itemize}
		
		$(\Leftarrow)$ Suppose $\lwprevious \phi \in r(n)$ and  $n>0$.  Since $\RO{r(n-1)}{r(n)}{\overline{\Gamma_{n-1}}}{\overline{\Delta_{n}}}$, we have $\lwprevious \phi \in \Delta_{n}$, by Lemma \ref{lem:R-next}, we get $\phi \in \Gamma_{n-1}$. By the induction hypothesis, $(\system, r, n-1) \models  \phi$, and hence $(\system, r, n) \models \lwprevious \phi$. 

		\item $\psi = \psi_1 \luntil \psi_2$.
		
		$(\Rightarrow)$ If $(\system, r, n) \models  \psi_1 \luntil \psi_2$, then $(\system, r, m) \models  \psi_2$ for some $m \geq n$, and $(\system, r, k) \models  \psi_1$ for all $k$ with $n \leq k < m$. By the induction hypothesis we get $\psi_2 \in r(m)$, and $\psi_1 \in r(k)$ for all~$k$ with $n \leq k < m$. We have to show $\psi_1 \luntil \psi_2 \in r(n)$, which follows by induction on $m$ as follows:
		\begin{itemize}
			\item
			Base case $m=n$. Since $\psi_2 \in r(n) = r(m)$ and $\vdash_\CS \psi_2 \rightarrow (\psi_1 \luntil \psi_2)$, we obtain $\psi_1 \luntil \psi_2 \in r(n)$. 
			
			\item
			Suppose $m > n$.  It follows from the induction hypothesis that $\psi_1 \luntil \psi_2 \in r(n+1)$. Since $\RO{r(n)}{r(n+1)}{ \overline{\Gamma_n}}{ \overline{\Delta_{n+1}}}$, and hence $\psi_1 \luntil \psi_2 \in \Delta_{n+1}$. Thus, by Lemma \ref{lem:R-next}, $\lnext (\psi_1 \luntil \psi_2) \in \Gamma_n$. Now assume to obtain a contradiction that $\psi_1 \luntil \psi_2 \not\in \Gamma_n$. Hence $\neg(\psi_1 \luntil \psi_2) \in \Gamma_n$. By axiom $(\luntil 2)$, 
			\[\vdash_\CS  \neg (\psi_1 \luntil \psi_2) \rightarrow [\neg \psi_2 \wedge (\neg \psi_1 \vee \neg \lnext (\psi_1 \luntil \psi_2))], \]
			and thus 
			\[\vdash_\CS  \neg (\psi_1 \luntil \psi_2) \wedge \psi_1 \rightarrow \neg \lnext (\psi_1 \luntil \psi_2), \]
			Thus, $\neg \lnext (\psi_1 \luntil \psi_2) \in \Gamma_n$, which is a contradiction. Thus, $\psi_1 \luntil \psi_2 \in \Gamma_n$ and hence $\psi_1 \luntil \psi_2  \in r(n)$.
			
		\end{itemize}
		
		$(\Leftarrow)$ If $\psi_1 \luntil \psi_2 \in r(n)$, then since $(r(0), r(1), \ldots, r(n),r(n+1),\ldots)$ is an acceptable sequence there exists $m \geq n$ such that $\psi_2 \in r(m)$, and $\psi_1 \in r(k)$ for all~$k$ with $n \leq k < m$. By the induction hypothesis we obtain 
		$(\system, r, m) \models  \psi_2$, and $(\system, r, k) \models  \psi_1$ for all $k$ with $n \leq k < m$. Thus 
		$(\system, r, n) \models  \psi_1 \luntil \psi_2$.
		
		\item $\psi = \psi_1 \lsince \psi_2$. 
		
		$(\Rightarrow)$ If $(\system, r, n) \models \psi_1 \lsince \psi_2$, then $(\system, r, m) \models \psi_2$ for some $m \leq n$, and $(\system, r, k) \models \psi_1$ for all $k$ with $m < k \leq n$. By the induction hypothesis, $\psi_2 \in r(m)$, and $\psi_1 \in r(k)$ for all $k$ with $m < k \leq n$. We want to show that $\psi_1 \lsince \psi_2 \in r(n)$. We prove it by induction on $m$ as follows.
		\begin{itemize}
			\item 		Base case $m=n$. Since $\psi_2 \in r(n)=r(m)$ and $\vdash_\CS  \psi_2 \rightarrow (\psi_1 \lsince \psi_2)$, we obtain $\psi_1 \lsince \psi_2 \in r(n)$. 
			
			\item 		Suppose $m < n$. Since $\RO{r(n-1)}{r(n)}{\overline{\Gamma_{n-1}}}{\overline{\Delta_{n}}}$, it follows from the induction hypothesis that $\psi_1 \lsince \psi_2 \in r(n-1)$, and hence $\psi_1 \lsince \psi_2 \in \Gamma_{n-1}$. Thus, by Lemma \ref{lem:R-next}, $\lsprevious (\psi_1 \lsince \psi_2) \in \Delta_n$. Now assume to obtain a contradiction that $\psi_1 \lsince \psi_2 \not\in \Delta_n$. Hence $\neg(\psi_1 \lsince \psi_2) \in \Delta_n$. By axiom  $\stwoax$,
			\[\vdash_\CS  \neg (\psi_1 \lsince \psi_2) \rightarrow [\neg \psi_2 \wedge (\neg \psi_1 \vee \neg \lsprevious (\psi_1 \lsince \psi_2))], \]
			and thus 
			\[\vdash_\CS  \neg (\psi_1 \lsince \psi_2) \wedge \psi_1 \rightarrow \neg \lsprevious (\psi_1 \lsince \psi_2), \]
			Thus, $\neg \lsprevious (\psi_1 \lsince \psi_2) \in \Delta_n$, which is a contradiction.
			
			$(\Leftarrow)$ Suppose $\psi_1 \lsince \psi_2 \in r(n)$. By Lemma \ref{lem: MCS temporal properties}, there is $m \leq n$ such that $\psi_2 \in r(m)$, and $\psi_1 \in r(k)$ for all $k$ with $m < k \leq n$. By the induction hypothesis, $(\system, r, m) \models \psi_2$ and $(\system, r, k) \models \psi_1$ for all $k$ with $m < k \leq n$, and thus $(\system, r, n) \models \psi_1 \lsince \psi_2$ as desired. 
		\end{itemize}

		\item $\psi = \jbox{t}_\agent \phi$.

		$(\Rightarrow)$ If $(\system, r, n) \models \jbox{t}_\agent \phi$, then $\phi \in \evidence_\agent (r(n),t)$. Thus, by the definition of $\evidence_\agent$, $\jbox{t}_\agent \phi \in \Gamma$, where $r(n) = \overline{\Gamma}$, and hence $\jbox{t}_\agent \phi \in r(n)= \overline{\Gamma_n}$. 
		
		$(\Leftarrow)$ If $\jbox{t}_\agent \phi \in r(n)$, then $\jbox{t}_\agent \phi \in \bigcap_{r(n) = \overline{\Gamma}} \Gamma$. Thus, $\phi \in \evidence_\agent (r(n),t)$. 
		Now suppose that $r(n) R_\agent  r'(n')$ and let $r'(n') = \overline{\Delta}$, for some $\Delta \in \MCS$. By the definition of $R_\agent$, there is $\Gamma \in \MCS$ such that $r(n) = \overline{\Gamma}$ and  
		$\{ \phi \ |\  \jbox{t}_\agent \phi \in \Gamma   \text{ for some $t$}\} \subseteq \Delta$. We find $\phi \in \Delta$. By the induction hypothesis, we get $(\system, r', n') \models \phi$. Since $r'$ and $n'$ were arbitrary, we conclude that $(\system, r, n) \models \jbox{t}_\agent \phi$. 

		\item $\psi = \jboxOAgent{t} \phi$.
		
		The proof for this case is similar to that of the previous case. \qed
	\end{itemize}
\end{proof}

\begin{theorem}[Completeness]\label{thm:Completeness-interpreted systems}
	Let $\CS$ be a constant specification for $\JTO$. For each formula $\chi$,
	\[
	\models_{\CS}  \chi  \quad\text{if{f}}\quad  \vdash_\CS \chi.
	\]
\end{theorem}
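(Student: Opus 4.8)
The plan is to establish the two directions of the biconditional separately: soundness ($\vdash_\CS\chi \Rightarrow \models_\CS\chi$) by a routine induction on derivations, and completeness ($\models_\CS\chi \Rightarrow \vdash_\CS\chi$) by a contrapositive argument that assembles the canonical-model machinery already in place.

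For soundness I would show, by induction on the length of a $\JTO_\CS$-derivation, that every theorem is valid in every F-interpreted system for $\JTO_\CS$. The propositional tautologies are immediate, and the temporal axioms \nextkax--\stwoax together with the interaction axioms \fpax and \pfax are the standard linear-time principles, read off directly from the truth conditions on the discrete linear flow. The point requiring attention is that each justification axiom is matched by a closure condition on the evidence functions together with the appropriate frame property: \posintax is validated by the (positive introspection) condition on $\evidence_\agent$ in conjunction with (monotonicity) and the transitivity of $\accrel_\agent$ (monotonicity is what pushes $\phi \in \evidence_\agent(r(n),t)$ forward along $\accrel_\agent$), \refax by the reflexivity of $\accrel_\agent$, \appax by (application), and on the normative side \appOax by (application-$\lobligatory$), the form $\neg(\jboxOAgent{t}\phi \land \jboxOAgent{t}\neg\phi)$ of \nocax by (consistency), and \shiftrefax by (obligated factivity) together with the shift reflexivity of $\accrelO_\agent$. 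Finally \mprule, the four necessitation rules, and the iterated constant-necessitation rules \iteratedconstnecrule and \iteratedconstnecruleO preserve validity, the last two by the (constant specification) conditions.

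For completeness I would argue the contrapositive inside the $\chi$-canonical F-interpreted system $\system$. Suppose $\not\vdash_\CS\chi$; then $\{\neg\chi\}$ is consistent and extends, by Lindenbaum, to some $\Gamma\in\MCS$ with $\neg\chi\in\Gamma$. Because $\chi\in\Subf(\chi)\subseteq\Subf^+(\chi)$ we have $\neg\chi\in\Subf^+(\chi)$, hence $\neg\chi\in\overline{\Gamma}\colonequals\Gamma\cap\Subf^+(\chi)$, and $\overline{\Gamma}\in\MCS_\chi$ by Lemma \ref{lem:characterization of MCS-chi}. By Theorem \ref{thm: acceptable sequence} there is an acceptable sequence containing $\overline{\Gamma}$, which is by construction a run $r$ of $\system$ with $r(n)=\overline{\Gamma}$ for some $n\in\N$. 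Lemma \ref{lem: canonical interpreted model is a model} certifies that $\system$ is a genuine F-interpreted system for $\JTO_\CS$, so it is among the systems quantified over by $\models_\CS$. The Truth Lemma (Lemma \ref{lem:Truth Lemma F-interpreted}), applicable because $\neg\chi\in\Subf^+(\chi)$, then yields $(\system,r,n)\models\neg\chi$, i.e. $(\system,r,n)\not\models\chi$; hence $\not\models_\CS\chi$, completing the contrapositive.

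Since the substantive work is discharged by the preparatory lemmas, the theorem itself is a short assembly, and the only points needing care in the final step are the verification that $\neg\chi$ truly lies in $\Subf^+(\chi)$ and the observation that the index $n$ at which $\overline{\Gamma}$ appears in the acceptable sequence is irrelevant to the conclusion. If I were building the argument from the ground up, the genuine obstacle would be the completeness direction's treatment of eventualities: guaranteeing that the canonical run fulfils every pending $\luntil$-request (the acceptability clause and Theorem \ref{thm: acceptable sequence}) and carrying out the $\luntil$ and $\lsince$ cases of the Truth Lemma, where one runs an inner induction on the witnessing index and propagates membership backwards through the sequence using the fixpoint axioms \utwoax and \stwoax. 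Here these are available as established results, so no further difficulty arises.
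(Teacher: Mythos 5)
Your proposal is correct and follows essentially the same route as the paper: soundness by induction on derivations (which the paper simply calls straightforward), and completeness by the contrapositive, extending $\{\neg\chi\}$ to a maximal consistent set, cutting down to $\overline{\Gamma}\in\MCS_\chi$, embedding it in an acceptable sequence via Theorem \ref{thm: acceptable sequence} to obtain a run of the $\chi$-canonical F-interpreted system, and invoking the Truth Lemma. The only cosmetic difference is that you apply the Truth Lemma to $\neg\chi$ where the paper applies it to $\chi$ itself; both lie in $\Subf^+(\chi)$, so the arguments coincide.
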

\begin{proof}
	Soundness is straightforward. For completeness, suppose that $\not \vdash_\CS \chi$. Thus, $\{ \neg \chi\}$ is a consistent set.
	Therefore, there exists $\Gamma \in \MCS$ with $\neg \chi \in \Gamma$. Let $\overline{\Gamma} = \Gamma \cap  \Subf^+(\chi)$.
	By Theorem \ref{thm: acceptable sequence}, there exists an acceptable sequence containing $\overline{\Gamma}$, say $(X_{0},X_1,\ldots)$, and there exists $n \geq 0$ such that $\overline{\Gamma} = X_n$. Define the run $r$ as follows $r(i) \colonequals X_i$. The run $r$ is in the system $\runs$ of the $\chi$-canonical F-interpreted system $\system$ for $\JTO_\CS$. 
	Since $\chi \not \in r(n)$, by the Truth Lemma, $(\system, r, n) \not\models \chi$. Therefore, $\not \models_{\CS} \chi$.  
\end{proof}

\begin{theorem}[Completeness]\label{thm:Weak Completeness-interpreted systems}
	Let $\CS$ be a constant specification for $\JTO$. For each formula $\chi$ and finite set of formulas $T$,
	\[
	T \models_{\CS}  \chi  \quad\text{if{f}}\quad  T \vdash_\CS \chi.
	\]
\end{theorem}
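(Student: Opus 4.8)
The plan is to reduce this finite-assumption completeness statement to the single-formula completeness theorem, Theorem~\ref{thm:Completeness-interpreted systems}, rather than repeating the canonical-model construction. Since $T$ is finite, I would write $T = \{\psi_1, \ldots, \psi_k\}$ and set $\psi \colonequals \psi_1 \land \cdots \land \psi_k$ (taking $\psi \colonequals \top$ when $T = \emptyset$, so that the empty case recovers the previous theorem). The whole argument then amounts to showing that both $\vdash_\CS$ and $\models_\CS$ convert the finite set $T$ into the single formula $\psi$ in a matching way, after which Theorem~\ref{thm:Completeness-interpreted systems} applied to $\psi \limplies \chi$ supplies the genuine content.

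First I would verify the syntactic reduction: $T \vdash_\CS \chi$ if{f} $\vdash_\CS \psi \limplies \chi$. The right-to-left direction is immediate from the definition of derivation from a set of assumptions. For the converse, if $\vdash_\CS (\psi_{i_1} \land \cdots \land \psi_{i_m}) \limplies \chi$ for some members of $T$, then since $\psi \limplies (\psi_{i_1} \land \cdots \land \psi_{i_m})$ is a propositional tautology, an application of \mprule\ gives $\vdash_\CS \psi \limplies \chi$; this is just the Deduction Theorem packaged into a single implication.

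Next I would verify the semantic reduction: $T \models_\CS \chi$ if{f} $\models_\CS \psi \limplies \chi$. The point is that truth is evaluated pointwise, and the truth clause for $\land$ gives, for every F-interpreted system $\system$, every run $r$, and every $n \in \N$, that $(\system, r, n) \models \psi$ holds exactly when $(\system, r, n) \models \theta$ for all $\theta \in T$. Hence the hypothesis ``$(\system, r, n)$ satisfies every member of $T$'' in the definition of $T \models_\CS \chi$ coincides with the antecedent of $\psi \limplies \chi$ being satisfied at $(r,n)$, and the two notions agree.

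Chaining the two reductions through Theorem~\ref{thm:Completeness-interpreted systems} applied to $\psi \limplies \chi$ then yields
\[
T \models_\CS \chi \iff \; \models_\CS \psi \limplies \chi \iff \; \vdash_\CS \psi \limplies \chi \iff T \vdash_\CS \chi .
\]
I expect the only delicate point to be conceptual rather than computational: one must check that the Hilbert-style notion of derivation from assumptions and the local consequence relation are \emph{pointwise} in the same sense — assumptions are read as ``true at the current point,'' not ``globally valid'' — so that the temporal and justification necessitation rules, which are applied only under $\vdash$ to theorems, never act on the members of $T$, and no global-versus-local mismatch arises. This matching is exactly what makes the clean reduction to the single-formula theorem legitimate.
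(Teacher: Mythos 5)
Your proposal is correct and is essentially the paper's own argument: the paper likewise reduces the statement to Theorem~\ref{thm:Completeness-interpreted systems} via the single formula $\bigwedge T \limplies \chi$, using the definition of derivation from assumptions on the syntactic side and the pointwise reading of the local consequence relation on the semantic side. Your write-up merely spells out the two reductions (and the empty-$T$ case) in more detail than the paper does.
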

\begin{proof}
	Suppose that $T \not \vdash_\CS \chi$. Thus, $\not \vdash_\CS \bigwedge T \to \chi$. By Theorem \ref{thm:Completeness-interpreted systems}, there is an F-interpreted system $\system = (S, \runs, \ldots)$, $r \in \runs$, and $n \in \N$ such that $(\system, r, n) \models \bigwedge T$ and $(\system, r, n) \not\models \chi$. Therefore, $T \not \models_{\CS} \chi$.  
\end{proof}

%%%%%%%%%%%%%%%%%%%%%%%%%%%%%%%%%%%%%%%%%%%%%%%%%%%%%%%%%%%
\section{Interpreted-neighborhood systems}
\label{sec: Interpreted-neighborhood systems}

In normal modal logics, the dual of the necessity modality $\Box$ can be defined as a possibility modality $\Diamond$, by $\Diamond \phi := \neg \Box \neg \phi$. In some interpretations of modalities, the possibility modality plays an important role. For example, in the deontic logic framework, the dual of obligation is interpreted as permission ($\lpermissible \phi \equiv \neg \lobligatory \neg \phi$). One of the shortcomings of F-interpreted systems is that it does not provide a proper truth condition for the dual of (necessity) justification modalities. In fact, the truth condition for the dual of the justification modalities can be expressed as follows:
\begin{align*}
	(\system, r, n) &\entails \jdiamondAgent{t} \phi \text{ iff } \neg \phi \not \in \evidence_\agent(r(n),t)  \text { or }  \accrel_\agent(r(n)) \not\subseteq \truthset{\neg \phi}{\system}\, ,\\   
	(\system, r, n) &\entails \jboxPAgent{t} \phi \text{ iff } \neg \phi \not \in \evidenceO_\agent(r(n),t)  \text { or }  \accrelO_\agent(r(n)) \not\subseteq \truthset{\neg \phi}{\system}.
\end{align*}
In this section, we try to remedy this defect by introducing interpreted systems based on \Neigh models as an alternative semantics for $\JTO$. It is worth noting that Standefer \cite{Standefe-IGPL-2019} presented a neighborhood semantics for some justification logics. However, the base logics that are used in \cite{Standefe-IGPL-2019} are relevant logics rather than classical logic.

%\begin{definition}\label{def:frame-run-system}
%	Given a non-empty set of states $S$, a \emph{run}~$r$ on $S$ is a function from $\N$ to states, i.e., $r: \N \to S$. A \emph{system}~$\runs$ on a set of states $S$ is a non-empty set of runs. 
%	
%\end{definition}

%Given a run $r \in \runs$ and $n \in \N$, the pair $(r,n)$ is called a \emph{point} in $\runs$. The image of $\runs$ is defined as follows:
%%
%\[
%\Im(\runs) := \{ r(n) \mid r \in \runs , n \in \N \}.
%\]
%%
%Note that $\emptyset \neq \Im(\runs) \subseteq S$, but $\Im(\runs)$ and $S$ are not necessarily equal. %since there may exist $w \in S$ such that $w \not \in \Im(\runs)$.

%\begin{definition}\label{def:frame-run-system}
%	A \emph{neighborhood frame} is a tuple $(S, {\cal N}^\lobligatory_1,\ldots,{\cal N}^\lobligatory_\numberofagents)$ (or $(S, \neighborhood)_{\agent \in \Ag}$ for short) where
%	\begin{enumerate}
%		\item $S$ is a non-empty set of states;
%		\item $\neighborhood: S \times \Terms \to \powerset(\powerset(S))$ is  a neighborhood function, for each $\agent \in \Ag$.
%	\end{enumerate}
%	% 
%	A \emph{run}~$r$ on a set of states $S$ is a function from $\N$ to states, i.e., $r: \N \to S$. A \emph{system}~$\runs$ on a set of states $S$ is a non-empty set of runs. 
%		
%\end{definition}

\begin{definition}\label{def: quasi interpreted sysytems}
	A \emph{quasi-interpreted-\Neigh system for $\JTO_\CS$} is a tuple 
	\[
	\system = (S, \runs, \neighborhood, \neighborhoodO,  \valuation)_{\agent \in \Ag}
	\] 
	where
	\begin{enumerate}
		
		\item $S$ is a non-empty set of states;
		
		\item 
		$\runs$ is a system on $S$;
		
		\item 
		$\neighborhood: \Im(\runs) \times \Terms^\lepistemic \to \powerset(\powerset(S))$ is an epistemic neighborhood function, for each $\agent \in \Ag$;
		
		\item 
		$\neighborhoodO: \Im(\runs) \times \Terms^\lobligatory \to \powerset(\powerset(S))$ is a normative neighborhood function, for each $\agent \in \Ag$;
		
%		\item 
%		$\greenfunc :  \Terms \to \powerset(\Im(\runs) \times S)$ is a green function. 
%		
%		\item 
%		$\greenfunc : \Ag \times \Terms \to \powerset(\Im(\runs) \times S)$ is a green function. %We write $w \greenfunc^t v$ instead of $(w,v) \in \greenfunc(\agent,t)$.
		
		\item 
		$\valuation$ is a valuation function such that $\valuation: \Im(\runs) \to \powerset(\Prop)$ and $\valuation: S \setminus \Im(\runs) \to \powerset(\Formulae)$.  
		
%		For a state $w \in S$, we denote $\phi \in \valuation(w)$ by $(\system, w) \entails \phi$.  
		
	\end{enumerate}
\end{definition}

Given $X \subseteq S$,  $X \in \neighborhood(w,t)$ is intuitively read ``$t$ is a reason why proposition $X$ is known for agent $i$ in state $w$,'' and $X \in \neighborhoodO(w,t)$ is read ``$t$ is a reason why proposition $X$ is obligatory for agent $i$ in state $w$.'' 

Note that the states in $S \setminus \Im(\runs)$ may be considered as non-normal states, since there may be a state $w \in S \setminus \Im(\runs)$ and a formula $\phi$ such that both $\phi \in \valuation(w)$ and $\neg \phi \in \valuation(w)$. The following definition stipulates that states in $\Im(\runs)$ are normal.

\begin{definition}\label{def: truth interpreted systems}
	Let  $\system = (S, \runs, \neighborhood, \neighborhoodO,  \valuation)_{\agent \in \Ag}$ be a quasi-interpreted-\Neigh system. For $w \in S \setminus \Im(\runs)$, let $(\system, w) \entails \phi$ if $\phi \in \valuation(w)$. Truth at states $r(n) \in \Im(\runs)$ are inductively defined as follows: 
	\begin{align*}
	(\system, r(n)) &\entails p \text{ iff } p \in \valuation(r(n)) \, ,\\
	(\system, r(n)) &\not\entails \lfalse \, ,\\
	(\system, r(n)) &\entails \phi \limplies \psi \text{ iff } (\system, r(n)) \not\entails \phi \text{ or } (\system, r(n)) \entails \psi \, ,\\
	(\system, r(n)) &\entails \lwprevious \phi \text{ iff $n=0$ or } (\system, r(n-1)) \entails \phi \, ,\\
	(\system, r(n)) &\entails \lnext \phi \text{ iff } (\system, r(n+1)) \entails \phi \, ,\\
	(\system, r(n)) &\entails \phi \lsince \psi \text{ iff there is some } m \leq n \text{ such that } (\system, r(m)) \entails \psi \\ & \qquad\qquad \text{ and } (\system, r, k) \entails \phi \text{ for all $k$ with } m < k \leq n \, ,\\
	(\system, r(n)) &\entails \phi \luntil \psi \text{ iff there is some } m \geq n \text{ such that } (\system, r(m)) \entails \psi \\ & \qquad\qquad \text{ and } (\system, r(k)) \entails \phi \text{ for all $k$ with  } n \leq k < m \, ,\\   
	(\system, r(n)) &\entails \jboxAgent{t} \phi \text{ iff }  \truthsetModel{\phi} \in \neighborhood(r(n),t),\\
	(\system, r(n)) &\entails \jboxOAgent{t} \phi \text{ iff }  \truthsetModel{\phi} \in \neighborhoodO(r(n),t),
	%		(\system, r(n)) &\entails \jboxOAgent{t} \phi \text{ iff }   (\system, w) \entails \phi,  \text{ for all states }  w \in  \greenfunc^t (r(n)) \,.  
%	(\system, r(n)) &\entails \jboxOAgent{t} \phi \text{ iff }   (\system, v) \entails \phi,  \text{ for all states }  v \in S \text{ such that }   r(n) \greenfunc^t v \,,
	\end{align*}
where $\truthsetModel{\phi} = \{ w \in S \mid  (\system, w) \entails \phi \}$ is the \textit{truth set} of $\phi$ (or the \textit{proposition} expressed by $\phi$).
\end{definition}

%Let 
%%
%\[
%\greenfunc^t(r(n)) = \{ v \in S \mid  r(n) \greenfunc^{t} v \}.
%\]
%%
%Now the truth condition of justification assertions can be expressed as follows:
%%
%\[
%(\system, r(n)) \entails \jboxOAgent{t} \phi \text{ iff }   \greenfunc^t (r(n)) \subseteq \truthsetModel{\phi} \,.
%\]  

\begin{definition}\label{def: interpreted sysytems}
	An \emph{interpreted-\Neigh system}  
	\(
	\system = (S, \runs, \neighborhood, \neighborhoodO,  \valuation)_{\agent \in \Ag}
	\) 
	 for $\JTO_\CS$ is a quasi-interpreted-\Neigh system such that the neighborhood functions $\neighborhood$ and $\neighborhoodO$ satisfy the following conditions.\\
	 \textbf{Conditions on $\neighborhood$:}\\ For all terms $s,t \in \Terms^\lepistemic$, all formulas $\phi,\psi \in \Formulae$, all  $r(n) \in \Im(\runs)$, and all $i \in \Ag$:
	 \begin{enumerate}
	 	\setlength\itemsep{0.1cm}
	 	\item 
	 	$\truthsetModel{\phi} \in \neighborhood(r(n),t)$, for $\jbox{t}_\agent \phi \in \CS^\lepistemic$; \hfill \csNeighborhood
	 	
	 	\item 
	 	if $\truthsetModel{\phi \limplies \psi} \in \neighborhood(r(n),t)$ and $\truthsetModel{\phi} \in \neighborhood(r(n),s)$, then $\truthsetModel{\psi} \in \neighborhood(r(n), t \cdot s)$; \\ \text{}  \hfill \appNeighborhood
	 	
	 	\item 
	 	if $\truthsetModel{\phi} \in \neighborhood(r(n),s)$, then $\truthsetModel{\phi} \in \neighborhood(r(n), t + s) \cap\neighborhood(r(n), s + t) $; \hfill \sumNeighborhood
	 	
	 	\item 
	 	if $\truthsetModel{\phi} \in \neighborhood(r(n),t)$, then $r(n) \in \truthsetModel{\phi}$; \hfill \refNeighborhood%\footnote{An equivalent condition is $r(n) \in \cap \neighborhood(r(n),t) = \cap \{ X \mid X \in \neighborhood(r(n),t) \}$. }
	 	
	 	\item 
	 	if $\truthsetModel{\phi} \in \neighborhood(r(n),t)$, then $\truthsetModel{\jboxAgent{t} \phi} \in \neighborhood(r(n),\tinspect t)$. \hfill \posintNeighborhood
	 \end{enumerate}
 \textbf{Conditions on $\neighborhoodO$:}\\ For all terms $s,t \in \Terms^\lobligatory$, all formulas $\phi,\psi \in \Formulae$, all  $r(n) \in \Im(\runs)$, and all $i \in \Ag$:
  \begin{enumerate}
  	\setlength\itemsep{0.1cm}
  	\item 
  	$\truthsetModel{\phi} \in \neighborhoodO(r(n),t)$, for $\jboxOAgent{t} \phi \in \CS^\lobligatory$; \hfill \csNeighborhoodO
  	
  	\item 
  	if $\truthsetModel{\phi \limplies \psi} \in \neighborhoodO(r(n),t)$ and $\truthsetModel{\phi} \in \neighborhoodO(r(n),s)$, then $\truthsetModel{\psi} \in \neighborhoodO(r(n), t \cdot s)$; \hfill \appNeighborhoodO
  	
  	\item 
  	if $\truthsetModel{\phi} \in \neighborhoodO(r(n),t)$, then $\truthsetModel{\neg \phi} \not \in \neighborhoodO(r(n),t)$; \hfill \nocNeighborhoodO

  	\item 
  	$\truthsetModel{\jboxOAgent{t} \phi \limplies \phi} \in \neighborhoodO(r(n),\tref t)$. \hfill \shiftrefNeighborhoodO
  	
  \end{enumerate}
  
\end{definition}

From the above definitions it follows that:  
\begin{align*}
%(\system, r(n)) &\entails \leventually \phi \text{ iff } (\system, r(m)) \entails \phi \text{ for some } m\geq n \, ,\\
%(\system, r(n)) &\entails \lalways \phi \text{ iff } (\system, r(m)) \entails \phi \text{ for all } m\geq n \, ,\\
%(\system, r(n)) &\entails \lonce \phi \text{ iff } (\system, r(m)) \entails \phi \text{ for some } m \leq n \, ,\\
%(\system, r(n)) &\entails \lsofar \phi \text{ iff } (\system, r(m)) \entails \phi \text{ for all } m \leq n \, ,\\
%(\system, r(n)) &\entails \lsprevious \phi \text{ iff $n>0$ and } (\system, r(n-1)) \entails \phi \, 
%\, ,\\
(\system, r(n)) &\entails \jdiamondAgent{t} \phi \text{ iff }  \truthsetModel{\neg \phi} \not \in \neighborhood(r(n),t)\, ,\\
(\system, r(n)) &\entails \jboxPAgent{t} \phi \text{ iff }   \truthsetModel{\neg \phi} \not \in \neighborhoodO(r(n),t) \,.
\end{align*}

%It is sometime convenient to use the following truth conditions for $\lsince$-formulas and $\luntil$-formulas, which are clearly equivalent to the corresponding conditions given in Definition \ref{def:truth conditions interpreted systems}.
%%
%\begin{align*}
%(\system, r, n) &\entails \phi \lsince \psi \text{ iff there is some $m$ with } n \geq m \geq 0 \text{ such that } (\system, r, n-m) \entails \psi \\ & \qquad\qquad \text{ and } (\system, r, n-k) \entails \phi \text{ for all $k$ with } 0 \leq k < m \, \\
%(\system, r, n) &\entails \phi \luntil \psi \text{ iff there is some } m \geq 0 \text{ such that } (\system, r, n+m) \entails \psi \\ & \qquad\qquad \text{ and } (\system, r, n+k) \entails \phi \text{ for all $k$ with } 0 \leq k < m \, .
%\end{align*}
%

%%%%%%%%%%%%%%%%%%%%%%%%%%%%%%%%%%%%%%%%%%%%%%%%%%

\begin{definition}
	Let $\CS$ be a constant specification for $\JTO$.
	\begin{enumerate}
		\item Given an interpreted-\Neigh system $\system = (S, \runs, \neighborhood, \neighborhoodO,  \valuation)_{\agent \in \Ag}$ for $\JTO_\CS$, we write $\system \entails \phi$ if
		for all $r(n) \in \Im(\runs)$, we have 
		$(\system, r(n)) \entails \phi$.
		
		\item We write $\entails^N_{\CS} \phi$ if $\system \entails \phi$ for all 
		interpreted-\Neigh systems $\system$ for $\JTO_\CS$.
		
		\item Given a set of formulas $T$ and a formula $\phi$ of $\JTO_\CS$, the (local) consequence relation is defined as follows: $T \models^N_\CS \phi$ iff for all 
		interpreted-\Neigh systems $\system = (S, \runs, \neighborhood, \neighborhoodO,  \valuation)_{\agent \in \Ag}$ for $\mathsf{L}_\CS$, for all  $r(n) \in \Im(\runs)$, if $(\system, r(n)) \entails \psi$ for all $\psi \in T$, then $(\system, r(n)) \entails \phi$.
	\end{enumerate}
\end{definition}

In the neighborhood semantics of modal logic we have the following regularity rule:
\[
\lrule{\entails \phi \liff \psi}{\entails \Box \phi \liff \Box \psi}\ (\sf{RE})
\]
However, the fact that $\entails^N_{\CS} \phi \liff \psi$ and the fact that $\phi$ is known (or is obligatory) for reason $t$ need not imply that $\phi$ is known (or is obligatory) for the same reason $t$. In other words, epistemic and normative reasons are hyperintentional (see \cite{Faroldi-Protopopescu-IGPL-2019} for a more detailed exposition). We show that a justification version of the rule $(\sf{RE})$ is not admissible in $\JTO$.

\begin{lemma}
	The regularity rule
	\[
	\lrule{\phi \liff \psi}{\jboxAgent{t} \phi \liff \jboxAgent{t} \psi}\ (\sf{JRE})
	\]
	is not validity preserving in $\JTO_\CS$, i.e. there exists $\phi, \psi \in \Formulae$ and $t \in \Terms^\lepistemic$ such that $\entails^N_{\CS} \phi \liff \psi$ but $\not \entails^N_{\CS} \jboxAgent{t} \phi \liff \jboxAgent{t} \psi$.
\end{lemma}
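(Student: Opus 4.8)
The plan is to refute (JRE) by building one interpreted-\Neigh system in which two logically equivalent formulas receive different truth sets, so that a single term justifies the one but not the other. The whole phenomenon rests on the non-normal states $S \setminus \Im(\runs)$: by Definition~\ref{def: truth interpreted systems} the truth of a formula at such a state is read off directly from $\valuation$ rather than computed compositionally, whereas validity $\entails^N_\CS$ quantifies only over normal states $r(n) \in \Im(\runs)$. Hence a propositional equivalence can be valid and yet fail at a non-normal state, and this is exactly the crack I want to exploit.

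Concretely I would take $\phi := p$, $\psi := \neg\neg p$, and $t := x \in \VTerms$. First I confirm $\entails^N_\CS p \liff \neg\neg p$: at every normal state the Boolean clauses force $\neg\neg p$ and $p$ to agree, so the biconditional holds everywhere validity looks. Next I set $S = \{w_0, w\}$ with a single constant run $r(n) = w_0$, so $\Im(\runs) = \{w_0\}$ and $w$ is non-normal. Putting $\valuation(w_0) = \{p\}$ and $\valuation(w) = \{p\}$ gives $\truthsetModel{p} = \{w_0, w\}$ but $\truthsetModel{\neg\neg p} = \{w_0\}$, since the formula $\neg\neg p$ is simply not listed in $\valuation(w)$.

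The substance of the proof is the design of the neighborhood functions, which must meet every condition of Definition~\ref{def: interpreted sysytems} for the given $\CS$. I would set $N_\agent(w_0, x) := \{X \subseteq S : \{w_0, w\} \subseteq X\}$, so that $\truthsetModel{p} \in N_\agent(w_0, x)$ while $\truthsetModel{\neg\neg p} \notin N_\agent(w_0, x)$; this yields $(\system, w_0) \entails \jboxAgent{x} p$ and $(\system, w_0) \not\entails \jboxAgent{x}\neg\neg p$, hence $\not\entails^N_\CS \jboxAgent{x} p \liff \jboxAgent{x}\neg\neg p$, as required. For every other epistemic term, every normative term, and every agent I would instead use the reflexive-full neighborhood $\{X \subseteq S : w_0 \in X\}$.

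The main obstacle, and the reason for this split, is checking the closure conditions without letting $\{w_0\}$ sneak into $N_\agent(w_0, x)$. The decisive point is that $x$ is a variable, hence never of the shape $u \cdot v$, $u + v$, $\tinspect u$, nor a constant; consequently none of the application, sum, positive-introspection, or constant-specification conditions ever has $N_\agent(w_0, x)$ as its conclusion, so I am free to keep $\{w_0\}$ out of it. Every conclusion these conditions do impose lands in a neighborhood indexed by a composite or constant term, i.e.\ a reflexive-full one, where the requirement collapses to ``$w_0$ lies in the pertinent truth set'': this follows from modus ponens at the normal state $w_0$ for application, from the premise itself for sum and positive introspection, and from the truth at $w_0$ of all axioms (and of their constant-justified iterations, since with reflexive-full neighborhoods $\jboxAgent{c}\theta$ is true at $w_0$ iff $\theta$ is) for constant specification. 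The normative conditions go the same way: consistency reduces to the fact that $\phi$ and $\neg\phi$ cannot both hold at $w_0$, and obligated factivity holds because $\jboxOAgent{t}\phi$ is true at $w_0$ exactly when $\phi$ is, so $\jboxOAgent{t}\phi \limplies \phi$ is true at $w_0$. Since all these verifications are indifferent both to $\valuation(w)$ and to $\CS$, the single system above refutes (JRE) uniformly.
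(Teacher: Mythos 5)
Your proposal is correct and follows essentially the same route as the paper's proof: a two-state model with a single constant run, a non-normal state whose valuation makes two propositionally equivalent formulas ($p$ and $\neg\neg p$ in your case, $p$ and $p\wedge p$ in the paper's) have different truth sets, a variable-indexed neighborhood that contains one truth set but not the other, and the observation that no closure condition of Definition~\ref{def: interpreted sysytems} ever forces anything \emph{into} a neighborhood indexed by a variable. Your verification of the neighborhood conditions is more detailed than the paper's (which leaves it as ``easy to show''), but the underlying counterexample is the same construction up to a mirror-image choice of which truth set is larger.
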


\begin{proof}
	Let $\CS$ be an arbitrary constant specification for $\JTO$. Consider the interpreted-\Neigh system $\system = (S, \runs, \neighborhood, \neighborhoodO,  \valuation)_{\agent \in \Ag}$ as follows:
	\begin{itemize}
		\setlength\itemsep{0.1cm}
		\item 
		$S = \{ w, v\}$,
		
		\item 
		$\runs = \{ r  \}$, where $r : \N \to S$ is defined by $r(n) = w$ for all $n \in \N$,
		
		\item 
		$\neighborhood(w,x) = \neighborhood(w,c) = \{ \{ w\} \} $, for all $x \in \VTerms$, for all $c \in \CTerms$, and all $i \in \Ag$,\\
		$\neighborhood(w,t) = \{ \{ w\}, S \}$, for all $t \in \Terms \setminus (\VTerms \cup \CTerms)$, and all $i \in \Ag$, %\\
		%$\neighborhood(v,t) = \emptyset$, for all $t \in \Terms$,
%		$\neighborhood(w,t) = \{ \{ w \}, S \}$, for all $t \in \Terms$ and all $i \in \Ag$, \\
%		$\neighborhood(v,t) = \emptyset$, for all $t \in \Terms$  and all $i \in \Ag$,
				
		\item 
		$\neighborhoodO(w,t) = \{ \{ w\}, S \}$, for all $t \in \Terms$, and all $i \in \Ag$,

		\item 
		$\valuation(w) = \{ p \}$, $\valuation(v) = \{ p \wedge p \}$.
	\end{itemize}
	It is easy to show that $\system$ is an interpreted-\Neigh system for $\JTO_\CS$. Let $x \in \VTerms$. We observe that $\system \not \entails \jboxAgent{x} p \liff \jboxAgent{x} (p \wedge p)$ and thus $\not \entails^N_{\CS} \jboxAgent{x} p \liff \jboxAgent{x} (p \wedge p)$, but $\entails^N_{\CS} p \liff p \wedge p$.  \qed
\end{proof}

It is noteworthy that a mono-agent version of $(\sf{JRE})$ is used in the axiomatic formulation of some of the relevant justification logics of \cite{Standefe-IGPL-2019}. One might argue that for an equivalence to be valid in a relevant logic there must be some connection between the two equivalent propositions, and thus $(\sf{JRE})$ is expected to be admissible in this setting. Nonetheless, it is not still clear why two equivalent propositions should be known for the same reason. It seems that an argument in favor of $(\sf{JRE})$ in a relevant logic depends on a sensible notion of `connection', and thus this issue has yet to be investigated more.

\begin{lemma}
	The regularity rule
	\[
	\lrule{\phi \liff \psi}{\jboxOAgent{t} \phi \liff \jboxOAgent{t} \psi}\ (\sf{JRE}_\lobligatory)
	\]
	is not validity preserving in $\JTO_\CS$, i.e. there exists $\phi, \psi \in \Formulae$ and $t \in \Terms^\lobligatory$ such that $\entails^N_{\CS} \phi \liff \psi$ but $\not \entails^N_{\CS} \jboxOAgent{t} \phi \liff \jboxOAgent{t} \psi$.
\end{lemma}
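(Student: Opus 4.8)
The plan is to mirror the proof of the preceding lemma for $(\sf{JRE})$, reusing the same two‑state scaffold but redefining the \emph{normative} neighborhood function so that it separates two classically equivalent propositions. Concretely, I would take $S = \{w, v\}$, a single run $r$ with $r(n) = w$ for all $n$, and the valuation $\valuation(w) = \{p\}$, $\valuation(v) = \{p \wedge p\}$, so that the non‑normal state $v$ witnesses $p \wedge p$ but not $p$. As in the previous lemma this yields $\truthsetModel{p} = \{w\}$ while $\truthsetModel{p \wedge p} = S$, two distinct propositions, even though $p \liff p \wedge p$ holds at every normal state and is therefore valid, giving $\entails^N_{\CS} p \liff p \wedge p$. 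The difference from the epistemic lemma is that there the normative neighborhood was set to $\{\{w\}, S\}$ for every term, which cannot separate the two propositions; instead I would put $\neighborhoodO(w, x) = \neighborhoodO(w, c) = \{\{w\}\}$ for $x \in \VTerms$ and $c \in \CTerms$, and $\neighborhoodO(w, t) = \{\{w\}, S\}$ for every composite $t \in \Terms^\lobligatory \setminus (\VTerms \cup \CTerms)$, keeping $\neighborhood$ exactly as in the previous lemma.

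The bulk of the argument is then to check that this $\system$ really is an interpreted‑neighborhood system for $\JTO_\CS$, i.e.\ that $\neighborhoodO$ satisfies $\csNeighborhoodO$, $\appNeighborhoodO$, $\nocNeighborhoodO$, and $\shiftrefNeighborhoodO$. The observation driving all four is that the only truth sets arising in this model are $\emptyset$, $\{w\}$, and $S$ — with $S$ occurring only for $p \wedge p$, since $v$ satisfies exactly that one formula — and that every set appearing in any $\neighborhoodO(w, \cdot)$ contains $w$. For $\csNeighborhoodO$ I would note that constant specifications mention only constants together with axiom instances (and nested boxed constants), none of which is the non‑tautology $p \wedge p$, so $\truthsetModel{\phi} = \{w\} \in \neighborhoodO(w, c)$ for the relevant $\phi$, using downward closure of $\CS$ in the nested case. $\appNeighborhoodO$ follows because the premises force $\phi$ and $\phi \limplies \psi$ true at $w$, so $w \in \truthsetModel{\psi}$, while the conclusion concerns the composite term $t \cdot s$ whose neighborhood is $\{\{w\}, S\}$. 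For $\shiftrefNeighborhoodO$, since $\tref t$ is composite I only need $\truthsetModel{\jboxOAgent{t} \phi \limplies \phi} \in \{\{w\}, S\}$, which holds because every neighborhood set contains $w$, so $\jboxOAgent{t} \phi$ forces $\phi$ at $w$, making the implication true at $w$ (and false at $v$), i.e.\ its truth set is $\{w\}$.

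The step I expect to be the genuine obstacle — being the one condition with no epistemic analogue — is $\nocNeighborhoodO$: I must show that $\truthsetModel{\phi}$ and $\truthsetModel{\neg \phi}$ are never both in $\neighborhoodO(w, t)$. This reduces to the fact that at the normal state $w$ exactly one of $\phi, \neg \phi$ is true, so at most one of $\truthsetModel{\phi}, \truthsetModel{\neg \phi}$ contains $w$; since every neighborhood set contains $w$, at most one of the two can belong to $\neighborhoodO(w, t)$. Once the model is verified, the conclusion is immediate with $\phi := p$, $\psi := p \wedge p$, and $t := x$ a variable: we have $\truthsetModel{p} = \{w\} \in \neighborhoodO(w, x)$ but $\truthsetModel{p \wedge p} = S \notin \neighborhoodO(w, x) = \{\{w\}\}$, so $(\system, w) \entails \jboxOAgent{x} p$ while $(\system, w) \not\entails \jboxOAgent{x}(p \wedge p)$, whence $\not\entails^N_{\CS} \jboxOAgent{x} p \liff \jboxOAgent{x}(p \wedge p)$ although $\entails^N_{\CS} p \liff p \wedge p$.
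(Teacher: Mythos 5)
Your proposal is correct and takes essentially the same approach as the paper: the paper's own proof uses the identical two-state model with a single constant run, the same discriminating normative neighborhood function ($\{\{w\}\}$ on variables and constants, $\{\{w\},S\}$ on composite terms), and the same witnessing pair $p$, $p\wedge p$ at a variable $x$. The only difference is cosmetic --- the paper sets $\valuation(v)=\{p\}$ rather than your $\valuation(v)=\{p\wedge p\}$, which merely swaps which of $\jboxOAgent{x}p$ and $\jboxOAgent{x}(p\wedge p)$ holds at $w$ --- and your verification of the four conditions on $\neighborhoodO$, in particular \nocNeighborhoodO\ via the fact that every neighborhood set contains the normal state $w$, is sound.
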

\begin{proof}
	Let $\CS$ be an arbitrary constant specification for $\JTO$. Consider the interpreted-\Neigh system $\system = (S, \runs, \neighborhood, \neighborhoodO,  \valuation)_{\agent \in \Ag}$ as follows:
	\begin{itemize}
		\setlength\itemsep{0.1cm}
		\item 
		$S = \{ w, v\}$,
		
		\item 
		$\runs = \{ r  \}$, where $r : \N \to S$ is defined by $r(n) = w$ for all $n \in \N$,
		
		\item 
		$\neighborhood(w,t) = \{ \{ w \}, S \}$, for all $t \in \Terms$ and all $i \in \Ag$, %\\
%		$\neighborhood(v,t) = \emptyset$, for all $t \in \Terms$  and all $i \in \Ag$,
%		
		\item 
		$\neighborhoodO(w,x) = \neighborhoodO(w,c) = \{ \{ w\} \} $, for all $x \in \VTerms$, for all $c \in \CTerms$, and all $i \in \Ag$,\\
		$\neighborhoodO(w,t) = \{ \{ w\}, S \}$, for all $t \in \Terms \setminus (\VTerms \cup \CTerms)$, and all $i \in \Ag$, %\\
		
		\item 
		$\valuation(w) = \valuation(v) = \{ p \}$. %$\valuation(v) = \cup_{n \geq 0} T_n$.
	\end{itemize}
	%
%	where $T_n$, for $n \geq 0$, is defined as follows:
%	%
%	\begin{align*}
%	T_0 &= \{ p \} \cup \CS \cup \{ \jboxOAgent{t} \phi \limplies \phi \mid \phi \in \Formulae \}
%	\\
%	T_{n+1} &= T_n \cup \{ \psi \mid \phi \limplies \psi, \phi \in T_n \}.
%	\end{align*}
%	%
%	\todo{$p \land p \in \valuation(v)$?}
	It is easy to show that $\system$ is an interpreted-\Neigh system for $\JTO_\CS$. Let $x \in \VTerms$. We observe that $\system \not \entails \jboxOAgent{x} p \liff \jboxOAgent{x} (p \wedge p)$ and thus $\not \entails^N_{\CS} \jboxOAgent{x} p \liff \jboxOAgent{x} (p \wedge p)$, but $\entails^N_{\CS} p \liff p \wedge p$.  \qed
\end{proof}

Next we show that $\conax$ is not valid in $\JTO$.  

\begin{lemma}
	Given a constant specification $\CS$ for $\JTO$  such that $\CS^\lobligatory = \emptyset$, we have:
	$$\not \entails^N_{\CS} \neg \jboxOAgent{t} \bot,$$
	for every $t \in \Terms^\lobligatory$.
\end{lemma}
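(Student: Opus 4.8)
The plan is to refute the validity directly: I would build a single interpreted-\Neigh system $\system$ and a point at which $\jboxOAgent{t}\bot$ holds, so that $\neg\jboxOAgent{t}\bot$ fails there. Since $(\system,r(n))\entails\jboxOAgent{t}\bot$ iff $\truthsetModel{\bot}\in\neighborhoodO(r(n),t)$, the whole task reduces to arranging a model in which the proposition $\truthsetModel{\bot}$ belongs to the normative neighborhood of $t$; and I would aim for a construction that works uniformly for every $t\in\Terms^\lobligatory$ at once.

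Concretely I would take $S=\{w,v\}$ with a single constant run $r(n)=w$, so that $\Im(\runs)=\{w\}$ and $v$ is a non-normal state. The point is to exploit that at a non-normal state truth is read off the valuation: put $\valuation(w)=\emptyset$ and $\valuation(v)=Y$, where
\[
Y \colonequals \{\bot\}\cup\{\,\jboxOAgent{s}\psi\limplies\psi \mid s\in\Terms^\lobligatory,\ \psi\in\Formulae\,\}.
\]
Then $v\entails\phi$ iff $\phi\in Y$, so for every $\phi$ the proposition $\truthsetModel{\phi}$ contains $v$ exactly when $\phi\in Y$ (and contains $w$ exactly when $w\entails\phi$). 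I would set $\neighborhoodO(r(n),u)=\{\{v\},S\}$ (the supersets of $\{v\}$) and, for the epistemic part, the factive choice $\neighborhood(r(n),u)=\{\{w\},S\}$, for every term $u$. With these definitions one computes, without any circularity, that $w\entails\jboxOAgent{u}\phi$ iff $\phi\in Y$ and $w\entails\jboxAgent{u}\phi$ iff $w\entails\phi$.

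Next I would verify that $\system$ is a genuine interpreted-\Neigh system for $\JTO_\CS$. Since $\CS^\lobligatory=\emptyset$, \csNeighborhoodO\ is vacuous; \appNeighborhoodO\ is vacuous as well, because a premise $\truthsetModel{\phi\limplies\psi}\in\neighborhoodO(w,u_1)$ forces $\phi\limplies\psi\in Y$, hence $\phi\limplies\psi$ is one of the reflection formulas and $\phi$ is a box formula $\jboxOAgent{s}\psi$, which is \emph{not} in $Y$, so the two premises can never hold together; and \shiftrefNeighborhoodO\ holds because $\jboxOAgent{t}\phi\limplies\phi\in Y$ by construction. The epistemic conditions hold for the factive neighborhood, the only non-immediate one being \csNeighborhood, which reduces to checking that each axiom instance is true at $w$ at every position — a routine verification using the trivial temporal structure and the two facts recorded above. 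Finally $\truthsetModel{\bot}=\{v\}\in\neighborhoodO(w,t)$, so $(\system,r(0))\entails\jboxOAgent{t}\bot$, which gives $\not\entails^N_\CS\neg\jboxOAgent{t}\bot$.

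The step I expect to be the main obstacle is \nocNeighborhoodO. The difficulty is that \shiftrefNeighborhoodO\ forces all the reflection propositions $\truthsetModel{\jboxOAgent{s}\psi\limplies\psi}$ into the neighborhoods while I simultaneously insist that $\truthsetModel{\bot}$ be a neighborhood, so I must rule out that any neighborhood contains a complementary pair $\{\truthsetModel{\phi},\truthsetModel{\neg\phi}\}$. Because every set in every $\neighborhoodO(w,u)$ contains $v$, such a pair would yield $\phi,\neg\phi\in Y$, so the entire condition reduces to the \emph{consistency of $Y$}: no formula together with its negation lies in $Y$. This is where the shape of the reflection formulas is used: $\neg\bot=\bot\limplies\bot$ is not of the form $\jboxOAgent{s}\psi\limplies\psi$ (its antecedent is $\bot$, not a box), and the negation $(\jboxOAgent{s}\psi\limplies\psi)\limplies\bot$ of a reflection formula has an implication as antecedent and so is again neither a reflection formula nor $\bot$. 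Hence $Y$ contains no complementary pair, \nocNeighborhoodO\ holds, and the counter-model is complete; isolating and proving this syntactic consistency property of $Y$ (which is precisely what separates \nocax\ from the stronger \conax) is the crux.
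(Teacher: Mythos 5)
Your construction is exactly the counter-model the paper itself uses: the same two-state system $S=\{w,v\}$ with the constant run at $w$, the same neighborhoods $\neighborhood(w,t)=\{\{w\},S\}$ and $\neighborhoodO(w,t)=\{\{v\},S\}$, and the same valuation placing $\bot$ together with all formulas $\jboxOAgent{s}\psi\limplies\psi$ at the non-normal state $v$, so that $\truthsetModel{\bot}=\{v\}\in\neighborhoodO(w,t)$ refutes $\neg\jboxOAgent{t}\bot$. The paper dismisses the verification of the neighborhood conditions as easy, while you carry it out explicitly (correctly, including the key point that $\valuation(v)$ contains no complementary pair, which gives \nocNeighborhoodO); the approach is the same and the proof is correct.
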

\begin{proof}

	Let $\CS$ be a constant specification such that $\CS^\lobligatory = \emptyset$. Consider the interpreted-\Neigh system $\system = (S, \runs, \neighborhood, \neighborhoodO,  \valuation)_{\agent \in \Ag}$ as follows:
	\begin{itemize}
		\setlength\itemsep{0.1cm}
		\item 
		$S = \{ w, v\}$,
		
		\item 
		$\runs = \{ r  \}$, where $r : \N \to S$ is defined by $r(n) = w$ for all $n \in \N$,
		
		\item 
		$\neighborhood(w,t) = \{ \{ w \}, S \}$, for all $t \in \Terms$ and all $i \in \Ag$, 
		
		\item 
		%$\greenfunc^c(w) = \{ w \}$, for all $c \in \CTerms$ and all $i \in \Ag$,\\
		%$\greenfunc^x(w) = \{  v \}$, for all $x \in \VTerms$ and all $i \in \Ag$, \\
		$\neighborhoodO(w,t) =  \lset{ \lset{v}, S }$, for all $t \in \Terms$  and all $i \in \Ag$,

		\item 
		$\valuation(w) = \emptyset$, $\valuation(v) = \{ \bot \}  \cup \{ \jboxOAgent{t} \phi \limplies \phi \mid \phi \in \Formulae \}$.
	\end{itemize}
	It is easy to show that $\system$ is an interpreted-\Neigh system for $\JTO_\CS$. We observe that $(\system, r(n)) \not \entails \neg \jboxOAgent{t} \bot$.
 \qed
\end{proof}

Finally, we show that $\nocstrongax$ is not valid in $\JTO$.  

\begin{lemma}
	Given a constant specification $\CS$ for $\JTO$ such that $\CS^\lobligatory = \emptyset$, we have:
	$$
	\not \entails_{\CS} \jboxOAgent{s} \phi \limplies \jboxPAgent{t} \phi,
	$$
	 for some $\phi \in \Formulae$ and for some $s, t \in \Terms^\lobligatory$.
\end{lemma}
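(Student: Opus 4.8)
The plan is to falsify $\jboxOAgent{s}\phi\limplies\jboxPAgent{t}\phi$ in the F-interpreted (Fitting) semantics of Section~\ref{sec:Semantics Fiittng}, which is what $\entails_\CS$ refers to; recall that $\jboxPAgent{t}\phi$ abbreviates $\neg\jboxOAgent{t}\neg\phi$. Taking $\phi:=p$ and two \emph{distinct} justification variables $s,t\in\VTerms\subseteq\Terms^\lobligatory$, it suffices to build one system in which both $\jboxOAgent{s}p$ and $\jboxOAgent{t}\neg p$ hold at some point: then $\jboxPAgent{t}p$ is false while $\jboxOAgent{s}p$ is true, so the implication fails. The conceptual point is that the only condition tying $p$ and $\neg p$ together as obligations is the (consistency) condition on $\evidenceO_\agent$ in Definition~\ref{def:evidence function}, and this rules out a conflict only for one \emph{fixed} reason; distinct reasons $s\neq t$ are unconstrained. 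I would moreover exploit that $\accrelO_\agent$ need only be shift reflexive, hence may be taken empty, which makes the accessibility clause in the truth condition for $\jboxOAgent{}{}$ vacuous and reduces obligation to membership in the normative evidence function.

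Concretely, I would fix the distinguished agent $\agent$ and set $S=\{w\}$, $\runs=\{r\}$ with $r(n)=w$ for all $n$ (so the temporal operators are irrelevant to the purely justificational target formula), $\valuation(w)=\{p\}$, each $\accrel_{\agent'}=\{(w,w)\}$ (reflexive and transitive) and each $\accrelO_{\agent'}=\emptyset$ (vacuously shift reflexive), and each epistemic $\evidence_{\agent'}(w,\cdot)=\Formulae$ (which satisfies monotonicity, constant specification, application, and positive introspection trivially). For the normative evidence of $\agent$ I would take $\evidenceO_\agent(w,\cdot)$ to be the least function satisfying the (application-$\lobligatory$) and (obligated factivity) conditions together with the base requirements $p\in\evidenceO_\agent(w,s)$ and $\neg p\in\evidenceO_\agent(w,t)$; the (constant specification-$\lobligatory$) condition is vacuous because $\CS^\lobligatory=\emptyset$. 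Since $\accrelO_\agent(w)=\emptyset$ is contained in every truth set, $(\system,r,0)\entails\jboxOAgent{s}p$ and $(\system,r,0)\entails\jboxOAgent{t}\neg p$, giving $(\system,r,0)\not\entails\jboxPAgent{t}p$, as required.

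The real work, and the only delicate point, is the (consistency) condition for $\evidenceO_\agent$: no term may carry both a formula and its negation. I would verify this by computing the least closure explicitly, using unique readability of justification terms. The key observation is that application fires only when an implication already present in $\evidenceO_\agent(w,u_1)$ has its antecedent present in $\evidenceO_\agent(w,u_2)$; an easy induction on formula structure shows that no bare assertion $\jboxOAgent{u}\psi$ ever enters any evidence set, so the factivity implications sitting in the $\tref$-terms never trigger application, and the single remaining firing is $\neg p=(p\limplies\bot)\in\evidenceO_\agent(w,t)$ together with $p\in\evidenceO_\agent(w,s)$, which contributes only $\bot$ to $\evidenceO_\agent(w,t\tapp s)$. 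Hence the closure is the sparse function with values $\{p\}$ at $s$, $\{\neg p\}$ at $t$, $\{\bot\}$ at $t\tapp s$, the set $\{\jboxOAgent{u}\psi\limplies\psi \mid \psi\in\Formulae\}$ at each $\tref u$, and $\emptyset$ elsewhere; none of these sets contains a complementary pair, so (consistency) holds. With all frame and evidence conditions then met, $\system$ is a legitimate F-interpreted system for $\JTO_\CS$ witnessing $\not\entails_\CS\jboxOAgent{s}p\limplies\jboxPAgent{t}p$, and the main obstacle is precisely this explicit closure computation and the induction guaranteeing that application never manufactures a conflicting pair on any single term.
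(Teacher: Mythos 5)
Your proposal is correct, but it takes a genuinely different route from the paper. The paper refutes the formula in the \emph{neighborhood} semantics: it builds an interpreted-\Neigh system with three states, two of which lie outside $\Im(\runs)$ and are therefore non-normal (their valuations may contain both $p$ and $\neg p$), so that $\truthsetModel{p}=\lset{w,v}$ and $\truthsetModel{\neg p}=\lset{v,u}$ are distinct, non-complementary sets; placing the first in $\neighborhoodO(w,x)$ and the second in $\neighborhoodO(w,y)$ for distinct variables $x,y$ makes $\jboxO{x}{\agent} p\wedge\jboxO{y}{\agent}\neg p$ true while the per-term condition \nocNeighborhoodO{} is verified by mere inspection of the (tiny, explicitly listed) neighborhoods. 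You instead attack the F-interpreted (Fitting) semantics directly --- which in fact matches the literal $\entails_\CS$ in the statement, whereas the paper's proof reads it as $\entails^N_\CS$; by Theorems \ref{thm:Completeness-interpreted systems} and \ref{thm:Completeness-interpreted-neighborhood systems} the two validities coincide, so either countermodel suffices. Your key moves are sound: taking $\accrelO_\agent=\emptyset$ is legitimately shift reflexive and reduces $\jboxOAgent{u}\psi$ to membership in $\evidenceO_\agent(w,u)$, and your least-closure computation is right --- the only application that fires is $(p\limplies\bot)\in\evidenceO_\agent(w,t)$ against $p\in\evidenceO_\agent(w,s)$, contributing $\bot$ at $t\tapp s$, while the obligated-factivity implications never fire because their antecedents are justification assertions and (by your induction) no formula of that shape ever enters the closure; the resulting sparse function visibly satisfies (consistency). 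The trade-off is clear: the paper's non-normal states let it discharge the no-conflict condition set-theoretically with no closure argument, at the cost of introducing non-normal valuations, whereas your construction stays inside a one-state normal model but must pay for it with the explicit syntactic closure analysis --- which you carry out correctly.
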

\begin{proof}

	Let $\CS$ be a constant specification such that $\CS^\lobligatory = \emptyset$. Consider the interpreted-\Neigh system $\system = (S, \runs, \neighborhood, \neighborhoodO,  \valuation)_{\agent \in \Ag}$ as follows:
	\begin{itemize}
		\setlength\itemsep{0.1cm}
		\item 
		$S = \{ w, v, u\}$,
		
		\item 
		$\runs = \{ r  \}$, where $r : \N \to S$ is defined by $r(n) = w$ for all $n \in \N$,
		
		\item 
		$\neighborhood(w,t) = \{ \{ w \}, S \}$, for all $t \in \Terms$ and all $i \in \Ag$, 
		
		\item 
		$\neighborhoodO(w,x) =  \lset{ \lset{ w, v } }$, for all $t \in \Terms$  and all $i \in \Ag$, \\
		$\neighborhoodO(w,y) =  \lset{ \lset{v, u} }$, for all $t \in \Terms$  and all $i \in \Ag$, \\
		$\neighborhoodO(w,c) =  \neighborhoodO(w,z) = \emptyset$, for all $z \in \VTerms$ s.t. $z \neq x,y$ and all $i \in \Ag$, \\
		$\neighborhoodO(w,t) =  \lset{ \lset{v}, \lset{ w, v } }$, for all $t \in \Terms$  and all $i \in \Ag$,

		\item 
		$\valuation(w) = \lset{ p }$, $\valuation(v) =\lset{ p, \neg p}  \cup \{ \jboxOAgent{t} \phi \limplies \phi \mid \phi \in \Formulae \}$, and $\valuation(u) = \lset{ \neg p }$.
	\end{itemize}
	It is easy to show that $\system$ is an interpreted-\Neigh system for $\JTO_\CS$. We observe that $(\system, r(n)) \entails  \jboxOAgent{x} p \wedge \jboxOAgent{y} \neg p$.
	\qed
\end{proof} 

In the rest of this section we show soundness and completeness of $\JTO_\CS$ with respect to interpreted-\Neigh systems. To this end, we first transform each F-interpreted system into an equivalent interpreted-\Neigh system, and then we use completeness of $\JTO_\CS$ with respect to F-interpreted systems.

%\begin{definition}
%	Let $\model$ and $\model'$ be two possible world models. We say that $\model$ and $\model'$ are \textit{equivalent with respect to a certain set of worlds $S$} if for all $\phi \in \Formulae$ and for all $w \in S$,
%	%
%	\[
%	(\model, w) \entails \phi  \mbox{ iff } (\model', w) \entails \phi.
%	\] 
%\end{definition}

\begin{lemma}\label{lem: Fitting to neighborhood un-conditional}
	Given an F-interpreted system $\system^F = (S, \runs,  \accrel_\agent, \evidence_\agent, \accrelO_\agent, \evidenceO_\agent,  \valuation)_{\agent \in \Ag}$ for $\JTO_\CS$,	there is an interpreted-\Neigh system $\system$ for $\JTO_\CS$ with the same set of runs which is equivalent to $\system^F$ with respect to $\Im(\runs)$, namely for all $\phi \in \Formulae$ and for all $r(n) \in \Im(\runs)$:
	\[
	(\system^F,r,n) \entails \phi  \mbox{ iff } (\system, r(n)) \entails \phi.
	\] 

\end{lemma}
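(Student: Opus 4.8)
The plan is to leave the runs of $\system^F$ untouched and to adjoin, for every formula, one dedicated non-normal state whose only role is to make the assignment $\phi \mapsto \truthsetModel{\phi}$ injective; this injectivity is what lets a single neighborhood function faithfully record the \emph{hyperintensional} content of the evidence functions of $\system^F$. Concretely I would put $S' := \Im(\runs) \cup \{ w_\phi \mid \phi \in \Formulae \}$ with the $w_\phi$ fresh and pairwise distinct, keep $\runs$ unchanged (so $\Im(\runs)$ is the same and every $r(n)$ is a normal state), and define the new valuation by $\valuation'(r(n)) := \valuation(r(n))$ on $\Im(\runs)$ and $\valuation'(w_\phi) := \{\phi\}$ on the new states. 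Because truth at a non-normal state is read directly off the valuation, $(\system, w_\psi) \entails \phi$ holds iff $\phi = \psi$; hence each truth set $\truthsetModel{\phi}$ meets $S' \setminus \Im(\runs)$ in exactly the singleton $\{w_\phi\}$, so $\phi \mapsto \truthsetModel{\phi}$ is injective.

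I would then define the neighborhood functions directly from the data of $\system^F$ (and not from the truth sets of $\system$, so as to keep the later induction non-circular):
\[
\neighborhood(r(n),t) := \bigl\{\, \truthset{\phi}{\system^F} \cup \{w_\phi\} \;\bigm|\; \phi \in \evidence_\agent(r(n),t) \text{ and } \accrel_\agent(r(n)) \subseteq \truthset{\phi}{\system^F} \,\bigr\},
\]
and symmetrically $\neighborhoodO(r(n),t)$ with $\evidenceO_\agent, \accrelO_\agent$ in place of $\evidence_\agent, \accrel_\agent$. The two conjuncts in each clause are exactly the two conjuncts of the F-interpreted truth condition for $\jboxAgent{t}\phi$ (resp. $\jboxOAgent{t}\phi$).

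Next I would prove, by induction on $\phi$, the sharpened statement $\truthsetModel{\phi} = \truthset{\phi}{\system^F} \cup \{w_\phi\}$, whose restriction to $\Im(\runs)$ is the asserted equivalence (the non-normal coordinate being already fixed by the valuation). The atomic, Boolean and temporal cases are routine, since both systems share the same runs and use identical clauses mentioning only states of $\Im(\runs)$; note that, truth being already defined for \emph{quasi}-interpreted-\Neigh systems, this induction does not presuppose that $\system$ meets the interpreted-system conditions. The only case of substance is $\phi = \jboxAgent{t}\psi$: by the induction hypothesis $\truthsetModel{\psi} = \truthset{\psi}{\system^F} \cup \{w_\psi\}$, and $\truthsetModel{\psi} \in \neighborhood(r(n),t)$ holds iff some witness $\chi$ has $\truthset{\chi}{\system^F} \cup \{w_\chi\} = \truthset{\psi}{\system^F} \cup \{w_\psi\}$ together with $\chi \in \evidence_\agent(r(n),t)$ and $\accrel_\agent(r(n)) \subseteq \truthset{\chi}{\system^F}$. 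Matching the unique non-normal coordinates forces $w_\chi = w_\psi$, hence $\chi = \psi$, so the condition collapses to $\psi \in \evidence_\agent(r(n),t)$ and $\accrel_\agent(r(n)) \subseteq \truthset{\psi}{\system^F}$, i.e. to $(\system^F, r, n) \entails \jboxAgent{t}\psi$; the deontic case is identical.

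Finally it remains to verify that $\system$ really is an interpreted-\Neigh system, i.e. that $\neighborhood, \neighborhoodO$ satisfy the conditions of Definition \ref{def: interpreted sysytems}. Each of these transcribes the soundness in $\system^F$ of one justification axiom and is obtained from the matching closure condition on $\evidence_\agent$ (resp. $\evidenceO_\agent$) and the frame properties, again reading off the witness via injectivity and the equivalence just proved: \appNeighborhood from (application) together with $\accrel_\agent(r(n)) \subseteq \truthset{\phi \limplies \psi}{\system^F} \cap \truthset{\phi}{\system^F} \subseteq \truthset{\psi}{\system^F}$; \posintNeighborhood from (positive introspection), (monotonicity) and transitivity of $\accrel_\agent$; \refNeighborhood from reflexivity of $\accrel_\agent$; \csNeighborhood from (constant specification) together with the validity in $\system^F$ of the matrices of constant-specification formulas; \shiftrefNeighborhoodO from (obligated factivity) and shift reflexivity of $\accrelO_\agent$; and \nocNeighborhoodO from (consistency). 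I expect the genuine difficulty to be precisely the hyperintensionality the paper stresses: a neighborhood sees only the truth set of a formula, so without the states $w_\phi$ two distinct but $\Im(\runs)$-equivalent formulas would be identified and the evidence content lost. The delicate points are thus to choose the non-normal states so that they separate \emph{all} formulas yet leave truth on $\Im(\runs)$ untouched, and to order the construction so that $\neighborhood, \neighborhoodO$ are defined from $\system^F$ alone; once this is arranged, the verification of the individual neighborhood conditions is mechanical.
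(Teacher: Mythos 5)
Your proposal is correct and, in its skeleton, coincides with the paper's proof: both keep the runs, define $\neighborhood(r(n),t)$ as the collection of truth sets of those $\phi$ with $\phi \in \evidence_\agent(r(n),t)$ and $\accrel_\agent(r(n)) \subseteq \truthset{\phi}{{\system^F}}$ (dually for $\neighborhoodO$), prove by induction that truth sets are preserved, and then check the closure conditions of Definition~\ref{def: interpreted sysytems} from those of Definition~\ref{def:evidence function} plus the frame properties. The genuine difference is your adjunction of the non-normal witness states $w_\phi$ with $\valuation(w_\phi)=\{\phi\}$. The paper instead sets $S' := \Im(\runs)$, so its truth-set map need not be injective, and the right-to-left half of its pivotal step --- from $\truthset{\psi}{{\system^F}} \in \neighborhood(r(n),t)$ back to $\psi \in \evidence_\agent(r(n),t)$ --- is not justified: if some $\chi \neq \psi$ with the same truth set over $\Im(\runs)$ (say $p$ versus $p \wedge p$) is the formula actually lying in $\evidence_\agent(r(n),t)$, one gets $(\system, r(n)) \entails \jboxAgent{t}\psi$ while $(\system^F,r,n) \not\entails \jboxAgent{t}\psi$, and the biconditional of the lemma fails. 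This is exactly the hyperintensionality that the paper's own non-admissibility lemma for $(\mathsf{JRE})$ exploits via a non-normal state with $\valuation(v)=\{p\wedge p\}$, so your repair uses machinery the semantics already provides; what it buys is that matching the unique non-normal coordinate forces the witness $\chi$ to be $\psi$ itself, which is precisely what the induction step and the condition checks (e.g.\ \refNeighborhood, \nocNeighborhoodO) need. Your verification sketch of the closure conditions correctly pairs each with the relevant evidence condition and frame property (reflexivity, transitivity, monotonicity, shift reflexivity, and validity of the constant-specification matrices); the only loose end is \sumNeighborhood, which you do not discuss and which in any case cannot be derived from Definition~\ref{def:evidence function} as printed, since no sum-closure condition on $\evidence_\agent$ is listed there --- an omission inherited from the paper rather than introduced by you.
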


\begin{proof}
	Suppose that $\system^F = (S, \runs,  \accrel_\agent, \evidence_\agent, \accrelO_\agent, \evidenceO_\agent,  \valuation)_{\agent \in \Ag}$ is an F-interpreted system for $\JTO_\CS$. Recall that the truth set of $\phi$ in $\system^F$ is defined as follows:
	$$\truthset{\phi}{{\system^F}} = \{ r(n) \in \Im(\runs) \mid  (\system^F, r,n) \entails \phi \}.$$
	 Let $\system = (S', \runs, \neighborhood, \neighborhoodO,  \valuation)_{\agent \in \Ag}$ be defined as follows:
	\begin{itemize}
		\setlength\itemsep{0.2cm}
		
		\item $S' := \Im(\runs)$.
			
		\item $\neighborhood (r(n),t) := \{ \truthset{\phi}{{\system^F}} \mid \phi \in \evidence_\agent(r(n),t) \text{ and } \accrel_\agent(r(n)) \subseteq \truthset{\phi}{{\system^F}} \}$, for all $t \in \Terms^\lepistemic$, all $r(n) \in \Im(\runs)$, and all $i \in \Ag$. 
		
		\item $\neighborhoodO (r(n),t) := \{ \truthset{\phi}{{\system^F}} \mid \phi \in \evidenceO_\agent(r(n),t) \text{ and } \accrelO_\agent(r(n)) \subseteq \truthset{\phi}{{\system^F}} \}$, for all $t \in \Terms^\lobligatory$, all $r(n) \in \Im(\runs)$, and all $i \in \Ag$.

		%\item $\greenfunc^t (r(n)) := \bigcap \lset{ \truthset{\phi}{{\system^F}} \mid \phi \in \evidenceO_\agent(r(n),t) \text{ and } \accrelO_\agent(r(n)) \subseteq \truthset{\phi}{{\system^F}} }$.
		
		%\item $\valuation' (r(n)) := \valuation (r(n))$, for all $r(n) \in S'$.
		
	\end{itemize}
	%
	%Note that $\system^N$ has no non normal world. 

	In order to prove that $\system$ and $\system^F$ are equivalent with respect to $\Im(\runs)$, we  show that for all $\phi \in \Formulae$:
	\begin{equation}\label{eq: Fitting to neighborhood un-conditional}
		\truthset{\phi}{{\system}} = \truthset{\phi}{{\system^F}}.
	\end{equation}
	
	This can be shown by induction on $\phi$, and we only show the case $\phi = \jboxAgent{t} \psi$ and $\phi = \jboxOAgent{t} \psi$. 
	\begin{itemize}
		\setlength\itemsep{0.1cm}
		\item $\phi = \jboxAgent{t} \psi$.
		
			$(\system^F, r, n) \entails \jboxAgent{t} \psi$ iff  $\psi \in \evidence_\agent(r(n),t)$ and $\accrel_\agent(r(n)) \subseteq \truthset{\psi}{{\system^F}}$ iff  $\truthset{\psi}{{\system^F}} \in \neighborhood(r(n),t)$, by the induction hypothesis, iff $\truthset{\psi}{{\system}} \in \neighborhood(r(n),t)$ iff $(\system, r(n)) \entails \jboxAgent{t} \psi$. 
			
		\item $\phi = \jboxOAgent{t} \psi$.
		
		$(\system^F, r, n) \entails \jboxOAgent{t} \psi$ iff  $\psi \in \evidenceO_\agent(r(n),t)$ and $\accrelO_\agent(r(n)) \subseteq \truthset{\psi}{{\system^F}}$ iff  $\truthset{\psi}{{\system^F}} \in \neighborhoodO(r(n),t)$, by the induction hypothesis, iff $\truthset{\psi}{{\system}} \in \neighborhoodO(r(n),t)$ iff $(\system, r(n)) \entails \jboxOAgent{t} \psi$. 
			
	\end{itemize}

	It is easy to show that $\system$ is an interpreted-\Neigh system for $\JTO_\CS$. As an illustration, we show that $\system$ satisfies $\appNeighborhood$.
	
	Suppose $\truthset{\phi \limplies \psi}{{\system}} \in \neighborhood(r(n),t)$ and $\truthset{\phi}{{\system}} \in \neighborhood(r(n),s)$. By \eqref{eq: Fitting to neighborhood un-conditional}, we have $\truthset{\phi \limplies \psi}{{\system^F}} \in \neighborhood(r(n),t)$ and $\truthset{\phi}{{\system^F}} \in \neighborhood(r(n),s)$. Thus, we get 
	$$
	\phi \limplies \psi \in \evidence(r(n),t), \quad \phi \in \evidence(r(n),s), \quad \accrel_\agent(r(n)) \subseteq \truthset{\phi}{{\system^F}}, \text{ and~~ } \accrel_\agent(r(n)) \subseteq \truthset{\phi \limplies \psi}{{\system^F}}.
	$$
	 By (application), we obtain  $\psi \in \evidence(r(n), t \tapp s)$. Moreover, since $\truthset{\phi}{{\system^F}} \cap \truthset{\phi \limplies \psi}{{\system^F}} \subseteq \truthset{\psi}{{\system^F}}$, we get $\accrel_\agent(r(n)) \subseteq \truthset{\psi}{{\system^F}}$. Therefore $\truthset{\psi}{{\system^F}} \in \neighborhood(r(n), t \tapp s)$, and $\truthset{\psi}{{\system}} \in \neighborhood(r(n), t \tapp s)$ as required. \qed
\end{proof}

\begin{theorem}[Completeness]\label{thm:Completeness-interpreted-neighborhood systems}
	Let  $\CS$ be a constant specification for $\JTO$. For each formula $\chi$,
	\[
	\models^N_{\CS}  \chi  \quad\text{if{f}}\quad  \vdash_\CS \chi.
	\]
\end{theorem}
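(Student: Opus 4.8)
The plan is to establish the two directions separately. Soundness, i.e.\ $\vdash_\CS \chi$ implies $\models^N_\CS \chi$, is proved by induction on the length of a derivation, checking that each axiom is valid in every interpreted-\Neigh system and that each rule preserves validity. The justification axioms are matched one-to-one with the neighborhood conditions of Definition \ref{def: interpreted sysytems}: \appax and \appOax are discharged by \appNeighborhood and \appNeighborhoodO, \sumax by \sumNeighborhood, \refax by \refNeighborhood, \posintax by \posintNeighborhood, \nocax by \nocNeighborhoodO, and \shiftrefax by \shiftrefNeighborhoodO, while the two iterated constant-necessitation rules are discharged directly by \csNeighborhood and \csNeighborhoodO, reading the immediate subformula as the $\phi$ of those conditions. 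In each case the point is that $(\system, r(n)) \entails \jboxAgent{t} \phi$ merely abbreviates $\truthsetModel{\phi} \in \neighborhood(r(n),t)$ (and similarly for $\neighborhoodO$), so the relevant closure condition on the neighborhood function delivers the membership demanded by the consequent; for \refNeighborhood one additionally uses that $r(n) \in \truthsetModel{\phi}$ forces $(\system, r(n)) \entails \phi$. The temporal axioms and the temporal necessitation rules are validated exactly as over F-interpreted systems, since the truth clauses for $\lnext, \lwprevious, \luntil, \lsince$ on $\Im(\runs)$ in Definition \ref{def: truth interpreted systems} coincide in form with those of Definition \ref{def:truth conditions interpreted systems} and depend only on the linear discrete structure of $\N$.

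For completeness I would argue by contraposition, and here the real work is already done by the transfer Lemma \ref{lem: Fitting to neighborhood un-conditional}. Suppose $\not\vdash_\CS \chi$. By the completeness of $\JTO_\CS$ with respect to F-interpreted systems (Theorem \ref{thm:Completeness-interpreted systems}) there are an F-interpreted system $\system^F = (S, \runs, \ldots)$, a run $r \in \runs$, and some $n \in \N$ with $(\system^F, r, n) \not\models \chi$. Applying Lemma \ref{lem: Fitting to neighborhood un-conditional} yields an interpreted-\Neigh system $\system$ with the same runs such that $(\system^F, r, n) \entails \phi$ iff $(\system, r(n)) \entails \phi$ for all $\phi$ and all $r(n) \in \Im(\runs)$. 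In particular $(\system, r(n)) \not\models \chi$ with $r(n) \in \Im(\runs)$, which witnesses $\not\models^N_\CS \chi$. This gives the desired implication $\models^N_\CS \chi \Rightarrow \vdash_\CS \chi$.

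The genuine obstacle of the whole development is thus encapsulated in the transfer lemma rather than in this theorem, which is essentially a corollary of it together with the earlier completeness result. The only point requiring care is in the soundness half: one must keep track of the fact that membership in a neighborhood is phrased in terms of the \emph{global} truth set $\truthsetModel{\phi} = \{ w \in S \mid (\system, w) \entails \phi \}$ over all of $S$, including the possibly non-normal states of $S \setminus \Im(\runs)$, whereas truth of the enclosing formula is asserted only at the normal points $r(n) \in \Im(\runs)$. It is precisely this global/local split that makes \JTO\ hyperintensional and blocks the regularity rule $(\sf{JRE})$, so when verifying each justification axiom one should confirm that the non-normal states neither spuriously merge nor separate the truth sets involved; the conditions of Definition \ref{def: interpreted sysytems} are stated so that this interplay stays transparent and the verifications remain routine.
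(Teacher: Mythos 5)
Your proposal matches the paper's proof: soundness is dispatched by routine verification of the axioms against the neighborhood conditions (the paper simply calls it ``straightforward''), and the completeness direction is obtained, exactly as in the paper, by contraposition from Theorem~\ref{thm:Completeness-interpreted systems} together with the transfer Lemma~\ref{lem: Fitting to neighborhood un-conditional}. Your additional remarks on the global truth set over $S \setminus \Im(\runs)$ and on the handling of the iterated constant-necessitation rules are correct elaborations of details the paper leaves implicit.
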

\begin{proof}
	The soundness part is straightforward.	For completeness, suppose that $\not\vdash_\CS \chi$. By Theorem \ref{thm:Completeness-interpreted systems}, there is an F-interpreted system $\system^F = (S, \runs,  \accrel_\agent, \evidence_\agent, \accrelO_\agent, \evidenceO_\agent,  \valuation)_{\agent \in \Ag}$, and there is $r(n) \in \Im(\runs)$ such that $(\system^F, r, n) \not\models \chi$. By Lemma \ref{lem: Fitting to neighborhood un-conditional}, there is an interpreted-\Neigh system $\system$ for $\JTO_\CS$ with the same set of runs which is equivalent to $\system^F$ with respect to $\Im(\runs)$. Hence,  
	  $(\system, r(n)) \not\models \chi$. Therefore, $\not\models^N_{\CS} \chi$.  \qed
\end{proof}

\begin{theorem}[Completeness]\label{thm:Weak Completeness-interpreted-neighborhood systems}
	Let  $\CS$ be a constant specification for $\JTO$. For each formula $\chi$ and finite set of formulas $T$,
	\[
	T \models^N_{\CS}  \chi  \quad\text{if{f}}\quad  T \vdash_\CS \chi.
	\]
\end{theorem}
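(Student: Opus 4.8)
The statement is the local (finite-premise) completeness theorem for the neighborhood semantics, and the natural strategy is to mirror exactly the proof of Theorem~\ref{thm:Weak Completeness-interpreted systems}, but now reducing to the premise-free neighborhood completeness result, Theorem~\ref{thm:Completeness-interpreted-neighborhood systems}, rather than to Theorem~\ref{thm:Completeness-interpreted systems}. Since $T$ is finite, the whole argument rests on internalizing $T$ as the single formula $\bigwedge T$ and then quoting the already-established equivalence of $\vdash_\CS$ with $\models^N_\CS$ on individual formulas. I would treat the two directions separately.

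For soundness ($T \vdash_\CS \chi \Rightarrow T \models^N_\CS \chi$), I would unfold the definition of derivation from assumptions: $T \vdash_\CS \chi$ gives $\psi_1,\ldots,\psi_n \in T$ with $\vdash_\CS (\psi_1 \land \cdots \land \psi_n) \to \chi$. By the soundness half of Theorem~\ref{thm:Completeness-interpreted-neighborhood systems}, this implication is valid in every interpreted-\Neigh system. Now fix any such $\system$ and any $r(n) \in \Im(\runs)$ with $(\system, r(n)) \models \psi$ for all $\psi \in T$. Because the truth clauses for $\limplies$ (and hence, through the abbreviations, for $\land$) are the classical ones at every normal state $r(n) \in \Im(\runs)$, the conjunction $(\system, r(n)) \models \psi_1 \land \cdots \land \psi_n$ holds, and the valid implication then yields $(\system, r(n)) \models \chi$. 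Hence $T \models^N_\CS \chi$.

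For completeness ($T \models^N_\CS \chi \Rightarrow T \vdash_\CS \chi$) I would argue by contraposition. Assume $T \not\vdash_\CS \chi$. By the definition of $\vdash_\CS$ from a set of assumptions together with the Deduction Theorem, and using finiteness of $T$, this gives $\not\vdash_\CS \bigwedge T \to \chi$. By Theorem~\ref{thm:Completeness-interpreted-neighborhood systems} there is an interpreted-\Neigh system $\system = (S, \runs, \ldots)$ and a point $r(n) \in \Im(\runs)$ with $(\system, r(n)) \not\models \bigwedge T \to \chi$. Reading off the classical clause for $\limplies$ at the normal state $r(n)$, this means $(\system, r(n)) \models \bigwedge T$ and $(\system, r(n)) \not\models \chi$; and $(\system, r(n)) \models \bigwedge T$ unpacks to $(\system, r(n)) \models \psi$ for every $\psi \in T$. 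Thus $r(n)$ witnesses $T \not\models^N_\CS \chi$, completing the contrapositive.

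The proof is essentially routine bookkeeping, so there is no genuine obstacle; the only point demanding care is the interface between the syntactic device $\bigwedge T$ and the semantics. Specifically, one must note that the witnessing point $r(n)$ produced by Theorem~\ref{thm:Completeness-interpreted-neighborhood systems} lies in $\Im(\runs)$, where the Boolean connectives are interpreted classically (the non-normal states of $S \setminus \Im(\runs)$, whose valuation may contain both $\phi$ and $\neg\phi$, never enter as evaluation points and so cannot disrupt the equivalence $(\system, r(n)) \models \bigwedge T \iff \forall \psi \in T,\ (\system, r(n)) \models \psi$). Flagging that this classical reading holds precisely on $\Im(\runs)$ is the one place where the neighborhood setting differs from the Fitting-model argument of Theorem~\ref{thm:Weak Completeness-interpreted systems}, and is where I would be most explicit.
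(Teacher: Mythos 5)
Your proposal is correct and follows exactly the route the paper takes: the paper's own proof is simply ``Follows from Theorem~\ref{thm:Completeness-interpreted-neighborhood systems},'' implicitly using the same reduction of the finite premise set $T$ to the single implication $\bigwedge T \to \chi$ that is spelled out for the Fitting-model case in Theorem~\ref{thm:Weak Completeness-interpreted systems}. You merely make explicit the bookkeeping (and the observation that the witnessing point lies in $\Im(\runs)$, where the Boolean connectives are classical) that the paper leaves tacit.
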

\begin{proof}
Follows from Theorem \ref{thm:Completeness-interpreted-neighborhood systems}.  \qed
\end{proof}

\section{The Protagoras Paradox}
\label{sec: Protagoras Paradox}

In this section we attempt to solve the Protagoras Paradox from a logical point of view. In particular, we focus on both the temporal aspects of the paradox and on those grounds and pieces of evidence, namely normative reasons, that agents base their reasoning on. Various versions of the paradox have been proposed in the literature.\footnote{This paper do not argue about historical issues, such as what actually happened. For such issues refer to \cite{Sobel1987} and \cite{Jankowski2015}.} We consider the following version.

The paradox is about the famous Greek law teacher Protagoras who took a poor but talented 
student Euathlus and agreed to teach him without a fee on condition that 
after the student completed his studies, he would pay Protagoras a certain fee on the day when he win his first court-case\footnote{See \cite{Sobel1987} for a discussion on ``first win a case" or ``win first case" condition.}  (not necessarily as a lawyer)\footnote{If it is stipulated in the contract that Euathlus himself should plead before jurors and win his court-case, then one trivial way for Euathlus to avoid paying the fee is that he would hire a lawyer to plead before jurors. Because, if he win with a lawyer, his victory would be non-paradoxical, and if he lose with a lawyer, he would not yet have won his first case. } %(as a lawyer)
 
%after the student completed his studies and won his 
%first law case\footnote{See Sobel's paper for the discussion on ``first win" or ``win first case" condition.} (as a lawyer), he would pay Protagoras a certain fee on the day when he first pleaded before jurors and won his case.

Protagoras taught Euathlus according to the contract. However, after the lessons were finished, Euathlus  did not
take any law cases. Some time elapsed and Protagoras sued his student for the teaching fee, and in court each pleads his own case. The following arguments were presented to the judge in the court.

Protagoras: If I win this case, then Euathlus has to pay me by virtue of
your verdict. On the other hand, if I do not win the case, then he will won his first case, hence he has to pay me, this time by virtue of our agreement. %\footnote{If Euathlus decides not to defend himself, then this part of Protagoras' reasoning would not be valid. Thus the best Euathlus can do is not to defend himself and thus to continue deferral of payment until he wins a	case.} 
In either case, he has to pay me. Therefore, he has to pay me my fee.

Euathlus: If Protagoras wins the case, then I will not yet have won my first case, so, by our agreement, I don't have to pay. If Protagoras does not win the case, then, by your verdict, I don't have to pay. In either case, I don't have to pay him. Hence I don't have to
pay the fee.\footnote{We suppose here that cases are numbered according to their commencement	dates, rather than their conclusion dates. In addition, the court resolves the case by a yea or nay verdict, and not by dismissal or suspension. Otherwise, we need a four-valued logic for formalization. In fact, in an ancient rendition of the case (e.g. Aulus Gellius, The Attic Nights, (f.c. 150 A.D.)), the court decides to postpone its decision (deferral), and in another version (e.g. Hermogenes)  the case is dismissed as not really fitting a courtroom. For more details see \cite{Jankowski2015}.}\\

For the history and other versions of the paradox see \cite{Sobel1987,Jankowski2015}, and for a formalization of the paradox see \cite{Aqvist1995,Lenzen1977,Lukowski2011,Glavanicova-Pascucci-DEON-2021}.

There are two aspects of the problem that have to be considered. One is  the logical
question of how to resolve the paradoxical situation that both
arguments seem cogent, yet have contradictory conclusions (this issue is addressed in Section \ref{sec: Formalization of the paradox}. The other is the legal question
of what the court should rule (this issue is addressed in Section \ref{sec:Judges reasoning}). 

One could recognize a self-reference argument in the paradox. For example, Sobel writes \cite[page 10]{Sobel1987}:

\begin{quote}
The court could realize that were it to rule that
there \textit{was} an obligation under the contract and rule against the
student in his first case, there would be \textit{no} obligation under the contract. And it could realize that were it to find that there was \textit{no} obligation under the contract and rule for the student in his first case, there would \textit{be} an obligation under the contract.
\end{quote}

Some authors conclude that this specific court case about the contract is not the first case as referred to in the contract, and thus the contract is not applicable in this court case. In contrast, we think that the court can apply the contract in this case, and the root of the contradiction is somewhere else. Indeed, as Jankowski mentioned \cite[page 103]{Jankowski2015}: 
\begin{quote}
	``the ‘perplexity’ can
	be avoided if a temporal element is taken into consideration."
\end{quote}

 In fact, we will see how formalizing the paradox using temporal operators vanishes the problem of self-reference. We also argue that the paradox can be satisfactorily resolved if normative reasons are mentioned explicitly in the arguments of Protagoras and Euathlus.

%%%%%%%%%%%%%%%
\subsection{Formalization in $\JTO$}
\label{sec: Formalization of the paradox}

In the rest of this paper, I aim to formalize the paradox in $\JTO_\CS$. Since the constant specification $\CS$ plays no role in our analysis, we assume that $\CS$ is an arbitrary constant specification throughout this section. Consider the following dictionary:\\

\noindent Agents:\\
$\protagoras:=$ Protagoras.\\
$\euathlus :=$ Euathlus.\\
$\judge :=$ Judge.\\

\noindent Atomic propositions:\\
$\winP :=$ Protagoras wins the case (or judge awards the case to Protagoras).\\
$\winE :=$ Euathlus wins his first court-case.\\
$\pay :=$ Euathlus pays Protagoras his fee. \\%(or the court ordered damages).\\
$\sueP :=$ Protagoras sues Euathlus.\\
%$\winPsecond :=$ Protagoras wins the second case (or judge awards the second case to Protagoras).\\

\noindent Justification variables:\\
$\agreement :=$ The agreement reached between Protagoras and Euathlus.\\
$\verdictP :=$ Judge's verdict in favor of Protagoras.\\
$\verdictE :=$ Judge's verdict in favor of Euathlus. \\

Thus, the set of agents are $\Ag = \lset{ \protagoras, \euathlus, \judge }$.
Moreover, let the time of agreement be 0, let the time that Protagoras sues  Euathlus be 5, and let the time of pronouncement of the court be 10.

%%%%%%%%%%%%%%%%%%%%%%%%%%%%%%%%%%%%%%%%%%%%%%%%%%%%%%%%%%%%%%%%%%%%%%%%%%%%%%%%%%%%%%%

\paragraph{\textbf{Formalization}.}

First let us formalize the proposition ``Euathlus \textit{has to} pay Protagoras." Note that Euathlus has to pay iff he has an obligation (of some kind, by terms of the contract or by judge's verdict) to pay. This can be formalized by
$$
\jboxO{t}{\euathlus} \pay,
$$
where $t$ is a normative reason why Euathlus has to pay the fee.

Next let us formalize the contract between Protagoras and  Euathlus. One possible simple formalization is as follows:
\[
\contract := \lalwaysPastFuture(\winE \liff  \jboxO{\agreement}{\euathlus} \pay).
\]
%\[
%( \neg \jboxO{\agreement}{\euathlus} \pay \luntil \winE) \wedge (\winE \limplies  \jboxO{\agreement}{\euathlus} \pay ).
%\]

It is worth noting that in  \cite{Lenzen1977,Aqvist1995,Glavanicova-Pascucci-DEON-2021} the contract between Protagoras and Euathlus is formalized by means of a necessity modality. Thus, instead of the temporal necessity operator `always' $\lalwaysPastFuture$ a necessity operator $L$ is used, and the contract can be formalized as follows:
\[
L(\winE \liff  \jboxO{\agreement}{\euathlus} \pay),
\]
which is read as ``it is necessary to see to it that Euathlus has an obligation to pay by the terms of the agreement if and only if he wins his first case." Although it is possible to extend the language of $\JTO$ with a necessity modality $L$, I will not pursue the contribution of this notion to the logic $\JTO$, and so I will proceed with the temporal necessity operator $\lalwaysPastFuture$ as is used in $\contract$.

Then we need the following assumption about the court:
\begin{align*}
	\court  := &\ltrue_{10}[(\neg \winP \leftrightarrow \winE) \wedge\\
	&  (\winP \limplies \jboxO{\verdictP}{\euathlus} \pay ) \wedge \\
	& (\neg \winP \limplies \neg \jboxO{\verdictE}{\euathlus} \pay)]. 
\end{align*}
%\[
%\court := (\neg \winP \leftrightarrow \winE) \wedge  (\winP \limplies  \jboxO{\verdictP}{\euathlus} \pay) \wedge (\neg \winP \limplies \neg \jboxO{\verdictE}{\euathlus} \pay). 
%\]
%
$\court$ says that ``at time 10 (the time of the court pronouncement) we have that Protagoras does not win the court case if and only if Euathlus wins his first case, and if judge awards the case to Protagoras then according to the judge's verdict ($\verdictP$) Euathlus has to pay the fee, and if judge awards the case to Euathlus then according to the judge's verdict ($\verdictE$) there is no obligation for Euathlus to pay." Note that these statements only hold for this special court case (which is at the same time Euathlus' first case), and may not hold for other cases that Protagoras and Euathlus are involved, and thus these statements only hold at time 10.

The set of assumptions used in the formalization of the paradox is:
\[
\Gamma := \{\contract, \court\}. 
\]
%

%or
%\[
%\Gamma_1 = \{\jbox{a}_\mathsf{E} (time \geq 0 \limplies (\neg \pay \lunless \winE) \wedge (E \limplies \pay)), time=10 \limplies (\neg \winP \leftrightarrow E) \wedge  (\winP \limplies \jbox{v_1}_\mathsf{E} \pay) \wedge (\neg \winP \limplies \jbox{v_2}_\mathsf{E} \neg \pay) \}. 
%\]
%or (use \alwaysaccessprinciple)
%\[
%\Gamma_1 = \{\jbox{a}_\mathsf{E} \lalways( (\neg \pay \lunless \winE) \wedge (E \limplies \pay)), time=10 \limplies (\neg \winP \leftrightarrow E) \wedge  (\winP \limplies \jbox{v_1}_\mathsf{E} \pay) \wedge (\neg \winP \limplies \jbox{v_2}_\mathsf{E} \neg \pay) \}. 
%\]
%or (use \generalizeprinciple)
%\[
%\Gamma_1 = \{\lalways\jbox{a}_\mathsf{E} ( (\neg \pay \lunless \winE) \wedge (E \limplies \pay)), time=10 \limplies (\neg \winP \leftrightarrow E) \wedge  (\winP \limplies \jbox{v_1}_\mathsf{E} \pay) \wedge (\neg \winP \limplies \jbox{v_2}_\mathsf{E} \neg \pay) \}. 
%\]
%\[
%\Gamma_1 = \{\jbox{a}_\mathsf{E} ((\neg \pay \lunless \winE) \wedge (E \limplies \pay)), time=10 \limplies (\neg \winP \leftrightarrow E), 5 \leq time \leq 10 \limplies  (\winP \limplies \jbox{v_1}_\mathsf{E} \pay) \wedge (\neg \winP \limplies \jbox{v_2}_\mathsf{E} \neg \pay) \}. 
%\]

%In this context it is better to read $\jbox{t}_i \phi$ as ``agent $i$ accepts $\phi$ because of $t$".

%\limpliesdo{Replace $\neg \winP \leftrightarrow E$ with $\Diamond(\neg \winP \leftrightarrow E)$ or something like this: $S_1 \limplies (\neg \winP \leftrightarrow E)$ or $S_1 \limplies \Diamond (\neg \winP \leftrightarrow E)$.}

Let us now formalize the  reasoning  of Protagoras and Euathlus presented in the court. Note that both Protagoras' and Euathlus's arguments take place between time instances 5 and 10, but both arguments refer to state of affairs at $\ltime = 10$.  

%%%%%%%%%%%%%%%%%%%%%%%%%%%%%%%%%%%%%%%%%%%%%%%%%
\paragraph{\textbf{Protagoras reasoning}:}

If I win this case, then Euathlus has to pay me by virtue of
your verdict.

\begin{enumerate}
	\item $\ltime = 10$, \hfill hypothesis
		
	\item $\winP \limplies   \jboxO{\verdictP}{\euathlus} \pay$ \hfill by $\court$ and Lemma \ref{lem: temporal truth predicate properties}
\end{enumerate}
On the other hand, if I do not win the case, then he will won his first case, hence he has to pay me, this time by virtue of our agreement.

\begin{enumerate}[resume]
	%\item $\ltime = 10$ \hfill hypothesis
	
	\item $\neg \winP \limplies \winE$ \hfill by $\court$ and Lemma \ref{lem: temporal truth predicate properties}
	
	%\item $\jbox{\agreement}_p ((\neg \jboxO{\agreement}{\euathlus} \pay \lunless \winE) \wedge (\winE \limplies  \jboxO{\agreement}{\euathlus} \pay))$,
	
	%\item $(\neg \jboxO{\agreement}{\euathlus} \pay \lunless \winE) \wedge (\winE \limplies  \jboxO{\agreement}{\euathlus} \pay)$ \hfill by $\contract$
	
	\item $\winE \limplies   \jboxO{\agreement}{\euathlus} \pay$ \hfill by $\contract$
		
	\item $\neg \winP \limplies  \jboxO{\agreement}{\euathlus} \pay$ \hfill from 3 and 4
\end{enumerate}
In either case, he has to pay me. 

\begin{enumerate}[resume]
	%\item $\ltime = 10$ \hfill hypothesis
	
	\item $\winP \vee \neg \winP$ \hfill by propositional reasoning
%	
%	\item $\winP \limplies \jboxO{\verdictP}{\euathlus} \pay$ \hfill by 2
%	
%	\item $\neg \winP \limplies \jboxO{\agreement}{\euathlus} \pay$ \hfill by 6
	
	\item $ \jboxO{\verdictP}{\euathlus} \pay \vee  \jboxO{\agreement}{\euathlus} \pay$ \hfill from 2, 5, and 6
\end{enumerate}
Therefore, he has to pay me my fee.  
\begin{equation}\label{eq:Protagoras reasoning}
\Gamma, \ltime = 10 \vdash_\CS  \jboxO{\verdictP}{\euathlus} \pay \vee  \jboxO{\agreement}{\euathlus} \pay.
\end{equation}

%%%%%%%%%%%%%%%%%%%%%%%%%%%%%%%%%%%%%%%%%%%%%%%%%%%%%
\paragraph{\textbf{Euathlus reasoning}:}

If Protagoras wins the case, then I will not yet have won my first case, so, by our agreement, I don't have to pay.

\begin{enumerate}
	\item $\ltime = 10$ \hfill hypothesis
	
	\item $\winP \limplies \neg \winE$ \hfill by $\court$ and Lemma \ref{lem: temporal truth predicate properties}
	
	%\item $\jbox{a}_e ((\neg \pay \lunless \winE) \wedge (E \limplies \pay))$, 
	%\item $(\neg \jboxO{\agreement}{\euathlus} \pay \lunless \winE) \wedge (\winE \limplies  \jboxO{\agreement}{\euathlus} \pay)$,

	\item $\neg \winE \limplies  \neg \jboxO{\agreement}{\euathlus} \pay $ \hfill by $\contract$
	
	\item $\winP \limplies  \neg \jboxO{\agreement}{\euathlus} \pay$ \hfill from 2 and 3
\end{enumerate}
%\begin{enumerate}
%	\item $\ltime = 10$ \hfill hypothesis
%	
%	\item $\winP \limplies \neg \winE$ \hfill by $\court$ and Lemma \ref{lem: temporal truth predicate properties}
%	 
%	%\item $\jbox{a}_e ((\neg \pay \lunless \winE) \wedge (E \limplies \pay))$, 
%	%\item $(\neg \jboxO{\agreement}{\euathlus} \pay \lunless \winE) \wedge (\winE \limplies  \jboxO{\agreement}{\euathlus} \pay)$,
%	
%	\item $\winP \limplies ( \neg \jboxO{\agreement}{\euathlus} \pay \lunless \winE)$ \hfill by $\contract$ 
%	
%	\item $\winP \limplies \neg \winE \wedge ( \neg \jboxO{\agreement}{\euathlus} \pay \lunless \winE)$ \hfill from 2 and 3
%	 
%	\item $\winP \limplies  \neg \jboxO{\agreement}{\euathlus} \pay \wedge \lnext (\neg \jboxO{\agreement}{\euathlus} \pay \lunless \winE)$ \hfill from 4 and equivalence \eqref{eq: Weak-until: equivalence}
%	 
%	\item $\winP \limplies  \neg \jboxO{\agreement}{\euathlus} \pay$ \hfill from 5
%\end{enumerate}
%
If Protagoras does not win the case, then, by your verdict, I don't have to pay.

\begin{enumerate}[resume]
	%\item $\ltime = 10$ \hfill hypothesis
	
	\item $\neg \winP \limplies  \neg \jboxO{\verdictE}{\euathlus} \pay$ \hfill by $\court$  and Lemma \ref{lem: temporal truth predicate properties}
	%\item $\neg \winP \limplies  \neg \jboxO{\verdictE}{\euathlus} \pay$ \hfill from  8 and the fact that $\lalways \phi \limplies  \phi$
\end{enumerate}
In either case, I don't have to pay him.

\begin{enumerate}[resume]
	%\item $\ltime = 10$ \hfill hypothesis
	
	\item $\winP \vee \neg \winP$ \hfill by propositional reasoning
%	
%	\item $\winP \limplies \neg \jboxO{\agreement}{\euathlus} \pay$ \hfill by 5
%	
%	\item $\neg \winP \limplies \neg \jboxO{\verdictE}{\euathlus} \pay$ \hfill by 6
	
	\item $ \neg \jboxO{\agreement}{\euathlus} \pay \vee   \neg \jboxO{\verdictE}{\euathlus} \pay$ \hfill from  4, 5, and 6 
\end{enumerate}
Hence I don't have to pay the fee.
\begin{equation}\label{eq:Euathlus reasoning}
\Gamma, \ltime = 10 \vdash  \neg \jboxO{\agreement}{\euathlus} \pay \vee   \neg \jboxO{\verdictE}{\euathlus} \pay.
\end{equation}

It is obvious that \eqref{eq:Protagoras reasoning} and \eqref{eq:Euathlus reasoning} lead to no contradiction. In fact, using the temporal deontic justification logic $\JTO$, we are able to distinguish two kinds of obligations for Euathlus:  
\begin{itemize}
	\item 
	A conventional obligation (which can be considered as a kind of moral obligation here), expressed by the formula $\jboxO{\agreement}{\euathlus} \pay$.
	
	\item 
	A juridical obligation (which can be considered as a kind of legal obligation here), expressed by the formula $\jboxO{\verdictE}{\euathlus} \pay$.
\end{itemize}
In fact, the next lemma shows that it is not possible to derive a contradiction from the above set of assumptions.

\begin{lemma}\label{lem:the set of assumptions in the paradox is consistent}
	$\Gamma \cup \lset{ \ltime = 10 }$ is consistent in $\JTO_\CS$.
\end{lemma}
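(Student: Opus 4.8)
The plan is to obtain consistency from satisfiability: by soundness of $\JTO_\CS$ with respect to interpreted-\Neigh systems (the easy half of Theorem \ref{thm:Completeness-interpreted-neighborhood systems}), an inconsistent set would derive $\bot$ and hence be falsified in every model, so it suffices to build one interpreted-\Neigh system $\system$, one run $r$, and one time point at which $\contract$, $\court$, and $\ltime = 10$ all hold. I would evaluate at $n = 10$, so that $\ltime = 10 = \lsprevious^{10} \lwprevious \bot$ holds there automatically, and I would take a single-state model realizing the scenario in which Protagoras wins the case.

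Concretely, I would set $S = \{ w \}$, let $r$ be the constant run with $r(n) = w$ for all $n$, and put $\valuation(w) = \{ \winP, \pay \}$, so that $\winP$ and $\pay$ are true and $\winE$ is false at $w$; in particular $\truthsetModel{\pay} = \{ w \}$ and $\truthsetModel{\bot} = \emptyset$. The decisive idea is to use the distinctness of the justification variables $\agreement$, $\verdictP$, $\verdictE$ to keep the two obligations of Euathlus apart: I would put $\neighborhoodO(w, \verdictP) = \{ \truthsetModel{\pay} \} = \{ \{ w \} \}$ and $\neighborhoodO(w, \agreement) = \neighborhoodO(w, \verdictE) = \emptyset$, so that $\jboxO{\verdictP}{\euathlus} \pay$ is true at $w$ while $\jboxO{\agreement}{\euathlus} \pay$ is false. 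Since the run is constant, these truth values are the same at every point $r(m) = w$.

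With these choices the three assumptions reduce to state-checks at $w$. As $\winE$ and $\jboxO{\agreement}{\euathlus} \pay$ are both false at $w$, the matrix $\winE \liff \jboxO{\agreement}{\euathlus} \pay$ of $\contract$ holds at every time, whence $\lalwaysPastFuture(\cdots)$ holds. Unwinding $\court = \ltrue_{10}[\cdots]$ through the clauses for $\lalwaysPastFuture$ and $\ltime = 10$ (true only at time $10$), it reduces to verifying the bracketed conjunction at $r(10) = w$: $\neg \winP \liff \winE$ is false$\liff$false, the consequent of $\winP \limplies \jboxO{\verdictP}{\euathlus} \pay$ is true, and $\neg \winP \limplies \neg \jboxO{\verdictE}{\euathlus} \pay$ is vacuously true since $\winP$ holds. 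Thus all of $\Gamma \cup \{ \ltime = 10 \}$ holds at $(r, 10)$; as a bonus the two conclusions \eqref{eq:Protagoras reasoning} and \eqref{eq:Euathlus reasoning} are simultaneously true there, which is exactly the absence of contradiction.

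The one step needing care, and the main obstacle, is checking that $\system$ is a genuine interpreted-\Neigh system, i.e. that $\neighborhood$ and $\neighborhoodO$ meet every condition of Definition \ref{def: interpreted sysytems}. The delicate interaction is between $\shiftrefNeighborhoodO$ and $\nocNeighborhoodO$: obligated factivity would insert $\truthsetModel{\jboxO{t}{\euathlus} \phi \limplies \phi}$ into $\neighborhoodO(w, \tref t)$ for all $\phi$, and if some such instance were false at $w$ its truth set would be $\emptyset$, clashing under noc with the $\{ w \}$ coming from true instances. I would avoid this by maintaining the invariant that no $\neighborhoodO(w, t)$ contains $\truthsetModel{\bot} = \emptyset$ as an element; then every instance $\jboxO{t}{\euathlus} \phi \limplies \phi$ is true at $w$, its truth set is $\{ w \}$, and noc is respected. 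This invariant is preserved because $\csNeighborhoodO$, $\appNeighborhoodO$ and $\shiftrefNeighborhoodO$ only ever insert truth sets of valid or $\CS$-derivable formulas (all equal to $\{ w \}$) and only constrain constant, product, and reference terms; crucially none of them touches the bare variable terms $\agreement, \verdictP, \verdictE$, which is precisely why their neighborhoods may be set freely as above. Finally, the epistemic neighborhood $\neighborhood$ is irrelevant, since the assumptions contain no formula of the form $\jboxAgent{t} \phi$, so any admissible choice (for instance the least family closed under the conditions on $\neighborhood$) completes the construction.
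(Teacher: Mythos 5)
Your proof is correct and follows essentially the same strategy as the paper: exhibit a concrete interpreted-neighborhood system satisfying $\contract$, $\court$, and $\ltime = 10$ at one point, and conclude consistency via soundness (Theorem \ref{thm:Weak Completeness-interpreted-neighborhood systems}). The only difference is the choice of model — you realize the ``Protagoras wins'' branch on a single constant state with $\neighborhoodO(w,\verdictP)=\{\{w\}\}$, whereas the paper realizes the ``Euathlus wins'' branch with one state per time and $\neighborhoodO(r(10),\agreement)=\{S\}$ — and your verification of the neighborhood closure conditions (the invariant that $\emptyset$ never appears as a neighborhood element) is at the same level of detail as the paper's ``it is easy to show.''
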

\begin{proof}
	Consider the interpreted-\Neigh system $\system_1 = (S, \runs, \neighborhood, \neighborhoodO,  \valuation)_{\agent \in \Ag}$ as follows: 
	\begin{itemize}
		\setlength\itemsep{0.1cm}
		\item 
		$S = \{ w_n \mid n \in \N\}$,
		
		\item 
		$\runs = \{ r  \}$, where $r : \N \to S$ is defined by $r(n) = w_n$ for all $n \in \N$,
		
		\item 
		$\neighborhood(r(n),t) = \{ S \}$, for all $t \in \Terms$, all $i \in \Ag$, and  all $n \in \N$, %\\
		%		$\neighborhood(v,t) = \emptyset$, for all $t \in \Terms$  and all $i \in \Ag$,
		%		
		\item 
		$\neighborhoodO(r(10),\agreement) = \lset{ S }$, for all  $i \in \Ag$,  \\
		$\neighborhoodO(r(n),\agreement) = \emptyset$, for all  $i \in \Ag$, and  all $n \neq 10$ \\
		$\neighborhoodO(r(n),x) = \emptyset$, for all $x \in \VTerms$ such that $x \neq \agreement$, all $i \in \Ag$, and  all $n \in \N$ \\
		$\neighborhoodO(r(n),c) =  \neighborhoodO(r(n),t) = \lset{ S }$, for all $c \in \CTerms$, and all $t \in \Terms \setminus \VTerms$, all $i \in \Ag$, and  all $n \in \N$, 
		
		\item 
		$\valuation(r(n)) =  \lset{ \pay }$, for all $n \in \N$, and	$\valuation(r(10)) = \lset{ \winE }$.%, and $\valuation(r(16)) = \lset{ \pay }$.
	\end{itemize}
	%
	%	where $T_n$, for $n \geq 0$, is defined as follows:
	%	%
	%	\begin{align*}
	%	T_0 &= \{ p \} \cup \CS \cup \{ \jboxOAgent{t} \phi \limplies \phi \mid \phi \in \Formulae \}
	%	\\
	%	T_{n+1} &= T_n \cup \{ \psi \mid \phi \limplies \psi, \phi \in T_n \}.
	%	\end{align*}
	%	%
	%	\todo{$p \land p \in \valuation(v)$?}
	It is easy to show that $\system_1$ is an interpreted-\Neigh system for $\JTO_\CS$. 
	In addition, 
	$$
	(\system_1, r(10)) \models \contract \wedge \court \wedge \ltime = 10.
	$$
	 The
	result thus follows by the completeness theorem (Theorem \ref{thm:Weak Completeness-interpreted-neighborhood systems}).\qed
\end{proof}
%%%%%%%%%%%%%%%%%%%%%%%%%%%%%%%%%%%%%%%%%%%%%%%%%%%%%%%%%%%%%%%%%%%%%%%%%%%%%%%%

\paragraph{\textbf{Formalization in modal logic}.}

	It is worth noting that if Protagoras' and Euathlus' reasonings are formalized within the framework of modal logic, then we get an explicit contradiction at the time of pronouncement. Thus, we consider here a combination of the standard deontic logic, $\Logic{SDL}$, and the linear temporal logic (with past time operators), $\Logic{LTL}$. The resulting logic is denoted by $\Logic{SDL \oplus LTL}$. In order to formalize the above inferences in $\Logic{SDL \oplus LTL}$ we use the \textit{forgetful projection} (cf. \cite{Art01BSL}). Let $\phi$ be a formula in the language of $\JTO$. The forgetful projection of $\phi$, denoted by $\phi^\circ$, is the result of replacing all occurrences of $\jboxOAgent{t} \psi$ with $\lobligatory_\agent \psi$ in the formula $\phi$. 
	
	Now consider the following set of assumptions 
	\[
	\Gamma^\circ = \{ \contract^\circ, \court^\circ \},
	\]
	where
		\[
	\contract^\circ := \lalwaysPastFuture (\winE \liff  \lobligatory_\euathlus \pay),
	\]
%	\[
%	\contract^\circ := ( \neg \lobligatory_\euathlus \pay \lunless \winE) \wedge (\winE \limplies  \lobligatory_\euathlus \pay),
%	\]
	and
	\begin{align*}
	\court^\circ &:= \ltrue_{10}[(\neg \winP \leftrightarrow \winE) \wedge  (\winP \limplies  \lobligatory_\euathlus \pay) \wedge (\neg \winP \limplies  \neg \lobligatory_\euathlus \pay)] \\
	&\equiv \ltrue_{10}[(\neg \winP \leftrightarrow \winE) \wedge  (\winP \liff  \lobligatory_\euathlus \pay)].
	\end{align*}
	An argument similar to the above arguments of Protagoras and Euathlus (see \eqref{eq:Protagoras reasoning} and \eqref{eq:Euathlus reasoning}) shows that:
	\begin{equation}\label{eq:Protagoras reasoning-standard deontic logic}
	\Gamma^\circ, \ltime = 10 \vdash_\Logic{SDL \oplus LTL}  \lobligatory_\euathlus \pay.
	\end{equation}
	\begin{equation}\label{eq:Euathlus reasoning-standard deontic logic}
	\Gamma^\circ, \ltime = 10 \vdash_\Logic{SDL \oplus LTL}  \neg \lobligatory_\euathlus \pay.
	\end{equation}
	In contrast to the arguments of Protagoras and Euathlus in $\JTO$ (see \eqref{eq:Protagoras reasoning} and \eqref{eq:Euathlus reasoning}), a contradiction can be derived from \eqref{eq:Protagoras reasoning-standard deontic logic} and \eqref{eq:Euathlus reasoning-standard deontic logic}:
	\begin{equation}\label{eq:paradox contradiction}
	\Gamma^\circ, \ltime = 10 \vdash_\Logic{SDL \oplus LTL} \bot.
	\end{equation}
	%
	%Now it is easy to show that $\Gamma^\circ$ is consistent, and thus conflicting results of Protagoras and Euathlus in \eqref{eq:Protagoras reasoning-standard deontic logic} and \eqref{eq:Euathlus reasoning-standard deontic logic} do not come from an inconsistent set of assumptions. In fact \eqref{eq:paradox contradiction} 
	This shows that any verdict passed by the court (at the time of pronouncement, i.e. $\ltime = 10$) reach a contradiction. Contrary to our observation, Goossens (\cite[page 72]{Goossens1977}) and Sobel (\cite[page 7]{Sobel1987}) try to show  that no explicit contradiction can be derived from the Protagoras/Euathlus arguments. %Solutions will be given in the next section.

%%%%%%%%%%%%%%%%%%%%%%%%%%%%%%%%%%%%%%%%%%%%%%%%%%%%%%%%%%%%%%%%%%%%%%%%%%%%%%%%%%%%

\paragraph{\textbf{A more precise formalization}.}

Suppose that at time $n$ we have that Euathlus wins his first case (or judge decides against Protagoras in the aforementioned court case). In the previous formalization of the contract and the court (see the formulas $\contract$ and $\court$), it is stipulated that if Euathlus wins his first case at time $n$ (or if judge decides against Protagoras at time $n$), then Euathlus is obliged to pay the education fee at the same time, i.e. at time $n$. It seems more precise to say that the obligation of paying the fee holds in a time interval (from the time $n$ until Euathlus pays the fee, provided he pays the fee), and moreover, the obligation is started from the next time, say $n+1$. Thus, instead of stipulating that Euathlus is obliged to pay in the same time of his winning (or in the same time of judge's pronouncement), formalized by $\jboxO{t}{\euathlus} \pay$ for some normative reason $t$, it is more natural to stipulate that at the next time he is obliged to pay until he pays, formalized by the formula $\lnext \jboxO{t}{\euathlus} \pay \luntil \pay$. Further, it is implicit in the contract that from tomorrow of the day of contract until Euathlus wins his first case there is no obligation for him to pay. The latter can be formalized by $(\lnext \neg \jboxO{\agreement}{\euathlus} \pay) \luntil \winE$. Therefore, one possible formalization is as follows:
\begin{align}\label{eq: contract with Until}
	\lalwaysPastFuture [(\lnext \neg \jboxO{\agreement}{\euathlus} \pay \luntil \winE) \wedge (\winE \limplies  (\lnext \jboxO{\agreement}{\euathlus} \pay \luntil \pay))].
\end{align}
%\[
%( \neg \jboxO{\agreement}{\euathlus} \pay \luntil \winE) \wedge (\winE \limplies  \jboxO{\agreement}{\euathlus} \pay ).
%\]
%
But then from this, using axiom $\uoneax$,  we get
\[
\lalwaysPastFuture [(\lnext \neg \jboxO{\agreement}{\euathlus} \pay \luntil \winE) \wedge (\winE \limplies (\lnext \jboxO{\agreement}{\euathlus} \pay \luntil \pay))]
 \vdash_{\CS} \leventually \winE,
\]
and 
\[
\lalwaysPastFuture [(\lnext \neg \jboxO{\agreement}{\euathlus} \pay \luntil \winE) \wedge (\winE \limplies (\lnext \jboxO{\agreement}{\euathlus} \pay \luntil \pay))]
\vdash_{\CS}  \winE \limplies \lnext \leventually \pay.
\]
Thus 
\[
\lalwaysPastFuture [(\lnext \neg \jboxO{\agreement}{\euathlus} \pay \luntil \winE) \wedge (\winE \limplies (\lnext \jboxO{\agreement}{\euathlus} \pay \luntil \pay))]
 \vdash_{\CS} \leventually  \jboxO{\agreement}{\euathlus} \pay,
\]
which is not compatible with the story, since nothing necessitates Euathlus to take a law case and thus nothing obligated him to pay his fee. In fact, the operator $\luntil$ is a \textit{strong} form of the until operator, because it demands that $\winE$ will hold sometime. In order to remedy this defect we employ the \textit{weak-until} or \textit{unless} operator, $\lunless$, instead of the until operator in the formalization of the contract. 
%Let $\lunless$  denote the weak-until  operator with the following definition:
%
%\begin{equation}
%	\phi \lunless \psi := (\phi \luntil \psi) \vee (\lalways \phi).
%\end{equation}
%
%
%It is easy to show that the following is provable in $\JTO$ (cf. e.g. \cite{KrogerMerz2008}):
%
%\begin{equation}\label{eq: Weak-until: equivalence}
%	\phi \lunless \psi \leftrightarrow \psi \vee (\phi \wedge \lnext (\phi \lunless \psi)).
%\end{equation}
Hence, it seems a more realistic formalization would be as follows:
\[
\contract' := \lalwaysPastFuture [(\lnext \neg \jboxO{\agreement}{\euathlus} \pay \lunless \winE) \wedge (\winE \limplies (\lnext \jboxO{\agreement}{\euathlus} \pay \lunless \pay))].
\]
%
%\[
%\contract := \lsofar \jbox{\agreement}_\mathsf{E} ((\neg \jboxO{\agreement}{\euathlus} \pay \lunless \winE) \wedge (\winE \limplies  \jboxO{\agreement}{\euathlus} \pay)).
%\]
%
Let us recall that $\phi \lunless \psi := (\phi \luntil \psi) \vee (\lalways \phi)$. Thus, the first conjunct of $\contract'$ says that either from tomorrow on Euathlus has no obligation to pay the fee or  from tomorrow on he has no obligation to pay the fee until he wins his first case. The second conjunct says that if Euathlus wins his first case, then either from tomorrow on he has an obligation to pay the fee or from tomorrow on he has an obligation to pay the fee until he pays the fee.

Then we have

\begin{lemma}\label{lem: paradox 1}
	Let $\CS$ be an arbitrary constant specification for $\JTO$. Then
	\[
	\contract' \not \vdash_\CS \leventually \winE.
	\]
\end{lemma}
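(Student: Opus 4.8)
The plan is to establish non-derivability semantically, via completeness. By the (local) completeness theorem for interpreted-\Neigh systems (Theorem \ref{thm:Weak Completeness-interpreted-neighborhood systems}), we get $\contract' \not\vdash_\CS \leventually \winE$ as soon as we exhibit an interpreted-\Neigh system $\system$ and a point $r(n) \in \Im(\runs)$ with $(\system, r(n)) \models \contract'$ but $(\system, r(n)) \not\models \leventually \winE$. The intuition driving the construction is that $\contract'$ never forces Euathlus to win: the first conjunct is a \emph{weak} until whose right disjunct $\lalways \lnext \neg \jboxO{\agreement}{\euathlus} \pay$ can be made to hold outright, and the second conjunct $\winE \limplies (\ldots)$ is an implication that is vacuously true once $\winE$ is globally false.

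Concretely, I would take a single run over pairwise distinct states: $S = \{ w_n \mid n \in \N \}$, $\runs = \{ r \}$ with $r(n) = w_n$, so that $S = \Im(\runs)$ and there are no non-normal states; consequently every validity $\phi$ of $\JTO_\CS$ has $\truthsetModel{\phi} = S$. I would make $\winE$ false at every state via $\valuation$. For the epistemic part set $\neighborhood(r(n),t) = \{ S \}$ for all $t$, $n$, $\agent$. For the normative part set $\neighborhoodO(r(n),x) = \emptyset$ for every justification variable $x$ (in particular for $\agreement$), while $\neighborhoodO(r(n),t) = \{ S \}$ for every constant and every composite term of $\Terms^\lobligatory$ (i.e. those of the form $t \tapp s$ or $\tref t$). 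The decisive feature is that $\agreement \in \VTerms$, so $\neighborhoodO(r(n),\agreement) = \emptyset$ and hence $\jboxO{\agreement}{\euathlus} \pay$ is false at every point, irrespective of the truth set of $\pay$.

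The main work, and the only genuinely delicate step, is checking that this $\system$ satisfies all the clauses of Definition \ref{def: interpreted sysytems}. Because $S = \Im(\runs)$, the conditions $\csNeighborhood$, $\appNeighborhood$, $\sumNeighborhood$, $\refNeighborhood$, $\posintNeighborhood$ and the analogues $\csNeighborhoodO$, $\appNeighborhoodO$ reduce to the observation that an axiom instance (or $\CS$-formula) is valid, so its truth set is $S$; and $\nocNeighborhoodO$ holds because the only member of a nonempty neighborhood is $S$, while $\truthsetModel{\neg \phi} = S$ would force $\phi$ unsatisfiable. The obstacle is $\shiftrefNeighborhoodO$, which requires $\truthsetModel{\jboxOAgent{t} \phi \limplies \phi} \in \neighborhoodO(r(n),\tref t)$: since $\tref t$ is composite its neighborhood is $\{ S \}$, so I must verify that $\jboxOAgent{t} \phi \limplies \phi$ is true at every state of $\system$. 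This holds by a short case split on $t$: if $t$ is a variable then $\jboxOAgent{t} \phi$ is false (empty neighborhood) and the implication is vacuous; if $t$ is a constant or a composite term then $\jboxOAgent{t} \phi$ is true only when $\truthsetModel{\phi} = S$, i.e. when $\phi$ holds everywhere, so again the implication holds.

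It then remains to confirm the two target facts. Since $\jboxO{\agreement}{\euathlus} \pay$ is false everywhere, $\lnext \neg \jboxO{\agreement}{\euathlus} \pay$ is true at every point, so $\lalways \lnext \neg \jboxO{\agreement}{\euathlus} \pay$ holds and the first conjunct of $\contract'$ is satisfied through the right disjunct of the weak-until; the second conjunct is vacuously true because $\winE$ is globally false; and as the inner conjunction holds at every state, $\lalwaysPastFuture$ of it holds at $r(0)$, giving $(\system, r(0)) \models \contract'$. On the other hand $\leventually \winE$ fails at $r(0)$ because $\winE$ is false everywhere. Completeness (Theorem \ref{thm:Weak Completeness-interpreted-neighborhood systems}) then yields $\contract' \not\vdash_\CS \leventually \winE$.
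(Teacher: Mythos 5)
Your proposal is correct and takes essentially the same route as the paper: exhibit an interpreted-neighborhood countermodel satisfying $\contract'$ but not $\leventually \winE$, and invoke Theorem~\ref{thm:Weak Completeness-interpreted-neighborhood systems}. The paper's model is a bit simpler --- a single state $w$ with $\neighborhoodO(w,t)=\{S\}$ for \emph{all} terms and $\valuation(w)=\emptyset$, so that $\jboxO{\agreement}{\euathlus}\pay$ fails because $\truthset{\pay}{\system}=\emptyset\notin\{S\}$ --- whereas you force its failure by emptying the neighborhood of $\agreement$, but the verification and the conclusion are the same.
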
 
\begin{proof}
	Let $\CS$ be an arbitrary constant specification. Consider the interpreted-\Neigh system $\system_2 = (S, \runs, \neighborhood, \neighborhoodO,  \valuation)_{\agent \in \Ag}$ as follows:
	\begin{itemize}
		\setlength\itemsep{0.1cm}
		\item 
		$S = \{ w\}$,
		
		\item 
		$\runs = \{ r  \}$, where $r : \N \to S$ is defined by $r(n) = w$ for all $n \in \N$,
		
		\item 
		$\neighborhood(w,t) = \{ S \}$, for all $t \in \Terms$ and all $i \in \Ag$, %\\
		%		$\neighborhood(v,t) = \emptyset$, for all $t \in \Terms$  and all $i \in \Ag$,
		%		
		\item 
		%$\neighborhoodO(w,x) = \neighborhoodO(w,c) = \{ \{ w\} \} $, for all $x \in \VTerms$, for all $c \in \CTerms$, and all $i \in \Ag$,\\
		$\neighborhoodO(w,t) = \{ S \}$, for all $t \in \Terms$, and all $i \in \Ag$, %\\
		
		\item 
		$\valuation(w) =  \emptyset$. %$\valuation(v) = \cup_{n \geq 0} T_n$.
	\end{itemize}
	%
	%	where $T_n$, for $n \geq 0$, is defined as follows:
	%	%
	%	\begin{align*}
	%	T_0 &= \{ p \} \cup \CS \cup \{ \jboxOAgent{t} \phi \limplies \phi \mid \phi \in \Formulae \}
	%	\\
	%	T_{n+1} &= T_n \cup \{ \psi \mid \phi \limplies \psi, \phi \in T_n \}.
	%	\end{align*}
	%	%
	%	\todo{$p \land p \in \valuation(v)$?}
	It is easy to show that $\system_2$ is an interpreted-\Neigh system for $\JTO_\CS$. 
	Since $(\system_2, r(n)) \models \lalways \lnext \neg \jboxO{\agreement}{\euathlus} \pay $, we get $(\system_2, r(n)) \models \lnext \neg \jboxO{\agreement}{\euathlus} \pay \lunless \winE$. From this and $(\system_2, r(n)) \models \neg \winE$,  it follows that $(\system_2, r(0)) \models \contract'$. In addition, $(\system_2, r(0)) \not\models \leventually \winE$. The
	result thus follows by the completeness theorem (Theorem \ref{thm:Weak Completeness-interpreted-neighborhood systems}). \qed
\end{proof}

% Let $\jbox{t}_\mathsf{E} \phi$ abbreviate $\jbox{t}_\protagoras \phi \wedge \jbox{t}_\euathlus \phi$.\\

Protagoras may have considered \eqref{eq: contract with Until} as the contract when he agreed to teach without any fee, and thought that Euathlus will win a case in the future. On the other hand, Euathlus considered $\contract'$ instead. According to the scenario (and $\contract'$), Euathlus could complete his course but he never wins any case (because he may take no law case in the future).

%Applying the above remark to the assumptions, we get:
%\[
%\contract' := (\lnext \neg \jboxO{\agreement}{\euathlus} \pay \lunless \winE) \wedge (\winE \limplies  \lnext (\jboxO{\agreement}{\euathlus} \pay \lunless \pay)).
%\]
%
The above remark about having obligations in a time interval, instead of a certain moment in time, can be applied to the assumption about the court:
\begin{align*}
%	\contract' &:= (\lnext \neg \jboxO{\agreement}{\euathlus} \pay \lunless \winE) \wedge \\ & (\winE \limplies  \lnext (\jboxO{\agreement}{\euathlus} \pay \lunless \pay)), \\
%	\contract' &:=  (\winE \liff  \lnext (\jboxO{\agreement}{\euathlus} \pay \lunless \pay)), \\
	\court'  := &\ltrue_{10}[(\neg \winP \leftrightarrow \winE) \wedge\\
	&  (\winP \limplies (\lnext \jboxO{\verdictP}{\euathlus} \pay \lunless \pay) ) \wedge \\
	& (\neg \winP \limplies \lalways \neg \jboxO{\verdictE}{\euathlus} \pay)].
	%& (\neg \winP \limplies (\lnext \neg \jboxO{\verdictE}{\euathlus} \pay\lunless \pay))]. 
\end{align*}
The second and third conjuncts in $\court'$ say that if judge awards the case to Protagoras then from tomorrow on  according to the judge's verdict ($\verdictP$) Euathlus has an obligation to pay the fee unless he pays the fee, and if judge awards the case to Euathlus then henceforth  there is no obligation for Euathlus to pay the fee in virtue of the judge's verdict ($\verdictE$).

Let $\Delta := \lset{ \contract', \court' }$ be the new set of assumptions. Then, similar to the above arguments of Protagoras and Euathlus (see \eqref{eq:Protagoras reasoning} and \eqref{eq:Euathlus reasoning}), one can show that: 
\begin{align}\label{eq: Protagoras and Euathlus arguments precise formalization}
	\Delta, \ltime = 10 &\vdash_\CS 
	(\lnext \jboxO{\verdictP}{\euathlus} \pay \lunless \pay)  \vee (\lnext \jboxO{\agreement}{\euathlus} \pay \lunless \pay).\\
	\Delta, \ltime = 10 &\vdash_\CS 
\lnext \neg \jboxO{\agreement}{\euathlus} \pay \vee  \lnext \neg \jboxO{\verdictE}{\euathlus} \pay.
\end{align}
The proofs are as follows:

\paragraph{Protagoras reasoning:}

\begin{enumerate}
	\item $\ltime = 10$ \hfill hypothesis
	
	\item $\winP \limplies  (\lnext \jboxO{\verdictP}{\euathlus} \pay \lunless \pay)$ \hfill by $\court'$ and Lemma \ref{lem: temporal truth predicate properties}
	
	\item $\neg \winP \limplies \winE$ \hfill by $\court'$ and Lemma \ref{lem: temporal truth predicate properties}
	
	%\item $\jbox{\agreement}_p ((\neg \jboxO{\agreement}{\euathlus} \pay \lunless \winE) \wedge (\winE \limplies  \jboxO{\agreement}{\euathlus} \pay))$,
	
	%\item $(\neg \jboxO{\agreement}{\euathlus} \pay \lunless \winE) \wedge (\winE \limplies  \jboxO{\agreement}{\euathlus} \pay)$ \hfill by $\contract$
	
	\item $\winE \limplies (\lnext \jboxO{\agreement}{\euathlus} \pay \lunless \pay)$ \hfill by $\contract'$ and Lemma \ref{lem: basic results of linear temporal logic}
	
	\item $\neg \winP \limplies (\lnext \jboxO{\agreement}{\euathlus} \pay \lunless \pay)$ \hfill from 3 and 4
	
	\item $\winP \vee \neg \winP$ \hfill by propositional reasoning
	%	
	%	\item $\winP \limplies \jboxO{\verdictP}{\euathlus} \pay$ \hfill by 2
	%	
	%	\item $\neg \winP \limplies \jboxO{\agreement}{\euathlus} \pay$ \hfill by 6
	
	\item $(\lnext \jboxO{\verdictP}{\euathlus} \pay \lunless \pay)  \vee (\lnext \jboxO{\agreement}{\euathlus} \pay \lunless \pay)$ \hfill from 2, 5, and 6
\end{enumerate}

%%%%%%%%%%%%%%%%%%%%%%%%%%%%%%%%%%%%%%%%%%%%%%%%%%%%%
\paragraph{Euathlus reasoning:}

\begin{enumerate}
	\item $\ltime = 10$ \hfill hypothesis
	
	\item $\winP \limplies \neg \winE$ \hfill by $\court'$ and Lemma \ref{lem: temporal truth predicate properties}
	 
	%\item $\jbox{a}_e ((\neg \pay \lunless \winE) \wedge (E \limplies \pay))$, 
	%\item $(\neg \jboxO{\agreement}{\euathlus} \pay \lunless \winE) \wedge (\winE \limplies  \jboxO{\agreement}{\euathlus} \pay)$,
	
	\item $\lnext \neg \jboxO{\agreement}{\euathlus} \pay \lunless \winE$ \hfill by $\contract'$ and Lemma \ref{lem: basic results of linear temporal logic}
		
	\item $\winP \limplies (\lnext \neg \jboxO{\agreement}{\euathlus} \pay \lunless \winE)$ \hfill from 3 by propositional reasoning
	
	\item $\winP \limplies \neg \winE \wedge (\lnext \neg \jboxO{\agreement}{\euathlus} \pay \lunless \winE)$ \hfill from 2 and 4
	 
	\item $\winP \limplies \lnext \neg \jboxO{\agreement}{\euathlus} \pay \wedge \lnext (\lnext \neg \jboxO{\agreement}{\euathlus} \pay \lunless \winE)$ \hfill from 5 and Lemma \ref{lem: basic results of linear temporal logic}
	 
	\item $\winP \limplies \lnext \neg \jboxO{\agreement}{\euathlus} \pay$ \hfill from 6
	
	\item $\neg \winP \limplies  \lalways \neg \jboxO{\verdictE}{\euathlus} \pay$ \hfill by $\court'$  and Lemma \ref{lem: temporal truth predicate properties}
	%\item $\neg \winP \limplies  \neg \jboxO{\verdictE}{\euathlus} \pay$ \hfill from  8 and the fact that $\lalways \phi \limplies  \phi$

	\item $\neg \winP \limplies  \lnext \neg \jboxO{\verdictE}{\euathlus} \pay$ \hfill from 8 and Lemma \ref{lem: basic results of linear temporal logic}
	
	\item $\winP \vee \neg \winP$ \hfill by propositional reasoning
	%	
	%	\item $\winP \limplies \neg \jboxO{\agreement}{\euathlus} \pay$ \hfill by 5
	%	
	%	\item $\neg \winP \limplies \neg \jboxO{\verdictE}{\euathlus} \pay$ \hfill by 6
	
	\item $\lnext \neg \jboxO{\agreement}{\euathlus} \pay \vee  \lnext \neg \jboxO{\verdictE}{\euathlus} \pay$ \hfill from  7, 9, and 10 
\end{enumerate}

Then we show that no contradiction can be derived from the above set of assumptions.

\begin{lemma}\label{lem: second set of assumptions in the paradox is consistent}
	$\Delta \cup \lset{ \ltime = 10 }$ is consistent  in $\JTO_\CS$.
\end{lemma}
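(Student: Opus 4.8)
The plan is to prove consistency semantically, exactly as in Lemma~\ref{lem:the set of assumptions in the paradox is consistent}: since soundness is the easy direction of Theorem~\ref{thm:Weak Completeness-interpreted-neighborhood systems}, a set of formulas satisfied at some point of some interpreted-\Neigh system cannot derive $\bot$, so it suffices to exhibit a model of $\Delta \cup \lset{\ltime = 10}$. I would build a single-run system $\system_3 = (S, \runs, \neighborhood, \neighborhoodO, \valuation)_{\agent \in \Ag}$ realizing the scenario in which Euathlus never wins ($\winE$ false throughout) while the court decides for Protagoras ($\winP$ true at time $10$): take $S = \lset{w_n \mid n \in \N}$ with $\runs = \lset{r}$ and $r(n) = w_n$, so that $S = \Im(\runs)$ and there are no non-normal states; set $\valuation(r(n)) = \lset{\pay}$ for $n \neq 10$ and $\valuation(r(10)) = \lset{\pay, \winP}$; and give trivial epistemic neighborhoods $\neighborhood(r(n), t) = \lset{S}$ for every $t \in \Terms^\lepistemic$. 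The decisive choice is the normative neighborhood: put $\neighborhoodO(r(n), x) = \emptyset$ for every justification variable $x \in \VTerms$ (in particular for $\agreement$, $\verdictP$, $\verdictE$) and $\neighborhoodO(r(n), t) = \lset{S}$ for every constant and every complex term $t \in \Terms^\lobligatory \setminus \VTerms$. With these empty neighborhoods on variables, every obligation $\jboxO{x}{\euathlus}\pay$ is false at every point, which is precisely what the two arguments exploit.

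First I would check that $\system_3$ is a genuine interpreted-\Neigh system. The organizing observation is that, since all states are normal, any provable (hence valid) formula $\phi$ has $\truthsetModel{\phi} = S$. This yields $\csNeighborhood$ and $\csNeighborhoodO$, because the bodies occurring in constant-specification formulas are axiom instances or, by downward closure, again members of $\CS$, hence provable. The application, sum, reflexivity, and positive-introspection conditions on the $\lset{S}$-valued neighborhoods reduce to the fact that $\lset{S}$ is closed under the relevant operations once the premises force the corresponding truth sets to equal $S$. For $\nocNeighborhoodO$, whenever $\truthsetModel{\phi} = S$ we have $\truthsetModel{\neg\phi} = \emptyset \neq S$, so a proposition and its negation never sit together in an $\lset{S}$-neighborhood; on variables the hypothesis is vacuous. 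For $\shiftrefNeighborhoodO$ note that $\tref t$ is always complex, so its neighborhood is $\lset{S}$, and $\jboxOAgent{t}\phi \limplies \phi$ is true everywhere in $\system_3$ (if $t$ is a variable the antecedent is always false; if $t$ is non-variable, $\jboxOAgent{t}\phi$ forces $\truthsetModel{\phi} = S$, hence $\phi$ at the current point), so its truth set is $S$.

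Next I would verify $(\system_3, r(10)) \entails \contract' \wedge \court' \wedge \ltime = 10$. The clause $\ltime = 10$ holds at $r(10)$ by the temporal truth-predicate lemma. For $\contract' = \lalwaysPastFuture[\cdots]$ I would show the bracketed conjunction holds at every point: its second conjunct is vacuous since $\winE$ is false everywhere, while for the first conjunct the until-disjunct of $\lnext\neg\jboxO{\agreement}{\euathlus}\pay \lunless \winE$ fails (no future point satisfies $\winE$) but the $\lalways$-disjunct holds because $\neg\jboxO{\agreement}{\euathlus}\pay$ is true everywhere (empty neighborhood on $\agreement$). For $\court' = \ltrue_{10}[\cdots]$ I would evaluate the bracketed conjunction at $r(10)$: the biconditional $\neg\winP \liff \winE$ is false $\liff$ false; the third conjunct is vacuous since $\winP$ holds; and the crucial second conjunct $\winP \limplies (\lnext\jboxO{\verdictP}{\euathlus}\pay \lunless \pay)$ holds because $\pay$ is true at $r(10)$ itself, so its weak-until is witnessed by the until-disjunct with $m = 10$ (no obligation need ever be true).

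The main obstacle — really the only delicate point — is confirming the neighborhood conditions for an \emph{arbitrary} $\CS$, since the construction must validate $\csNeighborhoodO$ and $\shiftrefNeighborhoodO$ uniformly without tailoring $\CS$. This is exactly where the two design ideas combine: the provable bodies forced by $\CS$, together with the factivity implications, are valid and hence have truth set $S$, landing in the $\lset{S}$-neighborhoods attached to constants and complex terms, whereas every obligation we actually need to be \emph{false} is carried by a justification variable whose neighborhood is empty. Once this is in place, $(\system_3, r(10))$ models $\Delta \cup \lset{\ltime = 10}$ but not $\bot$, so $\Delta \cup \lset{\ltime = 10} \not\models^N_\CS \bot$, and consistency follows by Theorem~\ref{thm:Weak Completeness-interpreted-neighborhood systems}.
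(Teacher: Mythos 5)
Your proof is correct and follows essentially the same route as the paper: exhibit an explicit single-run interpreted-neighborhood system satisfying $\Delta \cup \lset{\ltime = 10}$ and conclude via soundness (Theorem~\ref{thm:Weak Completeness-interpreted-neighborhood systems}). The only difference is the concrete witness: the paper's model makes $\winE$ and $\pay$ true everywhere and gives $\agreement$ the neighborhood $\lset{S}$ (so $\contract'$ holds via the until-disjunct), whereas yours realizes the ``Euathlus never wins'' scenario with all variable neighborhoods empty (so $\contract'$ holds via the $\lalways$-disjunct and $\court'$ via $\pay$ at time $10$); both verifications go through.
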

\begin{proof}
	Let $\CS$ be an arbitrary constant specification. Consider the interpreted-\Neigh system $\system_3 = (S, \runs, \neighborhood, \neighborhoodO,  \valuation)_{\agent \in \Ag}$ as follows: 
	\begin{itemize}
		\setlength\itemsep{0.1cm}
		\item 
		$S = \{ w_n \mid n \in \N\}$,
		
		\item 
		$\runs = \{ r  \}$, where $r : \N \to S$ is defined by $r(n) = w_n$ for all $n \in \N$,
		
		\item 
		$\neighborhood(r(n),t) = \{ S \}$, for all $t \in \Terms$, all $i \in \Ag$, and  all $n \in \N$, %\\
		%		$\neighborhood(v,t) = \emptyset$, for all $t \in \Terms$  and all $i \in \Ag$,
		%		
		\item 
		$\neighborhoodO(r(n),\agreement) = \lset{ S }$, for all  $i \in \Ag$, and  all $n \in \N$, \\ %, and $\neighborhoodO(r(n),\agreement) = \emptyset$, for all  $i \in \Ag$, and  all $0 \leq n \leq 10$ \\
		$\neighborhoodO(r(n),x) = \emptyset$, for all $x \in \VTerms$ such that $x \neq \agreement$, all $i \in \Ag$, and  all $n \in \N$, \\
		$\neighborhoodO(r(n),c) =  \neighborhoodO(r(n),t) = \lset{ S }$, for all $c \in \CTerms$, and all $t \in \Terms \setminus \VTerms$, all $i \in \Ag$, and  all $n \in \N$,

		\item 
		$\valuation(r(n)) =  \lset{ \winE, \pay }$, for all $n \in \N$.%, and	$\valuation(r(10)) = \lset{ \winE }$.%, and $\valuation(r(16)) = \lset{ \pay }$.
	\end{itemize}
	%
	%	where $T_n$, for $n \geq 0$, is defined as follows:
	%	%
	%	\begin{align*}
	%	T_0 &= \{ p \} \cup \CS \cup \{ \jboxOAgent{t} \phi \limplies \phi \mid \phi \in \Formulae \}
	%	\\
	%	T_{n+1} &= T_n \cup \{ \psi \mid \phi \limplies \psi, \phi \in T_n \}.
	%	\end{align*}
	%	%
	%	\todo{$p \land p \in \valuation(v)$?}
	It is easy to show that $\system_3$ is an interpreted-\Neigh system for $\JTO_\CS$.
	For every $n \in \N$, we have that $(\system_3, r(n)) \models  \winE$, and hence $(\system_3, r(n)) \models \lnext \neg \jboxO{\agreement}{\euathlus} \pay \luntil \winE$. Thus, 
	$$
	(\system_3, r(n)) \models \lnext \neg \jboxO{\agreement}{\euathlus} \pay \lunless \winE,
	$$
	for every $n \in \N$. On the other hand, for every $n \in \N$, we have that $(\system_3, r(n)) \models \lalways \lnext  \jboxO{\agreement}{\euathlus} \pay$, and hence $(\system_3, r(n)) \models  \lnext  \jboxO{\agreement}{\euathlus} \pay \lunless \pay$. Thus, 
	$$
	(\system_3, r(n)) \models  \winE \limplies (\lnext  \jboxO{\agreement}{\euathlus} \pay \lunless \pay),
	$$
	for every $n \in \N$. Therefore, $(\system_3, r(10)) \models \contract'$. In addition, $(\system_3, r(10)) \models \lalways \neg \jboxO{\verdictE}{\euathlus} \pay$, and hence we obtain $(\system_3, r(10)) \models \court'$. Further, $(\system_3, r(10)) \models \ltime = 10 $.
	The
	result thus follows by the completeness theorem (Theorem \ref{thm:Weak Completeness-interpreted-neighborhood systems}). \qed
\end{proof}

It seems a safe assumption that Euathlus does not pay the fee at time 10 (the time of judge's pronouncement). Then, from this assumption and \eqref{eq: Protagoras and Euathlus arguments precise formalization}, using Lemma \ref{lem: basic results of linear temporal logic} and axiom $\funax$, we obtain:
\begin{align*}
	\Delta, \ltime = 10, \ltrue_{10} (\neg \pay) &\vdash_\CS 
	\lnext \jboxO{\verdictP}{\euathlus} \pay   \vee \lnext \jboxO{\agreement}{\euathlus} \pay.\\
	\Delta, \ltime = 10 &\vdash_\CS 
	\neg \lnext \jboxO{\agreement}{\euathlus} \pay \vee \neg  \lnext \jboxO{\verdictE}{\euathlus} \pay.
\end{align*}
Again, if the above arguments are formalized in $\Logic{SDL \oplus LTL}$, then we obtain
\begin{align}\label{eq: Protagoras and Euathlus arguments precise formalization 3}
	\Delta^\circ, \ltime = 10, \ltrue_{10} (\neg \pay) &\vdash_{\Logic{SDL \oplus LTL}} 
	\lnext \lobligatory \pay.\\
	\Delta^\circ, \ltime = 10 &\vdash_{\Logic{SDL \oplus LTL}} 
	\neg \lnext \lobligatory \pay.
\end{align}
This clearly yields to a contradiction:
\[
\Delta^\circ, \ltime = 10, \ltrue_{10} (\neg \pay) \vdash_{\Logic{SDL \oplus LTL}} \bot.
\]
In contrast to the standard doentic logic $\Logic{SDL \oplus LTL}$ in which a contradiction can be derived, we show that this set of assumption leads to no contradiction in $\JTO_\CS$.

\begin{lemma}\label{lem: second extended set of assumptions in the paradox is consistent}
	$\Delta \cup \lset{ \ltime = 10, \ltrue_{10} (\neg \pay) }$ is consistent in $\JTO_\CS$, where $\CS^\lobligatory = \emptyset$.
\end{lemma}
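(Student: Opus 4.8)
The plan is to exhibit a single interpreted-\Neigh system in which all four formulas $\contract'$, $\court'$, $\ltime = 10$ and $\ltrue_{10}(\neg\pay)$ hold at the point $r(10)$; consistency then follows from Theorem \ref{thm:Weak Completeness-interpreted-neighborhood systems} (soundness direction), exactly as in Lemmas \ref{lem:the set of assumptions in the paradox is consistent} and \ref{lem: second set of assumptions in the paradox is consistent}. I would take a single run on distinct states $S = \lset{w_n \mid n \in \N}$ with $r(n) = w_n$, and fix the valuation so that $\pay$ is true exactly at the states $w_n$ with $n \geq 11$, that $\winP$ is true only at $w_{10}$, and that $\winE$ is false everywhere. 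Thus $\truthsetModel{\pay} = P := \lset{w_m \mid m \geq 11}$, so $\neg\pay$ holds at time $10$ (which yields $\ltrue_{10}(\neg\pay)$), while $\winP$ holds at time $10$ and $\winE$ nowhere.

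For the neighborhood functions I would work with the reflexive families $\mathcal C_n := \lset{X \subseteq S \mid w_n \in X}$. Concretely I set $\neighborhood(w_n,t) = \mathcal C_n$ for every epistemic term $t$ and every agent, and $\neighborhoodO(w_n,t) = \mathcal C_n$ for every normative term $t$ \emph{other than} the single variable $\agreement$, for which I put $\neighborhoodO(w_n,\agreement) = \emptyset$. The point of using $\mathcal C_n$ is that it makes the structural conditions automatic. Every member of a neighborhood contains $w_n$, so \nocNeighborhoodO holds, since a set and its complement cannot both contain $w_n$. Moreover $\mathcal C_n$ is closed under the application operation, because if $w_n \in \truthsetModel{\phi \limplies \psi}$ and $w_n \in \truthsetModel{\phi}$ then $w_n \in \truthsetModel{\psi}$, and every target term $t\cdot s$ is a compound (hence carries $\mathcal C_n$); this gives \appNeighborhoodO and \appNeighborhood. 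With reflexive neighborhoods each modality is factive at $w_n$, so $\jboxOAgent{t}\phi \limplies \phi$ is valid and its truth set $S$ lies in $\neighborhoodO(w_n,\tref t) = \mathcal C_n$, yielding \shiftrefNeighborhoodO; the analogous epistemic conditions \refNeighborhood, \sumNeighborhood and \posintNeighborhood are equally immediate. Since $\CS^\lobligatory = \emptyset$ there is nothing to check for \csNeighborhoodO, and \csNeighborhood is handled, for an arbitrary $\CS^\lepistemic$, by a short induction on the constant-prefix length using that axiom instances are valid in every interpreted-\Neigh system. The one deliberate asymmetry is the emptying of $\neighborhoodO(w_n,\agreement)$, which forces $\jboxO{\agreement}{\euathlus}\pay$ to be false at every state, while keeping $\verdictP$ reflexive makes $\jboxO{\verdictP}{\euathlus}\pay \liff \pay$ valid.

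With these values the verification at $r(10)$ is short. Because $\jboxO{\agreement}{\euathlus}\pay$ is false everywhere, $\lnext\neg\jboxO{\agreement}{\euathlus}\pay$ holds at every point, so the first conjunct of $\contract'$ holds via the $\lalways$-disjunct of $\lunless$, and its second conjunct is vacuous as $\winE$ is false; hence the matrix of $\contract'$ holds at all states and $\contract'$ holds at $r(10)$. For $\court'$ one evaluates the bracketed formula at time $10$ (using Lemma \ref{lem: temporal truth predicate properties}): $\neg\winP \liff \winE$ holds since both sides are false; the third conjunct is vacuous because $\winP$ holds; and the second conjunct reduces to $\lnext\jboxO{\verdictP}{\euathlus}\pay \lunless \pay$, which holds because the equivalence $\jboxO{\verdictP}{\euathlus}\pay \liff \pay$ is valid and $\pay$ is true at time $11$, so $\lnext\jboxO{\verdictP}{\euathlus}\pay$ holds at time $10$ and the strong until $\lnext\jboxO{\verdictP}{\euathlus}\pay \luntil \pay$ already obtains there. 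Finally $\ltime = 10$ and $\ltrue_{10}(\neg\pay)$ hold at $r(10)$ by Lemma \ref{lem: temporal truth predicate properties} and the discussion of the temporal truth predicate. I expect the main obstacle to be precisely the design of $\neighborhoodO$: the court conjunct forces $\jboxO{\verdictP}{\euathlus}\pay$ to become true near time $11$ while $\pay$ must be false at time $10$, so $\pay$ is \emph{not} valid in the model, yet \appNeighborhoodO and especially \shiftrefNeighborhoodO must survive; it is the uniform use of the reflexive families $\mathcal C_n$, together with the decoupling of $\agreement$ from $\pay$, that reconciles these competing demands.
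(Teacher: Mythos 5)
Your proposal is correct, and it follows the paper's overall strategy for this lemma: exhibit one interpreted-neighborhood system satisfying $\contract'$, $\court'$, $\ltime=10$ and $\ltrue_{10}(\neg\pay)$ at $r(10)$, then invoke Theorem \ref{thm:Weak Completeness-interpreted-neighborhood systems}. The witnessing model, however, is genuinely different. The paper's $\system_4$ adds a non-normal state $v \in S\setminus\Im(\runs)$ with $\pay \in \valuation(v)$, sets $\neighborhoodO(r(n),\verdictP)=\lset{\lset{v}}$, keeps $\pay$ false and $\winP$ true at every run state, and so discharges the $\lunless$ in $\court'$ through the $\lalways$-disjunct (the valuation is constant along the run and $\jboxO{\verdictP}{\euathlus}\pay$ is decoupled from $\pay$ entirely). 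You instead keep every state normal, use the principal reflexive neighborhoods $\mathcal{C}_n$, let $\pay$ become true from time $11$ on so that $\jboxO{\verdictP}{\euathlus}\pay\liff\pay$ is valid in the model and the \emph{strong}-until disjunct of $\court'$ is realized, and put $\winP$ only at $w_{10}$; both constructions kill $\jboxO{\agreement}{\euathlus}\pay$ the same way, by emptying the $\agreement$-neighborhood. Your variant buys a uniform, easily checkable verification of the closure conditions (reflexivity of $\mathcal{C}_n$ gives \nocNeighborhoodO, \shiftrefNeighborhoodO, \refNeighborhood\ at once) and, notably, a clean treatment of \csNeighborhood\ for an arbitrary $\CS^\lepistemic$ — a point on which the paper's choice $\neighborhood(r(n),t)=\lset{S}$ together with a non-normal $v$ at which most tautologies fail is actually delicate, since $\truthsetModel{\phi}$ need not equal $S$ there. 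The paper's variant buys a time-independent valuation on the run. Either model establishes the lemma.
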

\begin{proof}
	Let $\CS$ be an arbitrary constant specification such that $\CS^\lobligatory = \emptyset$. Consider the interpreted-\Neigh system $\system_4 = (S, \runs, \neighborhood, \neighborhoodO,  \valuation)_{\agent \in \Ag}$ as follows: 
	\begin{itemize}
		\setlength\itemsep{0.1cm}
		\item 
		$S = \{ w_n \mid n \in \N\} \cup \lset{ v }$,
		
		\item 
		$\runs = \{ r  \}$, where $r : \N \to S$ is defined by $r(n) = w_n$ for all $n \in \N$,
		
		\item 
		$\neighborhood(r(n),t) = \{ S \}$, for all $t \in \Terms$, all $i \in \Ag$, and  all $n \in \N$, %\\
		%		$\neighborhood(v,t) = \emptyset$, for all $t \in \Terms$  and all $i \in \Ag$,
		%		
		\item 
		$\neighborhoodO(r(n), \verdictP) = \lset{ \lset{ v } }$, for all  $i \in \Ag$, and  all $n \in \N$, \\ %, and $\neighborhoodO(r(n),\agreement) = \emptyset$, for all  $i \in \Ag$, and  all $0 \leq n \leq 10$ \\
		$\neighborhoodO(r(n),x) = \neighborhoodO(r(n),c) = \emptyset$, for all $x \in \VTerms$ such that $x \neq \verdictP$, all $c \in \CTerms$, all $i \in \Ag$, and  all $n \in \N$, \\
		$  \neighborhoodO(r(n),t) = \lset{ \lset{ v } }$, for all $t \in \Terms \setminus \VTerms$, all $i \in \Ag$, and  all $n \in \N$,

		\item 
		$\valuation(r(n)) =  \lset{ \winP }$, for all $n \in \N$, and	$\valuation(v) = \lset{ \pay } \cup \lset{ \jboxOAgent{t} \phi \limplies \phi \mid \phi \in \Formulae }$.%, and $\valuation(r(16)) = \lset{ \pay }$.
	\end{itemize}
	%
	%	where $T_n$, for $n \geq 0$, is defined as follows:
	%	%
	%	\begin{align*}
	%	T_0 &= \{ p \} \cup \CS \cup \{ \jboxOAgent{t} \phi \limplies \phi \mid \phi \in \Formulae \}
	%	\\
	%	T_{n+1} &= T_n \cup \{ \psi \mid \phi \limplies \psi, \phi \in T_n \}.
	%	\end{align*}
	%	%
	%	\todo{$p \land p \in \valuation(v)$?}
	It is easy to show that $\system_4$ is an interpreted-\Neigh system for $\JTO_\CS$.
	For every $n \in \N$, we have that $(\system_4, r(n)) \models \lalways \lnext \neg \jboxO{\agreement}{\euathlus} \pay$. Thus, 
	$$
	(\system_4, r(n)) \models \lnext \neg \jboxO{\agreement}{\euathlus} \pay \lunless \winE,
	$$
	for every $n \in \N$. On the other hand, from $(\system_4, r(n)) \models \neg \winE$, for every $n \in \N$, it follows that
	$$
	(\system_4, r(n)) \models  \winE \limplies (\lnext  \jboxO{\agreement}{\euathlus} \pay \lunless \pay),
	$$
	for every $n \in \N$. Therefore, $(\system_4, r(10)) \models \contract'$. In addition, $(\system_4, r(10)) \models \lalways \lnext \jboxO{\verdictP}{\euathlus} \pay$, and hence $(\system_4, r(10)) \models \lnext \jboxO{\verdictP}{\euathlus} \pay \lunless \pay$. Thus, we obtain $(\system_4, r(10)) \models \court'$. Further, it is obvious that $(\system_4, r(10)) \models \ltime = 10 \wedge \ltrue_{10} (\neg \pay)$.
	The
	result thus follows by the completeness theorem (Theorem \ref{thm:Weak Completeness-interpreted-neighborhood systems}). \qed
\end{proof}

In addition to the above remarks, it is perhaps worth pointing out that contracts are usually considered to be common knowledge between agents. Thus, it is more appropriate to assume that $\contract'$ is common knowledge between Protagoras and Euathlus. Thus, instead of one single assumption $\contract'$, one may consider the following infinite set of assumptions  about the contract between Protagoras and  Euathlus:
\begin{eqnarray*}
	\contract'' := &\{ \contract' ,  \jbox{\agreement}_\protagoras \contract' , \jbox{\agreement}_\euathlus \contract', \jbox{\agreement}_\protagoras \jbox{\agreement}_\euathlus \contract', \jbox{\agreement}_\euathlus \jbox{\agreement}_\protagoras \contract', \\
	&\jbox{\agreement}_\protagoras \jbox{\agreement}_\euathlus \jbox{\agreement}_\protagoras \contract', \jbox{\agreement}_\euathlus \jbox{\agreement}_\protagoras \jbox{\agreement}_\euathlus \contract', \ldots \}.
\end{eqnarray*}
%	\[
%	\contract' := \{ \contract ,  \jbox{\agreement}_\protagoras \contract , \jbox{\agreement}_\euathlus \contract, \jbox{\agreement}_\protagoras \jbox{\agreement}_\euathlus \contract, \jbox{\agreement}_\euathlus \jbox{\agreement}_\protagoras \contract,  \ldots \}.
%	\]
%
However, since we do not need the iteration of knowledge in our analysis, we continue with the single assumption $\contract'$ (or $\contract$).
%i.e. $\jbox{a}_\mathsf{C} \contract$ where the subscript $\mathsf{C}$ denotes common knowledge (cf. \cite{BucKuzStu11JANCL}). But %we do not have common knowledge in the object language, and further we do not need the iteration of knowledge here. Thus 
%in our formalization we forget all about common knowledge, because we do not have common knowledge operators in our object language.

%%%%%%%%%%%%%%%%%%%%%%%%%%%%%%%%%%%%%%%%%%%%%%%%%%%%%%%%%%%%%%%%%%%%%%%%%%%%%%%%%%%%

\paragraph{\textbf{Does Protagoras permit to sue Euathlus?}}

%One possible conclusion could be
%\[
%\jbox{a}_e ((\neg \pay \lunless \winE) \wedge (E \limplies \pay)), time=10 \limplies (\neg \winP \leftrightarrow E) \wedge  (\winP \limplies \jbox{v_1}_\mathsf{E} \pay) \wedge (\neg \winP \limplies \jbox{v_2}_\mathsf{E} \neg \pay), time =10 
%\vdash
%\]
%\[
% \neg \jbox{a}_p ((\neg \pay \lunless \winE) \wedge (E \limplies \pay)).
%\]
%This shows that, since Euathlus accepts the agreement, in fact Protagoras, by suing Euathlus, does not accept the agreement. In fact, Protagoras brings suit against Euathlus on
%purpose and in order to fulfill the condition in the contract.

%It seems that, not only at the time of the pronouncement but even when Protagoras sues Euathlus (at time 5) he does not accept the agreement, and thus he is not permitted to sue Euathlus for the first time. 
Remember that according to the scenario Protagoras sues Euathlus at time 5, and thus he thought that he is permitted to sue Euathlus. The question that arises here is that: does Protagoras permit to sue Euathlus at time 5?  We first show that from the contract and the assumption that at time 5 Euathlus has not won his first court-case yet, it is concluded that there is no obligation for Euathlus to pay the fee until the time 5. According to the scenario at time 5 Euathlus has not won his first court-case yet, and this assumption can be formalized as follows:
\begin{equation*}\label{eq: Euathlus has not won his first court-case yet}
	\NowinE := \ltrue_{5} (\lsofar \neg \winE).
\end{equation*}
%
%Since the contract holds until time 5, we use the assumption $\ltrue_5 (\lsofar \contract)$ in the following result.

\begin{lemma}
	Let $\CS$ be an arbitrary constant specification for $\JTO$. Then
	\[
	 \contract, \NowinE \vdash_\CS \ltrue_5 (\lsofar \neg \jboxO{\agreement}{\euathlus} \pay).
	\]
\end{lemma}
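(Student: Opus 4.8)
The plan is to reduce the claim to a semantic statement and discharge it by the weak completeness theorem for F-interpreted systems (Theorem~\ref{thm:Weak Completeness-interpreted systems}); this is legitimate since the assumption set $\{\contract, \NowinE\}$ is finite, so it suffices to prove $\contract, \NowinE \models_\CS \ltrue_5(\lsofar \neg \jboxO{\agreement}{\euathlus}\pay)$. Abbreviating $\alpha := \winE$ and $\beta := \jboxO{\agreement}{\euathlus}\pay$, I would fix an arbitrary F-interpreted system $\system$, a run $r \in \runs$ and $n \in \N$ with $(\system, r, n) \models \contract$ and $(\system, r, n) \models \NowinE$, and aim to establish $(\system, r, n) \models \ltrue_5(\lsofar \neg \beta)$.

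First I would unpack the two hypotheses. Since $\contract = \lalwaysPastFuture(\alpha \liff \beta)$, the truth clause for $\lalwaysPastFuture$ gives $(\system, r, m) \models \alpha \liff \beta$ for \emph{every} $m \in \N$. Since $\NowinE = \ltrue_5(\lsofar \neg \alpha)$, the characterization of the temporal truth predicate (item~2 of the lemma immediately preceding Lemma~\ref{lem: temporal truth predicate properties}) yields $(\system, r, 5) \models \lsofar \neg \alpha$, i.e. $(\system, r, m) \models \neg \alpha$ for every $m \leq 5$.

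The core of the argument is then a pointwise propositional step: for each $m \leq 5$, combining $(\system, r, m) \models \alpha \liff \beta$ with $(\system, r, m) \models \neg \alpha$ forces $(\system, r, m) \models \neg \beta$. Hence $\neg \beta$ holds at every point of $r$ up to time~$5$, which is exactly $(\system, r, 5) \models \lsofar \neg \beta$; applying the temporal-truth characterization a second time gives $(\system, r, n) \models \ltrue_5(\lsofar \neg \beta)$, as required. I expect no genuine obstacle here: the only point demanding care is the bookkeeping of the quantifier ranges of $\lalwaysPastFuture$ (all $m$) versus $\lsofar$ (all $m \leq 5$), and the harmless observation that although $\beta$ is itself a deontic justification formula, the argument treats it as an opaque proposition and never unfolds it, so the same reasoning works verbatim for either the Fitting or the neighborhood semantics.

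Alternatively, one can give a purely syntactic derivation. I would first prove the temporal theorem $\vdash_\CS \lsofar(\alpha \liff \beta) \limplies (\lsofar \neg \alpha \limplies \lsofar \neg \beta)$ from the propositional tautology $(\alpha\liff\beta)\limplies(\neg\alpha\limplies\neg\beta)$ by $\sofarnecrule$ and two applications of $\sofarkax$, then transport it under $\ltrue_5$ using the monotonicity property (item~9 of Lemma~\ref{lem: temporal truth predicate properties}) together with a $\textsf{K}$-style distribution of $\ltrue_5$ over implication, and finally supply the two antecedents $\ltrue_5(\lsofar(\alpha\liff\beta))$ (extracted from $\contract$ via item~6 of Lemma~\ref{lem: temporal truth predicate properties}, after first strengthening $\contract$ to $\lalwaysPastFuture\lsofar(\alpha\liff\beta)$) and $\NowinE = \ltrue_5(\lsofar\neg\alpha)$. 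This route carries more temporal bookkeeping, so I would present the semantic argument above.
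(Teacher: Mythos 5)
Your proposal is correct, but it takes a genuinely different route from the paper. The paper's proof is purely syntactic: it derives $\vdash_\CS \contract \wedge \neg\winE \limplies \neg\jboxO{\agreement}{\euathlus}\pay$, lifts this under $\lsofar$ via $\sofarnecrule$ and $\sofarkax$, uses $\vdash_\CS \contract \liff \lsofar\contract$ and $\vdash_\CS \contract \liff \lalwaysPastFuture\contract$ (Lemma~\ref{lem: basic results of linear temporal logic}) to discharge the boxed hypotheses, and finally applies $\alwaysPastFuturenecrule$ to the implication $\ltime=5 \limplies \lsofar\neg\jboxO{\agreement}{\euathlus}\pay$ to produce $\ltrue_5(\cdots)$ — essentially the skeleton you sketch in your ``alternative'' paragraph, though the paper applies $\alwaysPastFuturenecrule$ to a derivation from hypotheses rather than distributing $\ltrue_5$ over implication. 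Your primary argument instead reduces the claim to the semantic statement $\contract, \NowinE \models_\CS \ltrue_5(\lsofar\neg\jboxO{\agreement}{\euathlus}\pay)$ and discharges it by Theorem~\ref{thm:Weak Completeness-interpreted systems}; the pointwise verification (unpacking $\lalwaysPastFuture$ over all $m$, $\lsofar$ over $m\leq 5$, and treating $\jboxO{\agreement}{\euathlus}\pay$ as an opaque proposition) is correct, and the appeal to completeness is legitimate because the assumption set is finite. What each approach buys: your semantic route is shorter and avoids the temporal bookkeeping with the admissible rules, at the cost of invoking the full canonical-model completeness machinery for what is really a two-line propositional fact under two modalities — though this is very much in the spirit of the paper, which itself uses the completeness theorems to establish the consistency lemmas in the same section. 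The paper's syntactic derivation is self-contained relative to the Hilbert system and exhibits the explicit reasoning pattern ($\sofarnecrule$ plus $\sofarkax$ plus $\alwaysPastFuturenecrule$) that is then reused in the subsequent lemmas on $\PsueE$ and the judge's verdict, which is presumably why the author chose it.
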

\begin{proof}
	It is not difficult to show that
	$$
	\vdash_\CS \contract \wedge \neg \winE \limplies \neg \jboxO{\agreement}{\euathlus} \pay.
	$$
	By the rule $\sofarnecrule$ and $\sofarkax$, we obtain
	\begin{equation*}
			\vdash_\CS \lsofar \contract \wedge \lsofar \neg \winE \limplies \lsofar \neg \jboxO{\agreement}{\euathlus} \pay.
	\end{equation*}
	By Lemma \ref{lem: basic results of linear temporal logic} we have $\vdash_\CS  \contract \liff \lsofar \contract$, and hence 
	\begin{equation*}
		 \contract,  \lsofar \neg \winE \vdash_\CS   \lsofar \neg \jboxO{\agreement}{\euathlus} \pay.
	\end{equation*}
	Thus
	\begin{equation*}\label{eq: no obligation sofar}
		 \contract, \ltime = 5 \limplies \lsofar \neg \winE \vdash_\CS  \ltime = 5 \limplies \lsofar \neg \jboxO{\agreement}{\euathlus} \pay.
	\end{equation*}
	%
%	Now the proof is as follows:
%	\begin{enumerate}
%		\item $\ltime = 5$ \hfill hypothesis
%		\item $\lsofar \contract$ \hfill from 1 and $\ltrue_5 (\lsofar \contract)$
%		\item $\lsofar \neg \winE$ \hfill from 1 and $\NowinE$
%		\item $\lsofar \neg \jboxO{\agreement}{\euathlus} \pay$ \hfill from 2, 3, and \eqref{eq: no obligation sofar}
%	\end{enumerate}
	%
	Then, using rules $\alwaysPastFuturenecrule$, we get 
	$$
	\lalwaysPastFuture \contract,  \ltrue_5 (\lsofar \neg \winE) \vdash_\CS \boxdot (\ltime = 5 \limplies \lsofar \neg \jboxO{\agreement}{\euathlus} \pay).
	$$
	By Lemma \ref{lem: basic results of linear temporal logic} we have $\vdash_\CS  \contract \liff \lalwaysPastFuture \contract$, and thus
	\[
	 \contract, \NowinE
	\vdash
	\ltrue_5 (\lsofar \neg \jboxO{\agreement}{\euathlus} \pay).
	\]
	\qed
\end{proof}

The above result shows that according to the contract there is no obligation for Euathlus to pay the fee until time 5 (i.e. the time that Protagoras sues Euathlus). On the other hand, we observe that Protagoras is permitted to sue Euathlus when (and only when) at some previous point Euathlus won his first court case and since that time he has not paid his fee. This assumption can be formalized as follows: 
%
%\begin{equation*}\label{eq: Protagoras permit to sue Euathlus}
%	\PsueE := \lalwaysPastFuture [\jboxP{\agreement}{\protagoras} \sueP \leftrightarrow   (\neg \pay \lsince \lwprevious \winE)].
%\end{equation*}
\begin{equation*}\label{eq: Protagoras permit to sue Euathlus}
	\PsueE := \lalwaysPastFuture [\jboxP{\agreement}{\protagoras} \sueP \liff (\lsprevious  (\neg \pay \lsince \winE) \wedge \neg \pay)].
\end{equation*}
where $\sueP$ denotes the proposition ``Protagoras sues Euathlus." Next, it is easy to show that Protagoras is not even permitted to sue Euathlus.

\begin{lemma}
	Let  $\CS$ be an arbitrary constant specification for $\JTO$. Then
	\[
	\PsueE, \NowinE
	\vdash_\CS
	\ltrue_5 (\neg \jboxP{\agreement}{\protagoras} \sueP).
	\]
\end{lemma}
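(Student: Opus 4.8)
The plan is to first isolate a purely temporal fact and then feed it through the biconditional supplied by $\PsueE$ at time $5$. The key observation is that $\NowinE$ forces the defining condition of the permission to fail: if Euathlus has never won his first case up to time $5$, then the clause $\lsprevious(\neg\pay\lsince\winE)$ cannot hold at time $5$, so by the biconditional in $\PsueE$ the permission $\jboxP{\agreement}{\protagoras}\sueP$ is absent.

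First I would prove, by purely temporal reasoning, that
\[
\vdash_\CS \lsofar\neg\winE \limplies \neg(\lsprevious(\neg\pay\lsince\winE)\wedge\neg\pay).
\]
By axiom $\soneax$ we have $\vdash_\CS (\neg\pay\lsince\winE)\limplies\lonce\winE$, and applying the admissible monotonicity of $\lsprevious$ (derivable from $\prevRMrule$ of Lemma \ref{lem: Admissible rules in LTL}) yields $\vdash_\CS \lsprevious(\neg\pay\lsince\winE)\limplies\lsprevious\lonce\winE$. Instantiating the fixpoint axiom $\stwoax$ with the first argument $\top$ and second argument $\winE$ gives $\vdash_\CS \lonce\winE\liff(\winE\vee\lsprevious\lonce\winE)$, whence $\vdash_\CS \lsprevious\lonce\winE\limplies\lonce\winE$. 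Chaining these, $\vdash_\CS \lsprevious(\neg\pay\lsince\winE)\limplies\lonce\winE$. Since $\lsofar\neg\winE\liff\neg\lonce\winE$ (unfolding the abbreviations $\lsofar$ and $\lonce$ and using $\onceRMrule$ to absorb the double negation), the contrapositive gives $\vdash_\CS \lsofar\neg\winE\limplies\neg\lsprevious(\neg\pay\lsince\winE)$, and propositional weakening to the conjunction yields the displayed formula.

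Next I would bring in $\PsueE$. Writing $B$ for the inner biconditional of $\PsueE$, the biconditional gives propositionally $\vdash_\CS B\limplies(\neg(\lsprevious(\neg\pay\lsince\winE)\wedge\neg\pay)\limplies\neg\jboxP{\agreement}{\protagoras}\sueP)$, so combining with the temporal fact above,
\[
\vdash_\CS (\lsofar\neg\winE\wedge B)\limplies\neg\jboxP{\agreement}{\protagoras}\sueP.
\]
Applying $\ltrue_5$-monotonicity (Lemma \ref{lem: temporal truth predicate properties}, item 9) gives $\vdash_\CS \ltrue_5(\lsofar\neg\winE\wedge B)\limplies\ltrue_5(\neg\jboxP{\agreement}{\protagoras}\sueP)$. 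It remains to produce the antecedent from the two hypotheses: $\NowinE$ is exactly $\ltrue_5(\lsofar\neg\winE)$, and since $\PsueE=\lalwaysPastFuture B$, Lemma \ref{lem: temporal truth predicate properties}, item 6, gives $\PsueE\vdash_\CS\ltrue_5(B)$. Because $\ltrue_5$ commutes with conjunction (a consequence of $\lalwaysPastFuture$ being a normal modality, which also underlies Lemma \ref{lem: temporal truth predicate properties}), these combine to $\ltrue_5(\lsofar\neg\winE\wedge B)$, and hence $\PsueE,\NowinE\vdash_\CS\ltrue_5(\neg\jboxP{\agreement}{\protagoras}\sueP)$.

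I expect the main obstacle to be bookkeeping around the defined abbreviations rather than anything deep: one must carefully unfold $\lsofar$, $\lonce$ and $\lsprevious$ through their double negations and confirm that the fixpoint instance of $\stwoax$ delivers exactly $\lsprevious\lonce\winE\limplies\lonce\winE$. The conceptual crux — that an always-been-unwon history up to time $5$ blocks the $\lsince$-trigger on which the permission depends — is captured entirely by axioms $\soneax$ and $\stwoax$, and everything else is propositional plumbing together with the temporal truth predicate lemma.
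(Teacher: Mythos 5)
Your proposal is correct and follows essentially the same route as the paper: both hinge on using $\soneax$ (together with $\lsprevious$-monotonicity) to show that $\lsofar\neg\winE$ refutes $\lsprevious(\neg\pay\lsince\winE)$, then read off $\neg\jboxP{\agreement}{\protagoras}\sueP$ from the biconditional in $\PsueE$, and finally lift the time-independent implication to time $5$ under $\lalwaysPastFuture$. The only differences are cosmetic plumbing — you obtain $\neg\lonce\winE$ directly and use $\stwoax$ for $\lsprevious\lonce\winE\limplies\lonce\winE$ where the paper passes through $\lwprevious\lsofar\neg\winE$, and you package the final lifting via the $\ltrue_m$ lemma rather than invoking $\alwaysPastFuturenecrule$ directly.
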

\begin{proof}
	We first show that
	\[
	\jboxP{\agreement}{\protagoras} \sueP \liff (\lsprevious  (\neg \pay \lsince \winE) \wedge \neg \pay), \lsofar \neg \winE 
	\vdash_\CS \neg \jboxP{\agreement}{\protagoras} \sueP.
	\]
	The proof is as follows:
	\begin{enumerate}
		\item $\jboxP{\agreement}{\protagoras} \sueP \liff (\lsprevious  (\neg \pay \lsince \winE) \wedge \neg \pay)$ \hfill hypothesis
		%\item $\ltrue_5 (\lsofar \neg \winE)$ \hfill from  $\NowinE$ and Lemma \ref{lem: temporal truth predicate properties}
		\item $\lsofar \neg \winE$ \hfill hypothesis
		\item $\lwprevious \lsofar  \neg \winE$ \hfill from 2 and Lemma \ref{lem: basic results of linear temporal logic}
		%\item $\lsofar \neg \lwprevious \winE$ \hfill from 3 and the fact that $\vdash_\CS \lwprevious \neg \phi \liff \neg \lwprevious \phi$
		\item $\lwprevious \neg \lonce \neg \neg  \winE$ \hfill from 3 and the definition of $\lsofar$
		\item $\lwprevious \neg \lonce  \winE$ \hfill from 4 and Lemma \ref{lem: Admissible rules in LTL}
		\item $\neg \lsprevious  \lonce  \winE$ \hfill from 5 and the definition of $\lsprevious$
		\item $\neg \lsprevious (\neg \pay \lsince \winE)$ \hfill from 6 and axiom $\soneax$
		\item $\neg \lsprevious (\neg \pay \lsince \winE) \vee \pay$ \hfill from 7
		\item $\neg \jboxP{\agreement}{\protagoras} \sueP$ \hfill from 1 and 8
	\end{enumerate}
%	\begin{enumerate}
%		\item $\jboxP{\agreement}{\protagoras} \sueP \leftrightarrow   (\neg \pay \lsince \lwprevious \winE)$ \hfill hypothesis
%		%\item $\ltrue_5 (\lsofar \neg \winE)$ \hfill from  $\NowinE$ and Lemma \ref{lem: temporal truth predicate properties}
%		\item $\lsofar \neg \winE$ \hfill hypothesis
%		\item $\lsofar \lwprevious \neg \winE$ \hfill from 2 and the fact that $\vdash_\CS \lsofar \phi \limplies \lsofar \lwprevious \phi$
%		\item $\lsofar \neg \lwprevious \winE$ \hfill from 3 and the fact that $\vdash_\CS \lwprevious \neg \phi \liff \neg \lwprevious \phi$
%		\item $\neg \leventually \neg \neg \lwprevious \winE$ \hfill from 4 and the fact that $\vdash_\CS \lsofar \phi \liff \neg \leventually \neg \phi$
%		\item $\neg \leventually \lwprevious \winE$ \hfill from 5 
%		\item $\neg (\neg \pay \lsince \lwprevious \winE)$ \hfill from 6 and axiom $\soneax$
%		\item $\neg \jboxP{\agreement}{\protagoras} \sueP$ \hfill from 1 and 7
%	\end{enumerate}
	%
	Thus
	$$
	\jboxP{\agreement}{\protagoras} \sueP \liff (\lsprevious  (\neg \pay \lsince \winE) \wedge \neg \pay), \ltime = 5 \limplies \lsofar \neg \winE
	 \vdash_\CS \ltime = 5 \limplies \neg \jboxP{\agreement}{\protagoras} \sueP.
	$$
	Then, using rules $\alwaysPastFuturenecrule$, we get 
	$$
	\PsueE, \ltrue_5 ( \lsofar \neg \winE) \vdash_\CS \lalwaysPastFuture (\ltime = 5 \limplies \neg \jboxP{\agreement}{\protagoras} \sueP).
	$$
	Therefore,
	\[
	\PsueE, \NowinE
	\vdash
	\ltrue_5 (\neg \jboxP{\agreement}{\protagoras} \sueP). 
	\]
	\qed
\end{proof}
This shows that from the assumption that at time 5 Euathlus has not won his first court case yet, it follows that Protagoras is not permitted to sue him. In fact, Protagoras brings suit against Euathlus on purpose and in order to fulfill the condition in the contract.

%%%%%%%%%%%%%%%%%%%%%%%%%%%%%%%%%%%%%%%%%%%%%%%%%%%%%%%%%%%%%%%%%%%%%%%%%%%%%%%%
\subsection{Judges reasoning}
\label{sec:Judges reasoning}

%\paragraph{\textbf{The court decision}.}
In this section we show how judge can reason and what decision he should render. We suppose that court judgment is past-looking: the content of the ruling is based on the state of the world prior to the ruling.%If you like,	here the ruling is over a past contract.
\footnote{See \cite[page 72]{Goossens1977}  for a discussion on ``the	content of the ruling" and ``the ruling as an in-the-world event."} First note that the fact that judge's verdict at time 10 is against Protagoras can be expressed by the formula $\ltrue_{10} (\neg \winP)$. Moreover, since Euathlus has not won his first court case until time 10, instead of the assumption $\NowinE$, we can consider the following stronger assumption
$$
\NowinE' := \ltrue_{10} (\lsofar \lwprevious \neg \winE).
$$ 
Observe that  $\vdash_\CS \NowinE' \limplies \NowinE$.

As it is mentioned in \cite{Jankowski2015}, a court decision can only take into
account what has happened up to the pronouncement of a judgment at the latest. Following Goossens \cite{Goossens1977} let me shorten this legal fact as follows:

\[
\text{Court judgement is past-looking.} \qquad \qquad (*)
\]

In the logic $\JTO$, $(*)$ can be expressed as the following assumption:
\[
\PastLooking := \ltrue_{10} (\lsofar \lwprevious \neg \winE \limplies \neg \winP).
\]
%\[
%\phi_5 := \lsofarStrict \neg \winE \limplies \neg \jboxO{\verdictP}{\euathlus} \pay.
%\]
Up to time 10, Euathlus has not won a case yet, only with the pronouncement
itself (at time 10) the condition could possibly be met (if judgment is given against Protagoras). Therefore, from a legal point of view,
the judge should have no problem deciding in favor of Euathlus. In fact, the following set of assumptions 
$$
 \{\contract, \court, \NowinE', \PastLooking \}
$$
is consistent. This can be shown by means of the interpreted-neighborhood system $\system_1$ which is defined in the proof of Lemma \ref{lem:the set of assumptions in the paradox is consistent}. It is not difficult to show that 
$$
(\system_1, r(0)) \models \contract \wedge \court \wedge \NowinE' \wedge \PastLooking.
$$
This is  the standard solution given by Leibniz (\cite{Jankowski2015}), %\footnote{G.W. Leibniz, Sämtliche Schriften und Briefe. Edited by the Deutsche Akademie der Wissenschaften zu Berlin. Darmstadt, 1923 sqq., Leipzig, 1938 sqq., Berlin, 1950 sqq.},
 Lenzen \cite{Lenzen1977}, Smullyan \cite{Smullyan1978}, and others.  

\begin{theorem}[Judge's verdict in the first case]\label{thm: Judges verdict in the first case}
	Let  $\CS$ be an arbitrary constant specification for $\JTO$. Then
	\[
	\contract, \court, \NowinE', \PastLooking \vdash_\CS  \ltrue_{10} (\neg \winP \wedge \winE \wedge \jboxO{\agreement}{\euathlus} \pay).
	\]	
\end{theorem}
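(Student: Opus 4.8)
The plan is to carry out the entire derivation at the level of the temporal truth predicate $\ltrue_{10}(\cdot)$, treating it as a normal modality. The preliminary observation is that, writing $\ltrue_{10}(\phi) = \lalwaysPastFuture(\ltime = 10 \limplies \phi)$ and recalling $\lalwaysPastFuture \phi = \lsofar \phi \wedge \lalways \phi$, the operator $\lalwaysPastFuture$ satisfies the $\mathsf{K}$ distribution law and admits necessitation: distribution follows from \sofarkax and \alwayskax applied componentwise, and necessitation follows from $\alwaysnecrule$ together with $\sofarnecrule$. Consequently $\ltrue_{10}$ inherits the derivable schemata $\vdash_\CS \ltrue_{10}(\alpha \limplies \beta) \limplies (\ltrue_{10}(\alpha) \limplies \ltrue_{10}(\beta))$ and $\vdash_\CS \ltrue_{10}(\alpha) \wedge \ltrue_{10}(\beta) \liff \ltrue_{10}(\alpha \wedge \beta)$ (the latter by $\mathsf{K}$ distribution plus the propositional tautology $(\ltime=10 \limplies \alpha) \wedge (\ltime=10 \limplies \beta) \liff (\ltime=10 \limplies \alpha \wedge \beta)$), alongside the monotonicity rule given as item 9 of Lemma \ref{lem: temporal truth predicate properties} and the schema $\vdash_\CS \lalwaysPastFuture \phi \limplies \ltrue_{10}(\phi)$ given as item 6 there.

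Given these tools, the argument is a short chain of modus-ponens steps performed under $\ltrue_{10}$. First, from $\NowinE' = \ltrue_{10}(\lsofar \lwprevious \neg \winE)$ and $\PastLooking = \ltrue_{10}(\lsofar \lwprevious \neg \winE \limplies \neg \winP)$ the $\mathsf{K}$ distribution law yields $\ltrue_{10}(\neg \winP)$. Second, since $\court$ has the form $\ltrue_{10}$ of a conjunction whose first conjunct is $\neg \winP \liff \winE$, the conjunction law extracts $\ltrue_{10}(\neg \winP \liff \winE)$; applying the monotonicity rule to the tautology $(\neg \winP \liff \winE) \limplies (\neg \winP \limplies \winE)$ and then $\mathsf{K}$ distribution against $\ltrue_{10}(\neg \winP)$ gives $\ltrue_{10}(\winE)$. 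Third, from $\contract = \lalwaysPastFuture(\winE \liff \jboxO{\agreement}{\euathlus} \pay)$ the schema of item 6 gives $\ltrue_{10}(\winE \liff \jboxO{\agreement}{\euathlus} \pay)$, and combining this with $\ltrue_{10}(\winE)$ exactly as in the second step produces $\ltrue_{10}(\jboxO{\agreement}{\euathlus} \pay)$.

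Finally I would recombine the three intermediate conclusions $\ltrue_{10}(\neg \winP)$, $\ltrue_{10}(\winE)$, and $\ltrue_{10}(\jboxO{\agreement}{\euathlus} \pay)$ into $\ltrue_{10}(\neg \winP \wedge \winE \wedge \jboxO{\agreement}{\euathlus} \pay)$ by two applications of the conjunction law, which is the claimed formula. I expect no genuine difficulty in this proof: the only step that is not purely mechanical is the preliminary check that $\ltrue_{10}$ is normal, i.e. that it distributes over implication and conjunction. Once that is recorded, every subsequent line is propositional reasoning relativised to the $\ltime = 10 \limplies (\cdot)$ context, so the ``paradoxical'' part of the scenario never enters --- the whole point being that, unlike in $\Logic{SDL \oplus LTL}$, the distinct normative reasons $\agreement$, $\verdictP$, $\verdictE$ keep the conclusions from colliding.
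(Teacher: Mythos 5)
Your proof is correct and follows essentially the same route as the paper's: the same three-step decomposition (derive $\neg\winP$ from $\NowinE'$ and $\PastLooking$, then $\winE$ from $\court$, then $\jboxO{\agreement}{\euathlus}\pay$ from $\contract$), resting on the same ingredients, namely the normality of $\lalwaysPastFuture$ and items 6 and 9 of Lemma~\ref{lem: temporal truth predicate properties}. The only difference is presentational: the paper reasons under the discharged hypothesis $\ltime=10$ and applies $\alwaysPastFuturenecrule$ once at the end (using idempotence of $\lalwaysPastFuture$ to recover the original assumptions), whereas you keep every line prefixed by $\ltrue_{10}$ and push the $\mathsf{K}$ distribution through each step.
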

\begin{proof}
	The proof is as follows:
	\begin{enumerate}
		\item $\ltime = 10$ \hfill hypothesis
		\item $\lsofar \lwprevious \neg \winE$ \hfill from 1 and $\NowinE'$
		\item $\neg \winP$ \hfill from  2 and $\PastLooking$
		\item $\winE$ \hfill from 1, 3, and $\court$
		\item $\jboxO{\agreement}{\euathlus} \pay$ \hfill from 4 and $\contract$
		\item $\neg \winP \wedge \winE \wedge \jboxO{\agreement}{\euathlus} \pay$ \hfill from 3, 4, and 5 
	\end{enumerate}
	Thus, we proved that
	$$
	\contract, \court, \NowinE', \PastLooking 
	 \vdash_\CS \ltime = 10 \limplies \neg \winP \wedge \winE \wedge \jboxO{\agreement}{\euathlus} \pay.
	$$
	Then, using rules $\alwaysPastFuturenecrule$, we get 
	$$
	\lalwaysPastFuture \contract, \lalwaysPastFuture \court, \lalwaysPastFuture \NowinE', \lalwaysPastFuture \PastLooking 
	 \vdash_\CS \lalwaysPastFuture (\ltime = 10 \limplies \neg \winP \wedge \winE \wedge \jboxO{\agreement}{\euathlus} \pay).
	$$
	By Lemmas \ref{lem: basic results of linear temporal logic} and \ref{lem: temporal truth predicate properties}, we obtain
	\[
	\contract, \court, \NowinE', \PastLooking
	 \vdash_\CS  \ltrue_{10} (\neg \winP \wedge \winE \wedge \jboxO{\agreement}{\euathlus} \pay). 
	\]
	\qed
\end{proof}

%By internalizing the proof of Lemma \ref{lem: Judges verdict in the first case}, we get
%%
%\[
%\jbox{x}_\judge \contract, \jbox{y}_\judge  \court, \jbox{z}_\judge \NowinE', \jbox{u}_\judge \PastLooking  \vdash_\CS  \jbox{t}_\judge \ltrue_{10} (\neg \winP).
%\]	
%%
%for some justification term $t$. Thus, if judge knows that $\contract$, $\court$, $\NowinE$, and $\PastLooking$ are true (for some reasons), then he would know that $\neg \winP$ is true (for some reason $t$). This shows that judge can decide against Protagoras in the first case.

Now let us continue the scenario in the following way. At time 10, judge's verdict is in favor of Protagoras, and Euathlus has not paid the fee until, say, time 15. Then, at time 15, Protagoras sues Euathlus again (this is the second case), and we show that this time he is permitted to sue Euathlus.

\begin{lemma}\label{lem: Protagoras is permitted to sue in the second case}
	Let  $\CS$ be an arbitrary constant specification for $\JTO$. Then
	\[
	\ltrue_{10} (\winE ), \ltrue_{15} (\lsofar \neg \pay), \PsueE
	\vdash_\CS
	\ltrue_{15} (\jboxP{\agreement}{\protagoras} \sueP).
	\]
\end{lemma}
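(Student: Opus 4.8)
The plan is to reduce the goal, via the equivalence packaged in $\PsueE$, to two point-facts at time $15$, and then assemble those facts from the two temporal hypotheses. Since $\PsueE = \lalwaysPastFuture[\jboxP{\agreement}{\protagoras} \sueP \liff (\lsprevious(\neg \pay \lsince \winE) \wedge \neg \pay)]$, item~6 of Lemma~\ref{lem: temporal truth predicate properties} gives $\PsueE \vdash_\CS \ltrue_{15}(\jboxP{\agreement}{\protagoras} \sueP \liff (\lsprevious(\neg \pay \lsince \winE) \wedge \neg \pay))$. Because $\ltrue_{15}$ commutes with conjunction (by normality of $\lalwaysPastFuture$ and propositional reasoning) and respects provable implications (item~9 of the same lemma), it then suffices to derive the two facts $\ltrue_{15}(\neg \pay)$ and $\ltrue_{15}(\lsprevious(\neg \pay \lsince \winE))$; a modus ponens carried out inside $\ltrue_{15}$ will then deliver $\ltrue_{15}(\jboxP{\agreement}{\protagoras} \sueP)$.

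First I would mine the hypothesis $\ltrue_{15}(\lsofar \neg \pay)$ for the required instances of $\neg \pay$. Using $\lsofar \neg \pay \limplies \neg \pay$ and $\lsofar \neg \pay \limplies \lwprevious \lsofar \neg \pay$ (both from Lemma~\ref{lem: basic results of linear temporal logic}) together with items~4 and~9 of Lemma~\ref{lem: temporal truth predicate properties}, one peels off a single time step at a time to obtain $\ltrue_k(\neg \pay)$ for every $k \leq 15$, in particular for $k = 11,\ldots,15$; the case $k=15$ is already the first needed fact. For the second fact I would build the $\lsince$-formula forward from time $10$. From $\ltrue_{10}(\winE)$ and the instance $\winE \limplies (\neg \pay \lsince \winE)$ of $\stwoax$, item~9 yields $\ltrue_{10}(\neg \pay \lsince \winE)$. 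Then, iterating the step ``$\ltrue_k(\neg \pay \lsince \winE)$ and $\ltrue_{k+1}(\neg \pay)$ imply $\ltrue_{k+1}(\neg \pay \lsince \winE)$'' for $k = 10,11,12,13$ — which follows by combining item~3 of Lemma~\ref{lem: temporal truth predicate properties} (so that $\ltrue_{k+1}(\lsprevious(\neg \pay \lsince \winE))$ holds) with the instance $(\neg \pay \wedge \lsprevious(\neg \pay \lsince \winE)) \limplies (\neg \pay \lsince \winE)$ of $\stwoax$ — produces $\ltrue_{14}(\neg \pay \lsince \winE)$. One final application of item~3 converts this into $\ltrue_{15}(\lsprevious(\neg \pay \lsince \winE))$, which is the second needed fact.

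Combining the two facts as described in the first paragraph closes the derivation. The main obstacle is purely the temporal bookkeeping: propagating the ``since'' assertion across the consecutive instants $10,\ldots,14$ while keeping the $\ltrue_m$ indices aligned, for which items~3,~4 and~9 of Lemma~\ref{lem: temporal truth predicate properties} are the essential tools. If one prefers to avoid the explicit iteration, an equally valid route is semantic: in any interpreted-\Neigh system the three hypotheses force $\winE$ at time $10$ and $\neg \pay$ at all times $\leq 15$, whence $\neg \pay \lsince \winE$ holds at time $14$ and, by $\PsueE$, $\jboxP{\agreement}{\protagoras} \sueP$ holds at time $15$; the conclusion $\vdash_\CS$ then follows from the completeness theorem (Theorem~\ref{thm:Weak Completeness-interpreted-neighborhood systems}), exactly as in the consistency lemmas above.
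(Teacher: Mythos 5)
Your proof is correct and follows essentially the same route as the paper's: both propagate the $\lsince$-assertion forward one instant at a time via \stwoax{} and items 3, 4 and 9 of Lemma~\ref{lem: temporal truth predicate properties}, while peeling $\neg\pay$ out of $\ltrue_{15}(\lsofar\neg\pay)$ at each intermediate instant, and finish by discharging the $\ltime=15$ hypothesis with \alwaysPastFuturenecrule{} (or, equivalently, by working inside $\ltrue_{15}$ throughout). The only divergence is cosmetic: you seed the induction at time $10$ with $\neg\pay\lsince\winE$ directly, whereas the paper starts at time $11$ with $\neg\pay\lsince\lwprevious\winE$ --- your bookkeeping actually matches the literal shape of $\PsueE$ more closely --- and your fallback semantic argument via Theorem~\ref{thm:Weak Completeness-interpreted-neighborhood systems} is also legitimate since the hypothesis set is finite.
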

\begin{proof}
	The proof is as follows:
	\begin{enumerate}
		\item $\ltrue_{10} (\winE )$ \hfill hypothesis
		\item $\ltrue_{15} (\lsofar \neg \pay)$ \hfill hypothesis
		\item $\PsueE$ \hfill hypothesis
		\item $\ltime = 15$ \hfill hypothesis
		\item $\ltrue_{11} (\lwprevious \winE )$ \hfill from  1 and Lemma \ref{lem: temporal truth predicate properties}
		\item $\ltime = 11 \limplies \lwprevious \winE$ \hfill from 5
		\item $\ltime = 11 \limplies (\neg \pay \lsince \lwprevious \winE)$ \hfill from 6 and $\stwoax$
		\item $\ltime = 12 \limplies \lsprevious (\neg \pay \lsince \lwprevious \winE)$ \hfill from 7 and Lemma \ref{lem: temporal truth predicate properties}
		\item $\ltrue_{12} (\lsofar \neg \pay)$ \hfill from 2 and Lemma \ref{lem: temporal truth predicate properties}
		\item $\ltime = 12 \limplies \neg \pay$ \hfill from 9
		\item $\ltime = 12 \limplies \neg \pay \wedge \lsprevious (\neg \pay \lsince \lwprevious \winE)$ \hfill from 8 and 10 
		\item $\ltime = 12 \limplies  \neg \pay \lsince \lwprevious \winE$ \hfill from 11 and $\stwoax$
		\item $\ltime = 13 \limplies  \lsprevious (\neg \pay \lsince \lwprevious \winE)$ \hfill from 12 and Lemma \ref{lem: temporal truth predicate properties}
		\item $\ltime = 13 \limplies \neg \pay$ \hfill from 2 and Lemma \ref{lem: temporal truth predicate properties} (similar to step 10)
		\item $\ltime = 13 \limplies \neg \pay \wedge \lsprevious (\neg \pay \lsince \lwprevious \winE)$ \hfill from 13 and 14 
		\item $\ltime = 13 \limplies  \neg \pay \lsince \lwprevious \winE$ \hfill from 15 and $\stwoax$
		\item $\ltime = 14 \limplies  \lsprevious (\neg \pay \lsince \lwprevious \winE)$ \hfill from 16 and Lemma \ref{lem: temporal truth predicate properties}
		\item $\ltime = 14 \limplies \neg \pay$ \hfill from 2 and Lemma \ref{lem: temporal truth predicate properties} (similar to step 10)
		\item $\ltime = 14 \limplies \neg \pay \wedge \lsprevious (\neg \pay \lsince \lwprevious \winE)$ \hfill from 17 and 18
		\item $\ltime = 14 \limplies  (\neg \pay \lsince \lwprevious \winE)$ \hfill from 19 and $\stwoax$
		\item  $\ltime = 15 \limplies \lsprevious (\neg \pay \lsince \lwprevious \winE)$ \hfill from 20 and Lemma \ref{lem: temporal truth predicate properties}
		\item $\ltime = 15 \limplies \neg \pay$ \hfill from 2 and Lemma \ref{lem: temporal truth predicate properties} (similar to step 10)
		\item $\ltime = 15 \limplies \neg \pay \wedge \lsprevious (\neg \pay \lsince \lwprevious \winE)$ \hfill from 21 and 22
		\item  $\ltime = 15 \limplies  \jboxP{\agreement}{\protagoras} \sueP$ \hfill from 3 and 23
	\end{enumerate}
	Thus, we prove that
	$$
	\ltrue_{10} (\winE ), \ltrue_{15} (\lsofar \neg \pay), \PsueE
	 \vdash_\CS \ltime = 15 \limplies \jboxP{\agreement}{\protagoras} \sueP.
	$$
	Then, using rule $\alwaysPastFuturenecrule$, we get 
	$$
	\ltrue_{10} (\winE ), \ltrue_{15} (\lsofar \neg \pay), \PsueE
	\vdash_\CS
	\ltrue_{15} (\jboxP{\agreement}{\protagoras} \sueP).
	$$
	\qed
\end{proof}

Let the atomic proposition $\winPsecond$ denote the sentence ``Protagoras wins the second case (or judges award the second case to Protagoras)." Here, we assume that at the time of the second case's pronouncement if Euathlus has won already his first case and since then he has not paid yet, then judge can award the second case to Protagoras. This assumption can be formalized as follows;
\[
\ltrue_{15} [\lsprevious (\neg \pay \lsince \lwprevious \winE) \wedge \neg \pay \limplies \winPsecond].
\]
Next we show that at time 15 (the time of the second case pronouncement) judge awards the second case to Protagoras. 

\begin{theorem}[Judge's verdict in the second case]\label{thm: Judges verdict in the second case}
	Let  $\CS$ be an arbitrary constant specification for $\JTO$. Then
	 \begin{eqnarray*}
	 	\ltrue_{10} (\winE ),& \ltrue_{15} (\lsofar \neg \pay), \PsueE, \ltrue_{15} [\lsprevious (\neg \pay \lsince \lwprevious \winE) \wedge \neg \pay \limplies \winPsecond] \\
	 	 &\vdash_\CS \ltrue_{15} (\winPsecond)
	 \end{eqnarray*}
%	 $
%	 \ltrue_{10} (\winE ), \ltrue_{15} (\lsofar \neg \pay), (\neg \pay \lsince \lwprevious \winE) \limplies \ltrue_{15} (\winPsecond) \vdash_\CS \ltrue_{15} (\winPsecond)
%	 $
%	\[
%	 (\lonceStrict \winE  \wedge  \neg \pay \lsince \winE), (\lonceStrict \winE  \wedge  \neg \pay \lsince \winE) \limplies \winPsecond
%	\vdash \winPsecond
%	\]	
\end{theorem}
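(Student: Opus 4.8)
The plan is to reduce the goal to a single application of modus ponens \emph{at time} $15$: the fourth hypothesis is a conditional guarded by the truth predicate at $15$, and its antecedent is already obtainable from the previous lemma, so once I have the antecedent at $15$ I simply discharge the conditional and lift the result back through the truth predicate. First I would extract the antecedent. Inspecting the proof of Lemma~\ref{lem: Protagoras is permitted to sue in the second case}, steps 1--23 there use only the two premises $\ltrue_{10}(\winE)$ and $\ltrue_{15}(\lsofar \neg \pay)$ (the permission premise $\PsueE$ enters only at the final step), and they establish
\[
\ltrue_{10}(\winE), \; \ltrue_{15}(\lsofar \neg \pay) \vdash_\CS \ltime = 15 \limplies (\neg \pay \wedge \lsprevious(\neg \pay \lsince \lwprevious \winE)).
\]
By commutativity of $\wedge$ the consequent is exactly the antecedent of the conditional occurring inside the fourth hypothesis, so I would reuse that derivation verbatim rather than redo the time-walk.

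Second, I would unpack the conditional hypothesis at time $15$. By Lemma~\ref{lem: temporal truth predicate properties}(1), $(\ltrue_{15}(\psi) \wedge \ltime = 15) \limplies \psi$, so taking $\psi$ to be $\lsprevious(\neg \pay \lsince \lwprevious \winE) \wedge \neg \pay \limplies \winPsecond$ yields
\[
\ltrue_{15}[\lsprevious(\neg \pay \lsince \lwprevious \winE) \wedge \neg \pay \limplies \winPsecond] \vdash_\CS \ltime = 15 \limplies (\lsprevious(\neg \pay \lsince \lwprevious \winE) \wedge \neg \pay \limplies \winPsecond).
\]
Propositional reasoning under the common hypothesis $\ltime = 15$ then chains the two displayed implications into $\ltime = 15 \limplies \winPsecond$, using all the given premises (the hypothesis $\PsueE$ is not needed here and is merely carried along). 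Finally I would lift this back to the truth predicate: applying the admissible rule $\alwaysPastFuturenecrule$ gives $\lalwaysPastFuture(\ltime = 15 \limplies \winPsecond)$, which is by definition $\ltrue_{15}(\winPsecond)$; and since each premise $\theta$ is either of the form $\ltrue_m(\cdot)$ or equals $\PsueE = \lalwaysPastFuture(\cdots)$, Lemma~\ref{lem: temporal truth predicate properties}(7) together with Lemma~\ref{lem: basic results of linear temporal logic}(6) give $\lalwaysPastFuture \theta \liff \theta$, so the premise set is unchanged by the rule. This delivers the claimed $\ltrue_{15}(\winPsecond)$.

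The only genuinely substantive ingredient is the derivation of the antecedent in the first step, but that is precisely the inductive propagation from time $11$ up to time $15$ (via repeated use of axiom $\stwoax$ and Lemma~\ref{lem: temporal truth predicate properties}) already carried out in the proof of Lemma~\ref{lem: Protagoras is permitted to sue in the second case}; everything after it is routine propositional bookkeeping with the temporal truth predicate. Hence I expect no real obstacle, beyond being careful that the ``since'' formula appearing in that earlier derivation is written with $\lwprevious \winE$, matching exactly the form required by the fourth hypothesis.
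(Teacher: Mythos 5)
Your proposal is correct and follows essentially the same route as the paper, whose entire proof is the one-liner ``Follows from Lemma~\ref{lem: Protagoras is permitted to sue in the second case}'': the real content is the time-walk derivation of $\ltime = 15 \limplies \neg\pay \wedge \lsprevious(\neg\pay \lsince \lwprevious\winE)$ carried out in steps 1--23 of that lemma's proof, after which one discharges the conditional in the fourth hypothesis and lifts the result through $\ltrue_{15}$ exactly as you describe. Your observation that one should reuse the intermediate step~23 (rather than the lemma's final statement, whose antecedent via $\PsueE$ carries $\winE$ instead of $\lwprevious\winE$ inside the $\lsince$) is the careful reading of what ``follows from'' must mean here.
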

\begin{proof}
	Follows from Lemma \ref{lem: Protagoras is permitted to sue in the second case}. \qed
\end{proof}

Therefore, one solution to this form of the paradox is that, provided that (*) holds, if Euathlus defends himself, he should have won the first case, since he hasn't yet won his first case (as stated in Theorem \ref{thm: Judges verdict in the first case}). Protagoras could have sued a second time and won (as stated in Theorem \ref{thm: Judges verdict in the second case}). \footnote{Euathlus might then sue Protagoras for malicious prosecution and ask compensation. }
By bringing the first suit, even if Protagoras is certain to lose it,
he guarantees his victory in the second suit.

As mentioned before, our analysis of the paradox is compatible with Leibniz' viewpoint. Leibniz writes:

\begin{quote}
	... under the first arrangement Protagoras loses, under the last he wins. For
	when Protagoras demands payment from his student before the day on
	which it is owed and can be demanded and before the condition has been
	fulfilled (the condition of his payment is this: victory in one’s first case),
	without doubt he will be seen to have made his demand too soon. Therefore,
	in the first instance, the case is lost in the immediate circumstance,
	and by that very fact the condition of payment will be fulfilled, because
	Euathlus has won his first case. (quoted from \cite[page 8]{Sobel1987}.)
\end{quote}

Therefor, we accept the solution given by Leibniz (\cite{Jankowski2015}), Lenzen \cite{Lenzen1977}, Aqvist \cite{Aqvist1995}, Smullyan \cite{Smullyan1978}, and others, that Euathlus wins the first case (on the ground that Protagoras' demand for the payment was premature), and that Protagoras can win the second case (provided Euathlus does not pay the fee after the first case and Protagoras sue him again). So the Protagoras versus Euathlus case is not a perplex one, and the appearance of a contradiction in the arguments of Protagoras and Euathlus is an illusion promoted by a mis-usage in the grounds of obligations.

\bibliography{library}

\end{document}